\newcommand{\sv}[1]{}%
 \newcommand{\lv}[1]{#1}%
\newcommand{\appendixText}{}
 \newcommand{\toappendix}[1]{#1}%
\definecolor{darkblue}{rgb}{0,0,0.45}
\definecolor{darkred}{rgb}{0.6,0,0}
\definecolor{darkgreen}{rgb}{0.13,0.5,0}
\newcommand{\Oh}{\mathcal{O}}
\newcommand{\NP}{\textsf{NP}}
\newcommand{\N}{\mathbb{N}}
\newcommand{\cV}{\mathcal{V}}
\newcommand{\cE}{\mathcal{E}}
\newcommand{\cG}{\mathcal{G}}
\newcommand{\cF}{\mathcal{F}}
\newcommand{\cS}{\mathcal{S}}
\newcommand{\col}[1]{#1-\textsc{Coloring}\xspace}
\newcommand{\lcol}[1]{\textsc{List} #1-\textsc{Coloring}\xspace}
\newcommand{\lhom}[1]{\textsc{LHom}(#1)\xspace}
\newcommand{\lshom}[1]{\textsc{LSHom}(#1)\xspace}
\newcommand{\llshom}[1]{\textsc{LLSHom}(#1)\xspace}
\newcommand{\nae}{\textsc{NAE 3-Sat}\xspace}
\newcommand{\cHpoly}{\mathcal{H}_{\mathrm{poly}}}
\newcommand{\vrb}{\textit{Vrb}}
\newcommand{\cls}{\textit{Cls}}
\newcommand{\yes}{\texttt{yes}}
\renewcommand{\tilde}{\widetilde}
\newcommand{\tos}{\xrightarrow{s}}
\renewcommand{\cref}{\Cref}
\newtheorem{claimm}{Claim}[section]
\numberwithin{claimm}{section}
\crefname{claimm}{Claim}{Claims}
\newtheorem{lemma}{Lemma}[section]
\newtheorem{theorem}[lemma]{Theorem}
\newtheorem{corollary}[lemma]{Corollary}
\newtheorem{claimm}[lemma]{Claim}
\newtheorem{definition}[lemma]{Definition}
\newtheorem*{lemma*}{Lemma}
\newtheorem*{theorem*}{Theorem}
\newtheorem*{corollary*}{Corollary}
\title{List Locally Surjective Homomorphisms// in Hereditary Graph Classes}
\author{Pavel Dvo\v{r}\'{a}k\inst{1}\lv{\thanks{Koblich's pile of gold}}
\and
Monika Krawczyk\inst{2}
\and
Tom\'{a}\v{s} Masa\v{r}\'ik\inst{3}\lv{\thanks{Received funding from the European Research Council (ERC) under the European Union’s Horizon 2020 research and innovation programme Grant Agreement 948057.}}
\and\\
Jana Novotn\'a\inst{3,4}\lv{\thanks{Supported by SVV-2020–260578 and GAUK 384321 of Charles University and by funding from the European Research Council (ERC) under the European Union’s Horizon 2020 research and innovation programme Grant Agreement 71470.}}
\and
Pawe{\l} Rz{\k{a}}\.{z}ewski\inst{2,3}\lv{\thanks{Supported by Polish National Science Centre grant no. 2018/31/D/ST6/00062.}}
\and
Aneta \.Zuk\inst{2}
}
\institute{
Department of Computer Science, University of Bristol, UK\newline
\email{koblich@iuuk.mff.cuni.cz}
\and
Faculty of Mathematics and Information Science,\\Warsaw University of Technology, 
Warsaw, Poland\\
\email{p.rzazewski@mini.pw.edu.pl}
\and 
Institute of Informatics, Faculty of Mathematics, Informatics and Mechanics, University of Warsaw, Poland
\email{masarik@mimuw.edu.pl}
\and
Department of Applied Mathematics, Faculty of Mathematics and Physics, Charles University, Prague, Czech Republic
\email{janca@kam.mff.cuni.cz}
}
  \title{List Locally Surjective Homomorphisms in Hereditary Graph Classes\thanks{%
  P.D.~is supported by EPSRC New Investigator Award EP/V010611/1.
T.M.~received funding from the European Research Council (ERC) under the European Union’s Horizon 2020 research and innovation programme Grant Agreement 948057.
J.N.~was supported by SVV-2020–260578 and GAUK 384321 of Charles University.
P.Rz.~was supported by Polish National Science Centre grant no.~2018/31/D/ST6/00062.
    }
}
\author[1]{Pavel Dvořák}
\author[2]{Monika Krawczyk}
\author[3]{Tomáš Masařík}
\author[3,4]{\\Jana Novotná}
\author[2,3]{Paweł Rzążewski}
\author[2]{Aneta Żuk}
\affil[1]{Department of Computer Science, University of Bristol, UK}
\affil[ ]{\texttt{koblich@iuuk.mff.cuni.cz}}
\affil[2]{Faculty of Mathematics and Information Science, Warsaw University of Technology, Warsaw, Poland}
\affil[ ]{\texttt{p.rzazewski@mini.pw.edu.pl}}
\affil[3]{Institute of Informatics, Faculty of Mathematics, Informatics and Mechanics, University of Warsaw, Poland}
\affil[ ]{\texttt{masarik@mimuw.edu.pl}}
\affil[4]{Department of Applied Mathematics, Faculty of Mathematics and Physics, Charles University, Prague, Czech Republic}
\affil[ ]{\texttt{janca@kam.mff.cuni.cz}}
\date{}
\begin{document}

\maketitle
\lv{%
\begin{textblock}{20}(0, 14.0)
\includegraphics[width=40px]{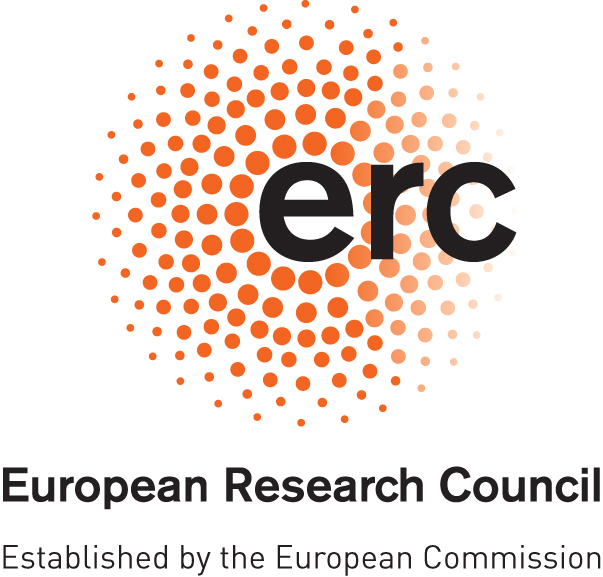}%
\end{textblock}
\begin{textblock}{20}(0, 14.9)
\includegraphics[width=40px]{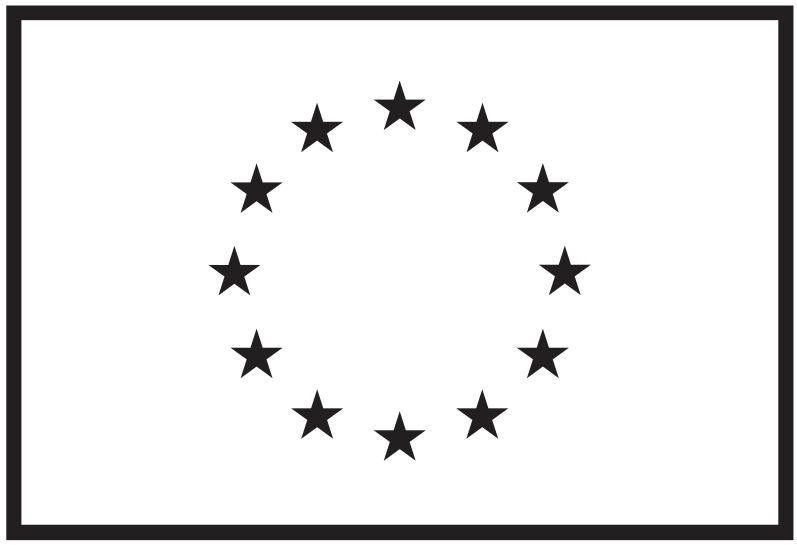}%
\end{textblock}
}

\begin{abstract}
A \emph{locally surjective homomorphism} from a graph $G$ to a graph $H$ is an edge-preserving mapping from $V(G)$ to $V(H)$ that is surjective in the neighborhood of each vertex in $G$.
In the \emph{list locally surjective homomorphism} problem, denoted by \textsc{LLSHom}($H$), the graph $H$ is fixed and the instance consists of a graph $G$ whose every vertex is equipped with a subset of $V(H)$, called list.
We ask for the existence of a locally surjective homomorphism from $G$ to $H$, where every vertex of $G$ is mapped to a vertex from its list.
In this paper, we study the complexity of the \textsc{LLSHom}($H$) problem in $F$-free graphs, i.e., graphs that exclude a fixed graph $F$ as an induced subgraph.
We aim to understand for which pairs $(H,F)$ the problem can be solved in subexponential time.

We show that for all graphs $H$, for which the problem is \textsf{NP}-hard in general graphs, it cannot be solved in subexponential time in $F$-free graphs for $F$ being a bounded-degree forest, unless the ETH fails.
The initial study reveals that a natural subfamily of bounded-degree forests $F$, that might lead to some tractability results, is the family $\mathcal{S}$ consisting of forests whose every component has at most three leaves.
In this case, we exhibit the following dichotomy theorem: besides the cases that are polynomial-time solvable in general graphs, the graphs $H \in \{P_3,C_4\}$ are the only connected ones that allow for a subexponential-time algorithm in $F$-free graphs for every $F \in \mathcal{S}$ (unless the ETH fails).

\end{abstract}

\section{Introduction}
Graph coloring is arguably one of the best-studied problems in algorithmic graph theory.
It is well-known that \col{$k$} is polynomial-time solvable for $k \leq 2$ and \NP-hard for every $k \geq 3$~\cite{DBLP:conf/coco/Karp72}.
Furthermore, assuming the Exponential-Time Hypothesis (ETH)~\cite{ImpagliazzoPaturi,DBLP:journals/jcss/ImpagliazzoPZ01}, the hard cases do not even admit algorithms working in subexponential time.

\paragraph*{Coloring $F$-free graphs.}
A very natural direction of research is to investigate what restrictions put on the family of input make the problem
more tractable than for general graphs. 
In recent years, a very active topic has been to study the complexity of \col{$k$} and related problems in graphs defined by one or more forbidden induced subgraphs.
For a family $\cF$ of graphs, we say that a graph $G$ is $\cF$-free if $G$ does not contain any graph from $\cF$ as an induced subgraph.
If $\cF$ consists of a single graph $F$, then we say $F$-free instead of $\{F\}$-free.
Note that the class of  $\cF$-free graphs is \emph{hereditary} i.e., closed under vertex deletion.
On the other hand, every hereditary class of graphs can be equivalently defined as $\cF$-free graphs for some unique minimal (possibly infinite) family $\cF$ of graphs.

It is well-known that for every $k \geq 3$, the \col{$k$} problem is \NP-hard in $F$-free graphs, unless $F$ is a \emph{linear forest}, i.e., every connected component of $F$ is a path.
Indeed, for every constant $g$, \col{$k$} is \NP-hard in graphs of girth (i.e., the length of a shortest cycle) at least $g$~\cite{DBLP:journals/cpc/Emden-WeinertHK98}. Setting $g = |V(F)|+1$, we immediately obtain hardness for every $F$ that is not a forest. On the other hand, \col{$k$} is \NP-hard in line graphs which are claw-free~\cite{DBLP:journals/siamcomp/Holyer81a,DBLP:journals/jal/LevenG83}. The only forests that are claw-free are linear forests.

The complexity of \col{$k$} in $P_t$-free graphs, where $P_t$ is the path in $t$ vertices, has recently attracted a lot of attention.
For $t =5$, the problem is polynomial-time solvable for every constant $k$~\cite{DBLP:journals/algorithmica/HoangKLSS10}.
If $k \geq 5$, then the problem is \NP-hard already in $P_6$-free graphs~\cite{DBLP:journals/ejc/Huang16}.
The case $k=4$ is also fully understood: it is polynomial-time solvable for $t \leq 6$~\cite{DBLP:conf/soda/SpirklCZ19} and \NP-hard for $t \geq 7$~\cite{DBLP:journals/ejc/Huang16}.
The case of $k=3$ is much more elusive. We know a polynomial-time algorithm for $P_7$-free graphs~\cite{DBLP:journals/combinatorica/BonomoCMSSZ18}. However, for $t \geq 8$,
we know neither polynomial-time algorithm nor any hardness result.
Some positive results are also known for the case that $F$ is a disconnected linear forest~\cite{DBLP:journals/algorithmica/KlimosovaMMNPS20,DBLP:journals/algorithmica/ChudnovskyHSZ21,DBLP:journals/corr/Golovach0PS14}.

Let us point out almost all mentioned algorithmic results also hold for the more general \emph{list} variant of the problem, 
where each vertex is given a list of admissible colors. The notable exception is \lcol{4}, which is \NP-hard already in $P_6$-free graphs~\cite{DBLP:journals/iandc/GolovachPS14}. Furthermore, all hardness results also imply the nonexistence of subexponential-time algorithms (assuming the ETH).

Some more general positive results can be obtained if we relax our notion of tractability.
As observed by Groenland et al.~\cite{groenland2019h}, \lcol{3} can be solved in subexponential time in $P_t$-free graphs, for every fixed $t$.
This was recently improved by Pilipczuk, Pilipczuk, and Rzążewski~\cite{DBLP:conf/sosa/PilipczukPR21} who showed a quasipolynomial-time algorithm for this problem.
Note that this is strong evidence that the problem is not \NP-hard.

\paragraph*{Graph homomorphisms.} Graph colorings can be seen as a special case of \emph{graph homomorphisms}.
A homomorphism from a graph $G$ to a graph $H$ (with possible loops) is an edge-preserving mapping from $V(G)$ to $V(H)$.
Note that homomorphisms to $K_k$ are precisely proper $k$-colorings.
By the celebrated result of Hell and Ne\v{s}et\v{r}il~\cite{DBLP:journals/jct/HellN90}, determining whether an input graph $G$ admits a homomorphism to a fixed graph $H$ is polynomial-time solvable if $H$ is bipartite or has a vertex with a loop, and \NP-hard otherwise.
A list variant of the graph homomorphism problem, denoted by \lhom{$H$}, has also been considered.
It turns out that the problem can be solved in polynomial time if $H$ is a so-called \emph{bi-arc graph}, and otherwise, the problem is \NP-hard~\cite{FEDER1998236,DBLP:journals/combinatorica/FederHH99,DBLP:journals/jgt/FederHH03}.

The complexity of variants of the graph homomorphism problem in hereditary graph classes was also studied.
For example, Chudnovsky et al.~\cite{DBLP:conf/esa/ChudnovskyHRSZ19} showed that \lhom{$C_k$} for $k \in \{5,7\} \cup [9,\infty)$ is polynomial-time solvable in $P_9$-free graphs. On the negative side, they showed that for every $k \geq 5$ the problem cannot be solved in $F$-free graphs,
unless every component of $F \in \cS$, where $\cS$ consists of graphs whose every connected component is a path or a tree with three leaves (called a \emph{subdivided claw}).
This negative result was later extended by Piecyk and Rzążewski~\cite{DBLP:conf/stacs/PiecykR21} who showed that if $H$ is not a bi-arc graph (i.e., \lhom{$H$} is \NP-hard in general graphs), then \lhom{$H$} is \NP-hard and cannot be solved in subexponential time (assuming the ETH) in $F$-free graphs,
unless $F \in \cS$.

The case of forbidden path or subdivided claw was later investigated by Okrasa and Rzążewski~\cite{DBLP:conf/stacs/OkrasaR21}.
They defined a class of \emph{predacious graphs}
and showed that if $H$ is not predacious, then for every $H$, the \lhom{$H$} problem
can be solved in quasipolynomial time in $P_t$-free graphs (for every $t$). Otherwise, for every $H$, there exists $t$
for which \lhom{$H$} cannot be solved in subexponential time in $P_t$-free graphs unless the ETH fails.
They also provided some partial results for the case of forbidden subdivided claws.

The complexity of variants of the graph homomorphism problem in other hereditary graph classes has also been considered~\cite{DBLP:journals/dm/FederHH07,DBLP:journals/siamdm/ChudnovskyKPRS21,DBLP:journals/jcss/OkrasaR20}.

\paragraph*{Locally surjective graph homomorphisms.} Graph homomorphisms are a very robust notion, which can be easily extended by putting some additional restrictions on the solution. In this paper, we focus on one such variant called \emph{locally surjective homomorphisms}.
A homomorphism $h$ from $G$ to $H$ is locally surjective if it is surjective in the neighborhood of each vertex of $G$.
In other words, if $h(v)=a \in V(H)$, then for every neighbor $b$ of $a$ in $H$ (including $a$, if it has a loop) there is a neighbor $v'$ of $v$ in $G$, such that $h(v')=b$.
The study of locally surjective homomorphisms originates in social sciences, where they can be used to model some social roles (the problem is called \emph{role assignment}~\cite{EVERETT1991183}).
The problem of determining whether an input graph admits a locally surjective homomorphism to a fixed graph $H$ is denoted by \lshom{$H$}.
Fiala and Paulusma~\cite{DBLP:journals/tcs/FialaP05} provided the full complexity dichotomy for \lshom{$H$}.
For simplicity, let us consider only connected graphs $H$, and let $K_1^\circ$ be the one-vertex graph with a loop.
They showed that \lshom{$H$} is polynomial-time-solvable if $H \in \cHpoly := \{K_1,K_1^\circ,K_2\}$, and otherwise it is \NP-hard.
Again, the hardness reduction excluded also subexponential time algorithms under the ETH.

Let us point out that \lshom{$P_3$} is closely related to the well-known hypergraph 2-coloring problem~\cite{Lovasz73} (or, equivalently, \textsc{Positive NAE SAT}). In this problem, we ask whether the input hypergraph admits a 2-coloring of its vertices which makes no edge monochromatic. 
Consider a hypergraph $\bf H$ with vertices $\cV$ and hyperedges $\cE$, and let $G$ be its \emph{incidence graph},
i.e., the bipartite graph with vertex set $\cV \cup \cE$, where $v \in \cV$ is adjacent to $e \in \cE$ if and only if $v \in e$.
Note that  proper 2-colorings of  $\bf H$ are precisely locally surjective homomorphisms of $G$ to $P_3$ with consecutive vertices $1,2,3$,
where $\cV$ is mapped to $\{1,3\}$, and $\cE$ is mapped to $\{2\}$.
As shown by Camby and Schaudt~\cite{DBLP:journals/algorithmica/CambyS16}, 2-coloring of hypergraphs with $P_7$-free incidence graph is polynomial-time solvable.

The structural and computational aspects of locally surjective homomorphisms were studied by several authors~\cite{DBLP:journals/tcs/ChaplickFHPT15,DBLP:journals/csr/FialaK08,DBLP:journals/corr/abs-2201-11731}.
However, up to the best of our knowledge, no systematic study of \lshom{$H$} in hereditary graph classes has been conducted.

\paragraph*{Our contribution.}
In this paper, we consider the complexity of the \emph{list} variant of \lshom{$H$}, called \llshom{$H$}.
First, we observe that if $H \in \cHpoly$ (recall, these are the easy cases of \lshom{$H$}),
then also \llshom{$H$} can be solved in polynomial time in general graphs.

Then we focus on the complexity of the problem in $F$-free graphs.
In particular, we are interested in determining the pair $(H,F)$, for which the problem can be solved in subexponential time.
Similarly to the case of \lhom{$H$}, we split into two cases, depending whether $F \in \cS$.

In the first case, we identify two more positive cases: we show that if ${H \in \{P_3,C_4\}}$,
then the problem admits a subexponential-time algorithm for every $F \in \cS$.
The algorithm itself uses a win-win strategy: we combine branching on a high-degree vertex
with a separator theorem that can be used if the maximum degree is bounded.
A similar approach was used for various other problems~\cite{groenland2019h,DBLP:journals/algorithmica/NovotnaOPRLW21},
however, the specifics of our problem require a slightly more complicated approach.

We also show that the above cases are the only positive ones for general $F \in \cS$,
which provides the following dichotomy theorem.

\begin{theorem}\label{thm:pathfree}
Let $H \notin \cHpoly$ be a fixed connected graph.
\begin{enumerate}
\item If $H \in \{P_3,C_4\}$, then for every $F \in \cS$, the \llshom{$H$} problem can be solved in time $2^{\Oh((n \log n)^{2/3})}$ in $n$-vertex $F$-free graphs.
\item Otherwise there is $t$, such that  the \llshom{$H$} problem cannot be solved in subexponential time in $P_t$-free graphs, unless the ETH fails.
\end{enumerate}
\end{theorem}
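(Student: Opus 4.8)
The plan is to prove the two items of \cref{thm:pathfree} separately, the first by designing the promised algorithm and the second by a reduction from \nae.

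\emph{The algorithm.} Every $F\in\cS$ is an induced subgraph of a disjoint union $k\cdot S_{t,t,t}$ of copies of the subdivided claw with legs of length $t$, for suitable $k,t$ depending on $F$, so it suffices to handle $(k\cdot S_{t,t,t})$-free graphs. I would run a win--win driven by a degree threshold $D$ fixed at the end. In the bounded-degree branch ($\Delta<D$) one first deletes the closed neighbourhood of the union of a maximal family of at most $k-1$ vertex-disjoint induced copies of $S_{t,t,t}$ --- a set of size $\Oh_{k,t}(D)$, on whose images we branch --- which leaves a genuinely $S_{t,t,t}$-free graph of bounded degree; to such a graph one applies a separator theorem for bounded-degree graphs in this class, yielding balanced separators of sublinear size, branches over the at most $|V(H)|^{|S|}$ images of the separator $S$, and recurses on both sides, giving a recursion $T(n)\le 2^{\Oh(|S|)}\cdot\mathrm{poly}(n)+2\,T(2n/3)$. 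In the high-degree branch one repeatedly picks a vertex $v$ with $\deg(v)\ge D$ and branches on $h(v)\in V(H)$; the point is that $H\in\{P_3,C_4\}$ is bipartite and every vertex of $H$ has at most two neighbours, so fixing $h(v)$ confines the images of the $\ge D$ neighbours of $v$ to a two-element set while local surjectivity at $v$ forces both values to be realised. Exploiting this, together with the fact that for $P_3$ and $C_4$ the problem reduces (after fixing the bipartition class of $G$) to a constant number of coupled hypergraph $2$-colouring instances, one argues that $\Oh(1)$ levels of branching shrink the instance by $\Omega(D)$, so the high-degree phase costs $2^{\Oh(n/D)}$ up to polylogarithmic factors. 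Balancing the two branches with $D$ of order $(n/\log n)^{1/3}$ gives the claimed $2^{\Oh((n\log n)^{2/3})}$.

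\emph{The hardness.} For a connected $H\notin\cHpoly\cup\{P_3,C_4\}$ I would reduce from the sparse variant of \nae, which under the ETH has no $2^{o(n+m)}$-time algorithm on instances with $n$ variables and $m$ clauses, to \llshom{$H$} on $P_t$-free graphs, producing an instance $(G,\mathcal{L})$ with $|V(G)|=\Oh(n+m)$; a subexponential-time algorithm on $P_t$-free graphs would then refute the ETH. The construction uses a variable gadget with two ``truth'' states, a clause gadget admitting a locally surjective list homomorphism to $H$ exactly when its incident variable gadgets are not all in the same state, and lists to pin down a convenient small subgraph of $H$ as the effective target. A short case analysis on the structure of $H$ (presence of an odd cycle, of a vertex of degree $\ge 3$, of an induced $C_4$, of a long induced path or cycle, and so on) reduces the design to a constant number of core targets, each handled by an explicit gadget --- the structural feature shared by all admissible $H$ but lacking in $P_3$ and $C_4$ is precisely what the gadgets exploit. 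To make $G$ be $P_t$-free, the gadgets are kept of bounded size and are linked only through complete bipartite (more generally, complete multipartite) junctions organised in a bounded number of layers, so that any induced path crosses each junction a bounded number of times and hence is short; one then sets $t=t(H)$ accordingly.

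\emph{Main obstacles.} In the algorithm the delicate part, flagged already in the introduction, is the interaction of the two regimes: a high-degree branch introduces lists and partially assigned vertices, and one must check that the peeling step, the separator theorem, and the size accounting all survive this, so that the branches accumulated over the whole run stay within $2^{\Oh((n\log n)^{2/3})}$. In the hardness the main difficulty is uniformity over $H$: producing correct variable and clause gadgets for \emph{every} admissible $H$ while keeping $G$ both $P_t$-free and of linear size, and verifying that the construction degenerates exactly at $H\in\{P_3,C_4\}$, in agreement with the algorithmic part.
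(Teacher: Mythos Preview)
Your overall strategy matches the paper: a win--win on degree (branching on high-degree vertices versus a treewidth bound on bounded-degree $(K_3,S_{t,t,t})$-free graphs) for Part~1, and base-case reductions with complete-bipartite junctions plus a lifting to general $H$ for Part~2. The hardness sketch is close to what the paper does (the paper's concrete case list is $H'\in\{K_{1,3},P_4,K_2^{\circ\circ}\}$, with $K_3$ and $K_2^\circ$ handled via the bipartite double cover, and the lift uses $H\times H$ gadgets), so I will not dwell on it.

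There is, however, a genuine gap in your high-degree analysis. You claim that ``$\Oh(1)$ levels of branching shrink the instance by $\Omega(D)$'', hence cost $2^{\Oh(n/D)}$. This is not what happens: when you fix $h(v)$ for a vertex $v$ of degree $\ge D$, in \emph{one} of the two branches many neighbours lose a constraint, but in the \emph{other} branch the only guaranteed progress is removing $v$ itself. The paper makes this precise by introducing a potential $\mu=\sum_{x\in X}|L(x)|+\sum_{y\in Y}|\sigma(y)|$ and solving the unbalanced recursion $F(\mu)\le F(\mu-2)+F(\mu-D/3)$, which gives $\mu^{\Oh(\mu/D)}=2^{\Oh((n/D)\log n)}$, not $2^{\Oh(n/D)}$. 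With your threshold $D=(n/\log n)^{1/3}$ this would be $2^{\Oh(n^{2/3}(\log n)^{4/3})}$, overshooting the target; the paper takes $D=(n\log n)^{1/3}$ precisely so that $(n/D)\log n=(n\log n)^{2/3}$.

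A second omission is the asymmetry between the two bipartition classes. Branching is performed only on vertices in the side $X$ with list $\{1,3\}$; after that, vertices $y\in Y$ may still have arbitrarily large degree. The paper does \emph{not} get max degree $\le D$ globally: it introduces a second threshold $D'=(n\log n)^{2/3}$, deletes the small set $Y'=\{y:\deg y\ge D'\}$ (bounding $|Y'|$ via $|E(G)|\le|X|\cdot D$), applies the treewidth bound $\mathrm{tw}=\Oh(\Delta)$ to $G-Y'$ (this is the Okrasa--Rz\k{a}\.zewski theorem for $(K_3,S_{t,t,t})$-free graphs), and then puts $Y'$ into every bag. Your single-threshold account does not reach a bounded-degree graph, so the separator/treewidth step cannot be invoked as stated.
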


Then, we turn our attention to other forbidden graphs $F$.
We show that whenever the problem is \NP-hard for general graphs,
i.e., for every $H \notin \cHpoly$, and for every $g >0$ there exists $d = d(H)$,
such that \lshom{$H$} is \NP-hard in graphs of degree at most $d$ and girth at least $g$.
This implies the following lower bound.\sv{\footnote{Proofs of statements marked with ($\spadesuit$) are postponed to the appendix. The proof of \cref{thm:OtherHardness} is presented in~\cref{sec:otherhard}.}}

\lv{\begin{theorem}\label{thm:OtherHardness}}
  \sv{\begin{theorem}[$\spadesuit$]\label{thm:OtherHardness}}
For every connected $H \notin \cHpoly$, there exists $d \in \N$, such that the following holds.
For every graph $F$ that is not a forest of maximum degree at most $d$,
the \llshom{$H$} problem cannot be solved in time $2^{o(n)}$ in $n$-vertex $F$-free graphs of maximum degree $d$, unless the ETH fails.
\end{theorem}

We conclude the paper by discussing the possibilities of improving our theorems in order to fully classify the complexity of \lshom{$H$} in $F$-free graphs.

\section{Preliminaries}\label{sec:prel}
For a graph $G$ and $v \in V(G)$, by $N_G(v)$ we denote the set of neighbors of $v$ in $G$.
For a set $X \subseteq V(G)$, by $N_G[X]$ we denote $X \cup \bigcup_{v \in X} N_G(v)$.
If $G$ is clear from the context, we omit the subscripts.
By $G[X]$, where $X \subseteq V(G)$, we denote the subgraph of $G$ induced by the set $X$.

For graphs $G$ and $H$, by $G \times H$ we denote their \emph{direct product} (sometimes called categorical product or Kronecker product), i.e, the graph
 \begin{align*}
  V(G \times H) &= V(G) \times V(H), \\
  E(G \times H) &= \bigl\{\{(u_1,v_1),(u_2,v_2)\} \mid u_1u_2 \in E(G) \land v_1v_2 \in E(H) \bigr\}.
 \end{align*}

For $t,a,b,c \geq 1$, by $P_t$ we denote the $t$-vertex path, and by $S_{a,b,c}$ we denote 
the three-leaf tree with leaves at distance $a$, $b$, and $c$, respectively,
from the unique vertex of degree 3, which we denote as \emph{central}. Every such $S_{a,b,c}$ is called a \emph{subdivided claw}.
Recall that by $\cS$, we denote the family of graphs whose every connected component
is either a path or a subdivided claw.

Let $h$ be a homomorphism from $G$ to $H$.
We say that a vertex $v \in V(G)$ is \emph{happy} (in $h$) if 
$h(N_G(v)) = N_H(h(v))$. 
In other words, for every neighbor $y$ of $h(v)$, some neighbor of $v$ is colored $y$.
We say that a homomorphism $h$ is \emph{locally surjective} if every vertex is happy in $h$.
If $h$ is a locally surjective homomorphism from $G$ to $H$, we denote it by $h: G \tos H$.

For a fixed graph $H$ (with possible loops) we consider the $\llshom{H}$ problem,
whose instance is $(G,L)$ where $G$ is a graph and $L : V(G) \to 2^{V(H)}$ is a list function.
We ask if there exists a homomrphism $h : G \tos H$, such that for every $v \in V(G)$ it holds that $h(v) \in L(v)$.
If $h$ is such a homomorphism, we denote it by $h: (G,L) \tos H$.

Observe that if $G$ or $H$ is disconnected, then each component of $G$ must be mapped to some component of $H$.
Thus, the problem can be easily reduced to the case that both $G$ and $H$ are connected. We will assume this from now on.

Recall that the non-list variant of our problem, i.e., $\lshom{H}$, is polynomial time-solvable if $H \in \cHpoly:= \{K_1, K_2, K_1^\circ\}$ (where $K_1^\circ$ denotes the one-vertex graph with a loop), and \NP-hard otherwise~\cite{DBLP:journals/tcs/FialaP05}. Let us point out that exactly the same dichotomy holds for $\llshom{H}$.

\sv{\toappendix{\section{Omitted Proofs from~\cref{sec:prel}}}}

\lv{\begin{corollary}\label{cor:poly}}
\sv{\begin{corollary}[$\spadesuit$]\label{cor:poly}}
If $H \in \cHpoly$, then $\llshom{H}$ is polynomial-time solvable, and otherwise it is \NP-hard.
\end{corollary}
\toappendix{%
  \sv{%
    \begin{corollary*}[Restated \cref{cor:poly}]
If $H \in \cHpoly$, then $\llshom{H}$ is polynomial-time solvable, and otherwise, it is \NP-hard.
\end{corollary*}
  }

\begin{proof}
As $\lshom{H}$ is a restriction of $\llshom{H}$, it is sufficient to show the polynomial cases. Indeed,
 \NP-hardness of $\lshom{H}$ implies the \NP-hardness of $\llshom{H}$.

The positive instances of $\llshom{K_1}$ are edgeless graphs, where the list of every vertex is nonempty.
Similarly, the positive instance of $\llshom{K_1^\circ}$ are graphs with no isolated vertices, where the list of every vertex is nonempty.
In both cases, \yes-instances can clearly be recognized in polynomial time.

Finally, let us consider the case that $H=K_2$, denote its vertices by $1$ and $2$.
Let $(G,L)$ be an instance of $\llshom{K_2}$. First, note that if $G$ is not bipartite, then $(G,L)$ is a no-instance, and we are done.
If $G$ is disconnected, then we can process each connected component of $G$ separately.

So assume that $G$ is connected and bipartite, and its bipartition classes are $X$ and $Y$.
We observe that every homomorphism from $G$ to $K_2$ either maps $X$ to 1 and $Y$ to 2, or maps $X$ to 2 and $Y$ to 1.
It is clear that verification if any of these two functions is a locally surjective list homomorphism can be performed in polynomial time.
\end{proof}
}

\subsection{Associated Bipartite Graphs and Associated Instances}
Now let us show that in order to classify the hard cases of \llshom{$H$} it is sufficient to consider the case that $H$ is bipartite.
A similar approach was used to the \lhom{$H$}~\cite{FullComplexity,DBLP:journals/jgt/FederHH03}, but to the best of our knowledge, we are the first to observe that it also works for \llshom{$H$}.

Let $H$ be a connected bipartite graph with bipartition classes $X,Y$, and consider an instance $(G,L)$ of $\llshom{H}$, where $G$ is connected.
Note that if $G$ is not bipartite, then $(G,L)$ is clearly a no-instance.
Thus, assume that $G$ is bipartite with the bipartition classes $A,B$.
We observe that in every homomorphism ${h: G \to H}$,
either all vertices of $A$ are mapped to $X$, and all vertices of $B$ are mapped to $Y$,
or all vertices of $A$ are mapped to $Y$, and all vertices of $B$ are mapped to $X$.
Thus in order to solve $(G,L)$, we can consider these two cases separately.
More specifically, we need to solve two instances $(G,L_1)$ and $(G,L_2)$ of $\llshom{H}$, where
\[
L_1(v) = 
\begin{cases}
L(v) \cap X & \text{ if } v \in A,\\
L(v) \cap Y & \text{ if } v \in B,
\end{cases}
\qquad\qquad
L_2(v) = 
\begin{cases}
L(v) \cap Y & \text{ if } v \in A,\\
L(v) \cap X & \text{ if } v \in B.
\end{cases}
\]
This motivates the following definition, see also~\cite{FullComplexity}.
\begin{definition}
Let $H$ be a connected bipartite graph with bipartition classes $X,Y$.
We say that an instance $(G,L)$ of $\llshom{H}$ is consistent, if
\begin{enumerate}
\item $G$ is connected bipartite with the bipartition classes $A,B$,
\item $L(A) \subseteq X$ and $L(B) \subseteq Y$.
\end{enumerate}
\end{definition}

For a graph $H=(V,E)$, by $H^* := H \times K_2$ we denote its \emph{associated bipartite graph}.
In other words, the vertex set of $H^*$ is $\{v',v'' ~:~ v \in V\}$ and the edge set is $\{u'v'' ~:~ uv \in E\}$. 
We also define $V' := \{v' ~:~v \in V\}$ and $V'' := \{v'' ~:~v \in V\}$, i.e., $V', V''$ are the bipartation classes of $H^*$.

Note that if $H$ is connected and nonbipartite, then $H^*$ is connected.
If $H$ is bipartite, then $H^*$ consists of two disjoint copies of $H$.

\lv{\begin{lemma}\label{lem:associated-equivalent}}
\sv{\begin{lemma}[$\spadesuit$]\label{lem:associated-equivalent}}
Let $H$ be a fixed connected nonbipartite graph.
Let $(G,L')$ be a consistent instance of $\llshom{H^*}$.
For each $v \in V(G)$, define $L(v) := \{x : \{x',x''\} \cap L'(v) \neq \emptyset \}$.
Then $(G,L')$ is a \yes-instance of $\llshom{H^*}$ if and only if $(G,L)$ is a \yes-instance of $\llshom{H}$.
\end{lemma}
\toappendix{%
\sv{%
  \begin{lemma*}[Restated~\cref{lem:associated-equivalent}]
  Let $H$ be a fixed connected nonbipartite graph.
  Let $(G,L')$ be a consistent instance of $\llshom{H^*}$.
  For each $v \in V(G)$, define $L(v) := \{x : \{x',x''\} \cap L'(v) \neq \emptyset \}$.
  Then $(G,L')$ is a \yes-instance of $\llshom{H^*}$ if and only if $(G,L)$ is a \yes-instance of $\llshom{H}$.
  \end{lemma*}
}
\begin{proof}
First consider $h^* : (G,L') \tos H^*$.
Define $h : V(G) \to V(H)$ as follows: $h(v) = x$ if and only if $h^*(v) \in \{x',x''\}$.
Clearly, $h$ is a homomorphism from $G$ to $H$, and it respects lists $L$.

Let us show that $h$ is locally surjective.
Consider $v \in V(G)$ such that $h(v) = x$ and some $y \in N_H(x)$.
Since $h(v) = x$, we know that $h^*(v) \in \{x',x''\}$ (the actual value depends on the bipartition class where $v$ belongs).
By symmetry, assume that $h^*(v) = x'$. Since $h^*$ is locally surjective and $x'y'' \in E(H^*)$, there is $u \in N_G(v)$, such that $h^*(u) = y''$. Then, $h(u) = y$.

Now, consider $h : (G,L) \tos H$.
Let the bipartition classes of $G$ be $A,B$, such that $L'(A) \subseteq V'$ and $L'(B) \subseteq V''$ (this holds since $(G,L')$ is consistent).
We define $h^* : V(G) \to V(H^*)$ as follows. 
Consider $v \in V(G)$ and let $h(v) = x$.
If $v \in A$, then $h^*(v) = x'$ and if $v \in B$, then $h^*(v) = x''$.
Again, it is straightforward to verify that $h^*$ is a homomorphism from $G$ to $H^*$, and it respects lists $L'$, since $(G,L')$ is consistent.

Now, let us argue that $h^*$ is locally surjective.
By symmetry, consider $v \in A$, such that $h(v) = x$. 
Then, $h^*(v) = x'$. 
Let $y'' \in N_H(x')$.
Since $h$ is locally surjective, there is $u \in N_G(v) \subseteq B$, such that $h(u) = y$.
Thus, $h^*(u) = y''$.
\end{proof}
}

\lv{\begin{corollary}\label{cor:bipartitealgotransfers}}
  \sv{\begin{corollary}[$\spadesuit$]\label{cor:bipartitealgotransfers}}
Let $\cG$ be a class of graphs and let $H$ be a fixed connected nonbipartite graph.
Suppose there is an algorithm $A$ that solves every  bipartite instance $(G,L)$ of $\llshom{H}$, such that $G \in \cG$, in time $f(|V(G)|)$.
Then there is an algorithm that solves every instance $(G,L')$ of $\llshom{H^*}$, where $G \in \cG$, in time $f(|V(G)|) \cdot |V(G)|^{\Oh(1)}$.
\end{corollary}
\toappendix{%
  \sv{%
    \begin{corollary*}[Restated~\cref{cor:bipartitealgotransfers}]
Let $\cG$ be a class of graphs and let $H$ be a fixed connected nonbipartite graph.
Suppose there is an algorithm $A$ that solves every  bipartite instance $(G,L)$ of $\llshom{H}$, such that $G \in \cG$, in time $f(|V(G)|)$.
Then there is an algorithm that solves every instance $(G,L')$ of $\llshom{H^*}$, where $G \in \cG$, in time $f(|V(G)|) \cdot |V(G)|^{\Oh(1)}$.
\end{corollary*}
}
\begin{proof}
Consider an arbitrary instance $(G,L')$ of $\llshom{H^*}$, where $G \in \cG$.
If $G$ is disconnected, we apply the reasoning below to each connected component independently.
Furthermore, we can assume that $G$ is bipartite, as otherwise, it is a trivial no-instance.

By the reasoning at the beginning of the section, we can reduce solving $(G,L')$ to solving two consistent instances.
Thus, let us assume that $(G,L')$ is consistent.
By \cref{lem:associated-equivalent}, in order to solve $(G,L')$, it is sufficient to solve the instance $(G,L)$ of $\llshom{H}$, where the lists $L$ are defined as in the lemma.
This can be done in time $f(|V(G)|)$ using the algorithm $A$.
As all additional computation (checking if $G$ is bipartite, determining lists $L$) can be performed in polynomial time, the claim follows.
\end{proof}
}

\noindent Note that \cref{cor:bipartitealgotransfers} immediately implies the following.

\begin{corollary}\label{cor:bipartitetogeneral}
Assume the ETH. Let $H$ be a fixed connected nonbipartite graph and let $\cG$ be a class of graphs.
If $\llshom{H^*}$ cannot be solved in time $2^{o(n)}$ for $n$-vertex instances in $\cG$,
then $\llshom{H}$ cannot be solved in time $2^{o(n)}$ for $n$-vertex instances in $\cG$.
\end{corollary}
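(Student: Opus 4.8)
The plan is to derive \cref{cor:bipartitetogeneral} directly from \cref{cor:bipartitealgotransfers} by contraposition. Suppose, for contradiction, that $\llshom{H}$ \emph{can} be solved in time $2^{o(n)}$ on $n$-vertex instances from $\cG$; say it is solved in time $f(n) = 2^{o(n)}$. In particular, this algorithm works on bipartite instances $(G,L)$ with $G \in \cG$, so it plays the role of the algorithm $A$ in the statement of \cref{cor:bipartitealgotransfers}. Applying that corollary, we obtain an algorithm for $\llshom{H^*}$ on all instances $(G,L')$ with $G \in \cG$, running in time $f(|V(G)|)\cdot |V(G)|^{\Oh(1)}$.

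The only thing to check is that this composed running time is still subexponential, i.e.\ that $f(n)\cdot n^{\Oh(1)} = 2^{o(n)}$ whenever $f(n) = 2^{o(n)}$. This is immediate: writing $f(n) = 2^{g(n)}$ with $g(n) = o(n)$, we have $f(n)\cdot n^{c} = 2^{g(n) + c\log n} = 2^{o(n)}$, since $c \log n = o(n)$ and the sum of two $o(n)$ functions is $o(n)$. Hence $\llshom{H^*}$ is solvable in time $2^{o(n)}$ on $n$-vertex instances from $\cG$, contradicting the hypothesis. Therefore $\llshom{H}$ cannot be solved in time $2^{o(n)}$ on $n$-vertex instances in $\cG$ (assuming the ETH, which is what makes the hypothesis on $H^*$ meaningful but is otherwise not used in this purely reductive argument).

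There is essentially no obstacle here; the statement is a formal consequence of the preceding corollary, and the only mild subtlety is confirming that multiplying a $2^{o(n)}$ bound by a polynomial factor keeps it within $2^{o(n)}$, which is the standard fact recorded above. One small bookkeeping point worth a sentence in the write-up: the instances of $\llshom{H^*}$ produced in the reduction have the same underlying graph $G$ (hence the same number $n$ of vertices) as the instances fed to algorithm $A$, so the parameter $n$ is consistent across the chain of reductions and no blow-up in instance size occurs.
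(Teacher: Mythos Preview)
Your argument is correct and is precisely the intended one: the paper states that \cref{cor:bipartitetogeneral} ``immediately'' follows from \cref{cor:bipartitealgotransfers}, and your contrapositive derivation (together with the routine check that $2^{o(n)}\cdot n^{\Oh(1)}=2^{o(n)}$) spells this out. There is nothing to add.
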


\section{Algorithm for $F$-free Graphs for $F \in \cS$}\label{sec:ptalgo}

\sv{\toappendix{\section{Omitted Proof in~\cref{sec:ptalgo}}}}

An important tool used in our algorithm is the following structural result about $\{S_{t,t,t},K_3\}$-free graphs.

\begin{theorem}[Okrasa, Rz\k{a}\.zewski~\cite{DBLP:journals/corr/abs-2010-03393}]\label{thm:treewidth-sttt}
Let $t \geq 2$ be an integer.
Given an $n$-vertex $(K_3, S_{t,t,t})$-free graph $G$ with maximum degree $\Delta$,
in time $2^{\Oh(t \cdot \Delta)} \cdot n$ we can find a tree decomposition of $G$ with width at most $56t\Delta$.
\end{theorem}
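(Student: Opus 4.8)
The statement packages a purely combinatorial bound (treewidth at most $56t\Delta$) together with an algorithmic guarantee, so the plan is to prove the two parts separately, with essentially all the work in the combinatorial half. The algorithmic part is routine once the bound is known: it suffices to run any constant-factor treewidth approximation that works in time $2^{\Oh(\operatorname{tw})}\cdot n$, which for $\operatorname{tw}=\Oh(t\Delta)$ is $2^{\Oh(t\Delta)}\cdot n$; if one wants the exact value $56t\Delta$ in the conclusion, the cleaner route is to make the structural argument itself constructive, reading the decomposition off a BFS tree of $G$ in linear time, the $2^{\Oh(t\Delta)}$ factor then coming only from local bookkeeping inside windows of $\Oh(t)$ consecutive layers.

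For the treewidth bound, fix a vertex $r$, run BFS from $r$, let $L_0,L_1,\dots,L_p$ be the layers and $T$ the BFS tree. Two elementary facts drive everything. First, BFS paths are geodesics, so every root-to-descendant path of $T$ --- and, more usefully, every concatenation of ``descent into one child's subtree'', the vertex $v$, and ``ascent from $v$ to $r$'' --- is an \emph{induced} path of $G$. Second, since $G$ is $K_3$-free, $N_G(v)$ is independent for every $v$; in particular the children of any vertex in $T$ form an independent set. Call $v\in L_i$ \emph{deep} if its $T$-subtree reaches level $i+t$. The goal is to show that the deep vertices form a ``thin'' skeleton --- a bounded-degree caterpillar-like structure, not a richly branching tree --- and that every vertex of $G$ fits into a bag assembled from an $\Oh(t)$-level window of this skeleton together with the shallow subtrees hanging off that window, each such bag having size $\Oh(t\Delta)$.

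The combinatorial core is the observation that one cannot root three long, mutually \emph{independent} arms near a single vertex without creating an induced $S_{t,t,t}$. Using the two facts above, I would argue: at any vertex at most two child-subtrees can be deep \emph{and pairwise non-adjacent going downward}; hence either there are few deep branches, or several deep branches that are heavily interconnected --- and in the latter case $K_3$-freeness forces the cross-edges to be spread out enough that one can still reroute three pairwise non-adjacent length-$t$ arms (contradiction), or else those branches effectively merge into a single path-like branch. Iterating this pins the skeleton to width $\Oh(t\Delta)$ per window; the shallow parts hanging off a window live in $\Oh(t)$ consecutive layers and, by degree $\Delta$, contribute only $\Oh(t\Delta)$ more vertices per bag. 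Gluing the bags along the skeleton in BFS order gives the tree decomposition, and a careful charging produces the explicit constant $56$.

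The main obstacle is precisely the interaction between distinct BFS branches: a single layer $L_i$ may have size $\Delta^{\Omega(t)}$, so the naive choice ``bag $=L_i\cup L_{i+1}$'' is hopeless; the entire difficulty is to prove that, despite large layers, the \emph{deep} part of the tree is path-like and the shallow part is locally $\Oh(t\Delta)$-bounded. Turning ``many competing, possibly heavily interconnected deep branches'' into a forbidden $K_3$ or $S_{t,t,t}$ --- while keeping every constant linear in $t$ and in $\Delta$ rather than exponential --- is the delicate quantitative step.
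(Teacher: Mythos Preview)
This theorem is not proved in the paper at all: it is quoted from Okrasa and Rz\k{a}\.zewski~\cite{DBLP:journals/corr/abs-2010-03393} and used purely as a black box in the proof of \cref{thm:algo-p3}. There is therefore no ``paper's own proof'' to compare your proposal against; the authors simply invoke the cited result and immediately plug the treewidth bound into a standard dynamic-programming scheme.

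As for the substance of your sketch, the overall BFS-layer strategy is plausible and is indeed in the spirit of the cited paper, but what you have written is a plan rather than a proof. The decisive step --- showing that the ``deep'' skeleton is path-like (so that a window of $\Oh(t)$ layers contains only $\Oh(t\Delta)$ vertices rather than $\Delta^{\Omega(t)}$) --- is exactly where all the work lies, and you explicitly flag it as the ``delicate quantitative step'' without carrying it out. In particular, your claim that ``at any vertex at most two child-subtrees can be deep and pairwise non-adjacent going downward'' does not by itself bound the number of deep children: with degree $\Delta$ there could be $\Delta-1$ deep children, and $K_3$-freeness only forbids edges among them at a single level, not cross-edges between their subtrees further down. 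Converting this into an induced $S_{t,t,t}$ requires a genuine argument (finding three long arms that are pairwise non-adjacent \emph{in $G$}, not just in $T$), and obtaining a bound linear in $t\Delta$ rather than exponential needs a careful charging that you have not supplied. If your goal is to reconstruct the proof, you should consult the cited source; for the purposes of the present paper, no proof is expected.
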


\noindent Equipped by this, we are ready to prove the following algorithmic result.

\begin{theorem}\label{thm:algo-p3}
Let $a,b,c \geq 1$ be fixed integers.
The $\llshom{P_3}$ in $n$-vertex $S_{a,b,c}$-free graphs can be solved in time $2^{\Oh( (n \log n)^{2/3} )}$.
\end{theorem}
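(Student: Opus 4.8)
The plan follows the win--win scheme of \cite{groenland2019h,DBLP:journals/algorithmica/NovotnaOPRLW21}: branch on high-degree vertices until the maximum degree is small, and in the bounded-degree regime invoke \cref{thm:treewidth-sttt} and run a dynamic program. First I would do the easy reductions. If $G$ is disconnected we treat the components separately, and if $G$ is not bipartite then there is no homomorphism $G\to P_3$ at all, so $(G,L)$ is a no-instance; we may therefore assume $G$ is connected and bipartite, with bipartition $(X,Y)$ and at least two vertices. Then $G$ is $K_3$-free, and since every $S_{a,b,c}$ with $a,b,c\le t$ is an induced subgraph of $S_{t,t,t}$, the graph $G$ and all of its induced subgraphs are $\{K_3,S_{t,t,t}\}$-free for $t:=\max\{2,a,b,c\}\ge 2$, so \cref{thm:treewidth-sttt} applies. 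Because in $P_3$ the vertex $2$ has degree $2$ while $1,3$ are leaves, for any homomorphism $h\colon G\to P_3$ both $h^{-1}(2)$ and $V(G)\setminus h^{-1}(2)$ are independent sets, i.e.\ $(h^{-1}(2),V(G)\setminus h^{-1}(2))$ is a bipartition of $G$, which by connectedness equals $(X,Y)$ or $(Y,X)$. So I would branch into the two cases according to which side is the ``$2$-side''. In the case that $Y$ is the $2$-side, the task becomes: colour $X$ using the size-$\le 2$ lists $L(\cdot)\cap\{1,3\}$ so that every $y\in Y$ has a neighbour coloured $1$ and a neighbour coloured $3$ (if some $y\in Y$ has $2\notin L(y)$ or fewer than two neighbours, this case is infeasible), and every vertex of $X$ is then automatically happy since $G$ has no isolated vertex and all neighbours of an $X$-vertex lie in $Y$ and are coloured $2$.

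Now the win--win. Fix a threshold $D=\Theta((n\log n)^{1/3})$. If $\Delta(G)\le D$, then by \cref{thm:treewidth-sttt} we compute in time $2^{\Oh(tD)}\cdot n=2^{\Oh(D)}\cdot n$ a tree decomposition of width $\Oh(tD)=\Oh(D)$, and solve $\llshom{P_3}$ by a standard dynamic program over it: per bag vertex $v$ it suffices to store $h(v)\in\{1,2,3\}$ together with which colours of $N_{P_3}(h(v))$ have already appeared among the processed neighbours of $v$, i.e.\ $\Oh(1)$ states per vertex, so the running time is $2^{\Oh(D)}\cdot n^{\Oh(1)}=2^{\Oh((n\log n)^{1/3})}\cdot n^{\Oh(1)}$. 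If instead some vertex has degree $>D$, I would branch on the colour of a carefully chosen vertex so that, using only $n^{\Oh(1)}$ branches, every branch makes ``progress'' $\Omega(D)$. The leverage is that colours have a strong local effect: colouring an $X$-vertex $v$ with $1$ instantly satisfies the ``sees $1$'' requirement of all $>D$ of its neighbours in $Y$, and after recording that these vertices still need to see $3$ we may delete $v$; a high-degree $Y$-vertex, after guessing an $\Oh(1)$-size witness of which neighbours realize its required colours $1$ and $3$, is satisfied and can be deleted together with any neighbour that it was the only neighbour of. Measuring by a potential $\Phi$ equal to the number of uncoloured $X$-vertices plus, summed over $Y$, the number of still-unseen required colours, the plan is to show every branch drops $\Phi$ by $\Omega(D)$; hence the search tree has depth $\Oh(n/D)$ and $n^{\Oh(n/D)}=2^{\Oh((n/D)\log n)}=2^{\Oh((n\log n)^{2/3})}$ leaves, each a bounded-degree instance solved in $2^{\Oh((n\log n)^{1/3})}\cdot n^{\Oh(1)}$ time, for a total of $2^{\Oh((n\log n)^{2/3})}$.

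The main obstacle is precisely this progress guarantee. In $\lhom{H}$ one colours a high-degree vertex and the lists of all its neighbours shrink, which drives the branching; here, whether a vertex is happy depends on colours arbitrarily far away, so that mechanism is absent and one must instead choose the branching vertex with care --- distinguishing high-degree $X$-vertices whose $Y$-neighbourhood is mostly still ``fresh'' (needs both colours) from those whose $Y$-neighbourhood is mostly already half-served, separately handling high-degree $Y$-vertices, colours forced by the lists, and vertices that become isolated after deletions --- and argue that in each case one can still pay $\Omega(D)$ to $\Phi$ with only polynomially many branches. This case analysis is the ``slightly more complicated approach'' mentioned in the introduction and is the technical heart of the proof.
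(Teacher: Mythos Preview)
Your overall scheme---reduce to a consistent bipartite instance, branch on high-degree vertices, and fall back on \cref{thm:treewidth-sttt} once degrees are small---is exactly the right template, and your potential $\Phi$ is precisely the measure $\mu$ the paper uses. However, two points in your branching analysis do not go through as stated.

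First, for a high-degree $x\in X$ it is not true that \emph{every} branch drops $\Phi$ by $\Omega(D)$. If all $Y$-neighbours of $x$ still need both colours, then colouring $x$ with $1$ gains $\deg x$ while colouring with $3$ also gains $\deg x$; but if almost all of them only still need colour $3$, then the branch $h(x)=1$ gains essentially nothing beyond the removal of $x$ itself. The correct statement is that by pigeonhole at least $\deg x/3$ neighbours share the same $\sigma$-set, so \emph{one} of the two branches gains $\Omega(D)$ and the other gains at least $2$ (from removing $x$). This yields the unbalanced recursion $F(\mu)\le F(\mu-2)+F\bigl(\mu-\Omega(D)\bigr)$, which still solves to $\mu^{\Oh(\mu/D)}=2^{\Oh((n\log n)^{2/3})}$, so the bound survives---but not via the ``every branch gains $\Omega(D)$'' route you sketch.

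Second, and more seriously, your proposed treatment of a high-degree $y\in Y$ does not work. Guessing two witnesses for $y$ costs $\Oh(n^2)$ branches, and afterwards you have fixed the colour of two $X$-vertices and deleted $y$: the drop in $\Phi$ is $\Oh(1)$, not $\Omega(D)$, so this step alone blows the budget up to $n^{\Oh(n)}$. The paper sidesteps this entirely: it branches \emph{only} on high-degree $X$-vertices. Once every $x\in X$ has degree $<D=(n\log n)^{1/3}$, one has $|E(G)|\le |X|\cdot D\le n^{4/3}\log^{1/3}n$, so the set $Y'\subseteq Y$ of vertices of degree $\ge (n\log n)^{2/3}$ has size at most $n^{2/3}$. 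One then removes $Y'$, applies \cref{thm:treewidth-sttt} to the remaining graph of maximum degree $<(n\log n)^{2/3}$, and finally reinserts $Y'$ into every bag; the resulting decomposition of $G$ has width $\Oh((n\log n)^{2/3})$ and the DP finishes in the stated time. In short: do not try to branch on $Y$; use an edge-counting argument to absorb the few high-degree $Y$-vertices into the tree decomposition instead.
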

\begin{proof}
Denote the consecutive vertices of $P_3$ by $1,2,3$.
Let $(G,L)$ be an instance  of $\llshom{P_3}$, where $G$ has $n$ vertices and is $S_{a,b,c}$-free.
Let $t = \max(2,a,b,c)$ and note that $G$ is $S_{t,t,t}$-free. Furthermore, if $G$ is not bipartite, then $(G,L)$ is clearly a no-instance.
Thus, we can assume that $G$ is bipartite (and, in particular, triangle-free).

Furthermore, recall that we can safely assume that the instance $(G,L)$ is consistent.
Let $X$ and $Y$ denote the bipartition classes of $G$, such that $L(X) \subseteq \{1,3\}$ and $L(Y) = \{2\}$.
Note that if $|Y|=0$ or $|X| \leq 1$, then we are clearly dealing with a no-instance. 
Thus from now on, let us assume otherwise.
In particular, it means that every vertex from $X$ is always happy. 
Consequently, our task boils down to choosing colors for vertices of $X$ to make each vertex from $Y$ happy.

Actually, we will design a recursive algorithm that solves a slightly more general problem, where we are additionally given a function $\sigma: Y \to 2^{\{1,3\}}$.
We are looking for a list homomorphism $h : (G,L) \to P_3$, such that for every $y \in Y$ it holds that $\sigma(y) \subseteq h(N_G(y))$.
Initially, we have $\sigma(y) = \{1,3\}$ for every $y \in Y$.
Thus, the returned homomorphism is indeed locally surjective.
During the course of the algorithm, we will modify the sets $\sigma$ to keep track of the colors seen by vertices in $Y$ in the part of the graph that was removed.

Each recursive call starts with a preprocessing phase. First, we exhaustively apply the following steps.
If there is some $x \in X$ with $L(x) = \emptyset$, then we immediately terminate the recursive call and report a no-instance.
If there is some $y \in Y$ with $\sigma(y) = \emptyset$, then we can safely remove $y$ from the graph.
If there is some $x \in X$ with $|L(x)|=1$, then we remove the element of $L(x)$ from the sets $\sigma$ of all neighbors of $x$, and remove $x$ from the graph.

If none of the above steps can be applied and $G$ has an isolated vertex $y \in Y$ (note that $\sigma(y) \neq \emptyset$), then we terminate and report a no-instance. The preprocessing phase can clearly be performed in polynomial time.

Finally, if the graph obtained is disconnected, we apply the following reasoning to every connected component independently.
Let us still denote the instance by $(G,L,\sigma)$, and assume that $G$ is connected.

We consider two cases. First, suppose that there is $x \in X$ with $\deg x > (n \log n)^{1/3}$.
We branch on choosing the color for $x$, i.e., we perform two recursive calls of the algorithm, in one branch setting $L(x) = \{1\}$,
and in the other $L(x) = \{3\}$. Note that at least $\deg x/3 \geq (n \log n)^{1/3} /3$ neighbors of $x$ have the same set $\sigma$.
Consequently, in at least one branch, the sets $\sigma$ will be reduced for at least $(n \log n)^{1/3} /3$ vertices during the preprocessing phase. In the other branch, we are guaranteed to have a little progress, too: the vertex $x$ will be removed from the graph.
Let us define the measure $\mu$ of the instance as $\mu := \sum_{x \in X} |L(x)| + \sum_{y \in Y} |\sigma(y)|$.
Clearly $n \leq \mu \leq 2n$.
Thus the complexity of this step is given by the following recursive inequality:
\[
F(\mu) \leq F(\mu-2) + F(\mu - (n \log n)^{1/3}/3) = \mu^{\Oh(\mu / (n \log n)^{1/3})} = 2^{\Oh( (n \log n)^{2/3})}.
\]

So now let us assume that for each $x \in X$ it holds that $\deg x < (n \log n)^{1/3}$.
Let $Y'$ be the set of vertices $y \in Y$ satisfying $\deg y \geq (n \log n)^{2/3}$.
Observe that $|E(G)| \leq |X| \cdot (n \log n)^{1/3} \leq n^{4/3} \log^{1/3}n$.
Consequently, $|Y'| \leq  |E(G)| / (n \log n)^{2/3} \\ \leq n^{2/3}$. 

Consider the graph $G' := G - Y'$. As it is an induced subgraph of $G$, it is $(K_3, S_{t,t,t})$-free.
Furthermore, the maximum degree of $G'$ is at most $(n \log n)^{2/3}$. 
Consequently by \cref{thm:treewidth-sttt},
in time $2^{\Oh( (n \log n)^{2/3})}$ we can find a tree decomposition of $G'$ with width $\Oh( (n \log n)^{2/3})$.
Let us modify this tree decomposition by adding the set $Y'$ to every bag -- this way we obtain a tree decomposition of $G$ with width $\Oh( (n \log n)^{2/3} + n^{2/3})=\Oh( (n \log n)^{2/3})$.

Using fairly standard dynamic programming on a tree decomposition, we can solve our auxiliary problem on graphs given with a tree decomposition of width $w$ in time $2^{\Oh(w)} \cdot n^{\Oh(1)}$. 
Indeed, the state of the dynamic programming is the coloring of the vertices from $X$ in the current bag and the colors seen by the vertices from $Y$ in subgraph induced the subtree rooted at the current bag (these colors are reflected in sets $\sigma$). 
Thus, the total number of states to consider is at most $3^w$ (two possibilities for a vertex from $X$ and at most three for a vertex from $Y$).

Consequently, in the second case we obtain the running time $2^{\Oh( (n \log n)^{2/3})}+2^{\Oh( (n \log n)^{2/3})} = 2^{\Oh( (n \log n)^{2/3})}$.

Summing up, the overall complexity of the algorithm is $2^{\Oh( (n \log n)^{2/3})}$. This completes the proof.
\end{proof}

Note that every $P_t$-free graph is also, e.g., $S_{t,1,1,}$-free, so \cref{thm:algo-p3} can also be applied to $P_t$-free graphs.
Now, let us show a slight generalization of \cref{thm:algo-p3} to the case that we exclude a forest of paths and subdivided claws.

\lv{\begin{theorem}\label{thm:algo-p3-forests}}
  \sv{\begin{theorem}[$\spadesuit$]\label{thm:algo-p3-forests}}
For every $F \in \cS$, the $\llshom{P_3}$ problem in $n$-vertex $F$-free graphs can be solved in time $2^{\Oh(( n \log n)^{2/3})}$.
\end{theorem}
\toappendix{%
  \sv{%
    \begin{theorem*}[Restated \cref{thm:algo-p3-forests}]
For every $F \in \cS$, the $\llshom{P_3}$ problem in $n$-vertex $F$-free graphs can be solved in time $2^{\Oh(( n \log n)^{2/3})}$.
\end{theorem*}
}

\begin{proof}[Sketch of proof]
Let $F = F_1 + F_2 + \ldots + F_p$ for some $p \geq 1$, where each $F_i$ is a subdivided claw.

We begin similarly to the algorithm from \cref{thm:algo-p3}. 
Again, we are solving an auxiliary problem with instance $(G,L,\sigma)$.
First, we check if the instance graph is bipartite, and otherwise, we reject it. 
Let the bipartition classes of $G$ be $X$ and $Y$, and let $L(X) \subseteq \{1,3\}$ and $L(Y)=\{2\}$.
Then, we perform the preprocessing phase and the branching phase; note that in these phases, we do not assume anything about the forbidden induced graph.
The recursion tree has $2^{\Oh((n \log n^{2/3})}$ leaves, each corresponds to an instance which is $(K_3,F)$-free and every vertex from $X$ has maximum degree at most $ (n \log n)^{1/3}$. Consider one such instance, for simplicity let us call it $(G,L)$.

We continue as in the proof of \cref{thm:algo-p3} by selecting the set $Y' \subseteq Y$ of vertices of degree at least $(n \log n)^{2/3}$.
Recall that $|Y'| \leq n^{2/3}$ and the graph $G' - Y'$ is of maximum degree at most $( n \log n)^{2/3}$.

Now for each $i=1,\ldots,p-1$ we perform the following steps.
Let $(G',L')$ be an instance corresponding to a leaf of the recursion tree, with the set $Y'$ removed.
We check if $G'$ contains $F_i$ as an induced subgraph, this can be done in polynomial time by the exhaustive enumeration. 
If not, then $G'$ is $(K_3,F_i)$-free and we can call the algorithm given by \cref{thm:treewidth-sttt} and continue exactly as in the proof of \cref{thm:algo-p3}.

Thus, let us suppose that there is $S \subseteq V(G')$, such that $G'[S] \simeq F_i$.
We observe that $|N[S]| = \Oh( (n \log n)^{2/3})$.
We exhaustively guess the coloring of $N[S] \cap X$, this results in $2^{\Oh( (n\log n)^{1/3})}$ branches.
In each branch, we update the sets $\sigma$ for the neighbors of colored vertices; in particular we reject if some vertex from $y \in S \cap Y$ does not see some color in $\sigma(y)$. Note that each instance is $(K_3,F_{i+1}+\ldots,+F_p)$-free.

After the last iteration, the instances corresponding to the leaves of the recursion tree are $(K_3,F_p)$-free, and thus we continue as in the proof of \cref{thm:algo-p3}, i.e., use \cref{thm:treewidth-sttt}, restore the set $Y'$, and solve the problem by dynamic programming.

The total number of leaves of the recursion tree is at most 
\[ \underbrace{2^{\Oh((n \log n^{2/3})}}_{\substack{\text{branching on}\\\text{a high-degree vertex in $X$}}} \cdot \prod_{i=1}^{p-1} \underbrace{2^{\Oh((n \log n^{2/3})}}_{\substack{\text{branching on the neighborhood}\\ \text{of an induced copy of $F_i$}}} = 2^{\Oh((n \log n^{2/3})},
\] each of which corresponds to an instance that is $(K_3,F_i)$-free for some $i \in [p]$, and thus can be solved in time $2^{\Oh((n \log n^{2/3})}$ as in \cref{thm:algo-p3}.
\end{proof}
}

Now, let us show that the algorithm from \cref{thm:algo-p3} can be used to solve $\llshom{C_4}$ in $P_t$-free graphs.
The proof of the following lemma is based on a similar argument used by Okrasa and Rz\k{a}\.zewski~\cite{DBLP:journals/jcss/OkrasaR20} in the non-list case.

\begin{lemma} \label{lem:c4top3}
Let $(G,L)$ be a consistent instance of $\llshom{C_4}$.
Then the problem can be reduced in polynomial time to solving two consistent instances of $\llshom{P_3}$.
\end{lemma}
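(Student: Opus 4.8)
The key observation is that $C_4$ has the same vertex set structure as $P_3$ in the following sense: $C_4$ is bipartite with bipartition classes of size $2$. Label the vertices of $C_4$ as $1,2,3,4$ in cyclic order, so that $\{1,3\}$ and $\{2,4\}$ are the bipartition classes. Since $(G,L)$ is a consistent instance, $G$ is bipartite with classes $A,B$, and $L(A) \subseteq \{1,3\}$, $L(B) \subseteq \{2,4\}$. A crucial feature of $C_4$ is that $N_{C_4}(1) = N_{C_4}(3) = \{2,4\}$ and $N_{C_4}(2) = N_{C_4}(4) = \{1,3\}$; in particular, a homomorphism $h: G \to C_4$ is simply a pair of $2$-colorings, one of $A$ with colors $\{1,3\}$ and one of $B$ with colors $\{2,4\}$, with no edge constraints relating them (every edge is automatically preserved because every vertex of one class of $C_4$ is adjacent to every vertex of the other). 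So the edge-preservation is free; the only real content is \emph{local surjectivity}. A vertex $a \in A$ with $h(a) \in \{1,3\}$ is happy iff $\{2,4\} \subseteq h(N_G(a))$, i.e., $a$ sees both colors of $B$; symmetrically for $b \in B$.

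So the plan is: reduce to the case where we must make every vertex happy, which is a constraint purely about "each vertex sees both colors on the other side." This is exactly the structure of $\llshom{P_3}$ if we could decouple the two sides — but in $\llshom{P_3}$ only one side carries a binary choice. The trick (mirroring Okrasa–Rz\k{a}\.zewski's non-list argument) is to handle the two sides via two separate $\llshom{P_3}$ instances. First I would guess which "parity pattern" we are in, or more precisely, fix the colour of one side and recurse. Concretely: in one $\llshom{P_3}$ instance, treat the vertices of $A$ as the "list $\{1,3\}$" class and $B$ as the "list $\{2\}$" class — identifying $2$ and $4$ of $C_4$ with the single vertex $2$ of $P_3$. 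A locally surjective homomorphism $G \tos P_3$ then guarantees each $b \in B$ sees both $1$ and $3$ among its neighbours in $A$. Symmetrically, the second $\llshom{P_3}$ instance swaps the roles: $B$ gets lists in $\{2,4\}$ (renamed $\{1,3\}$) and $A$ is collapsed to the middle vertex, ensuring each $a \in A$ sees both colours of $B$.

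The subtle point — and what I expect to be the main obstacle — is to argue that a solution to each of the two $\llshom{P_3}$ instances can be \emph{combined} into a single $\llshom{C_4}$ solution, and conversely. The forward direction (a $C_4$-solution yields solutions to both $P_3$-instances) is immediate: restrict $h$ to the relevant side and collapse. For the converse, suppose $h_1$ makes every $b \in B$ happy and $h_2$ makes every $a \in A$ happy. The worry is that $h_1$ assigns colours in $\{1,3\}$ to $A$ while $h_2$ assigns colours in $\{2,4\}$ to $B$, and these were chosen independently — but that is exactly fine, because in $C_4$ there are no edge constraints between a $\{1,3\}$-colouring of $A$ and a $\{2,4\}$-colouring of $B$. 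So define $h(v) = h_1(v)$ for $v \in A$ and $h(v) = h_2(v)$ for $v \in B$ (respecting $L$, since $L_1, L_2$ are obtained by intersecting with the corresponding parts of $L$); $h$ is a homomorphism to $C_4$, each $b \in B$ sees both $1$ and $3$ (from $h_1$), hence is happy, and each $a \in A$ sees both $2$ and $4$ (from $h_2$), hence is happy. One detail to nail down carefully: the collapsed side in each $P_3$-instance must be checked to genuinely carry no further information — e.g., in instance $1$ the vertices of $B$ are forced to colour $2$, which is happy in $P_3$ iff it sees both $1$ and $3$, and we must make sure this $\llshom{P_3}$ happiness requirement on $B$ matches the $C_4$-happiness requirement on $B$, which it does; and the $A$-side happiness in $P_3$ (every $a$ with colour in $\{1,3\}$ must see $2$) is automatically satisfied as long as $a$ has a neighbour, which holds after discarding isolated vertices. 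Assembling these checks, and verifying that the reductions (renaming colours, intersecting lists, removing trivially-bad vertices) are polynomial-time, completes the proof.
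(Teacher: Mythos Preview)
Your proposal is correct and follows essentially the same approach as the paper: label $C_4$ as $1,2,3,4$, keep the original lists on one bipartition class while collapsing the other to the middle vertex of a $P_3$, do this once for each side, and combine the two resulting $P_3$-solutions coordinatewise (using that every $\{1,3\}$-vertex is adjacent to every $\{2,4\}$-vertex in $C_4$). The paper phrases the two instances as $\llshom{C_4[1,2,3]}$ and $\llshom{C_4[4,1,2]}$ rather than renaming colours, but the construction and the correctness argument are the same as yours.
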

\begin{proof}
Let the consecutive vertices of $C_4$ be $1, 2, 3, 4$.
Let the bipartition classes of $G$ be $X,Y$, and let $L(X) \subseteq \{1,3\}$ and $L(Y) \subseteq \{2, 4\}$.

Define $L',L''$ as follows:
\begin{align*}
L'(v) = \begin{cases}
L(v) & \text{ if } v \in X\\
\{2\} & \text{ if } v \in Y,
\end{cases} & \quad \text{ and } \quad 
L''(v) = \begin{cases}
\{1\} & \text{ if } v \in X\\
L(v) & \text{ if } v \in Y.
\end{cases}
\end{align*}
We claim that $(G,L)$ is a \yes-instance of $\llshom{C_4}$ if and only if $(G,L')$ is a \yes-instance of $\llshom{C_4[1,2,3]}$ and $(G,L'')$ is a \yes-instance of $\llshom{C_4[4,1,2]}$. 
Note that both graphs $C_4[1,2,3]$ and $C_4[4,1,2]$ are induced three-vertex paths.

First, suppose that there is some $h : (G,L) \tos C_4$.
Let us define $h',h'' : V(G) \to \{1,2,3,4\}$ as follows:
\begin{align*}
h'(v) = \begin{cases}
h(v) & \text{ if } v \in X\\
2 & \text{ if } v \in Y,
\end{cases} & \quad \text{ and } \quad 
h''(v) = \begin{cases}
1 & \text{ if } v \in X\\
h(v) & \text{ if } v \in Y.
\end{cases}
\end{align*}
Let us argue that that $h': (G,L') \tos C_4[1,2,3]$ and $h'' : (G,L'') \tos C_4[4,1,2]$.
We prove only the first claim. The proof of the second one is analogous.
First, $h'$ is clearly a homomorphism. 
Furthermore, it satisfies the lists, as for $x \in X$ we have $h'(x) = h(x) \in L'(x) = L(x)$,
and for $y \in Y$ we have $h'(y) = 2 \in \{2\} = L'(y)$.
Finally, let us argue that $h'$ is locally surjective.
For each $x \in X$, there are some $y,y'$ such that $h(y)=2$ and $h(y')=4$, as otherwise $h$ is not locally surjective. 
Thus, $h'(y)=h'(y')=2$, which makes $x$ happy in $h'$.
Similarly, for each $y \in Y$, there are some $x,x'$ such that $h(x)=1$ and $h(x')=3$, as otherwise $h$ is not locally surjective. 
Thus, $h'(x)=1$ and $h'(x')=3$, which makes $y$ happy in $h'$.

Now, suppose there are  $h': (G,L') \tos C_4[1,2,3]$ and $h'' : (G,L'') \tos C_4[4,1,2]$.
We define $h : V(G) \to \{1,2,3,4\}$ as follows:
\[
h(v) = \begin{cases}
h'(v) & \text{ if } v \in X\\
h''(v) & \text{ if } v \in Y.
\end{cases}
\]
Let us argue that $h: (G,L) \tos C_4$.

First, note that $h$ is a homomorphism, as for every edge $xy \in E(G)$, where $x \in X$ and $y \in Y$,
we have $h(x) \in \{1,3\}$ and $h(y) \in \{2,4\}$.

Now, observe that $h$ respects the lists $L$.
Indeed, for $x \in X$ we have $h(x) = h'(x) \in L'(x) = L(x)$ and for $y \in Y$ we have $h(y) = h''(y) \in L''(y) = L(y)$.

Finally, let us argue that $h$ is locally surjective. 
Consider some $x \in X$; the argument for vertices in $Y$ is symmetric.
Note that $h''(x) = 1$. 
Since $h''$ is locally surjective, $x$ has two neighbors $y,y'$, such that $h''(y)=2$ and $h''(y')=4$.
Consequently, $h(y)=2$ and $h(y')=4$, which makes $x$ happy in $h$.
This completes the proof.
\end{proof}

\noindent Combining \cref{lem:c4top3} with \cref{thm:algo-p3-forests}, we immediately obtain the following.

\begin{theorem}\label{thm:algo-c4-forests}
For every $F \in \cS$, the $\llshom{C_4}$ in $n$-vertex $F$-free graphs can be solved in time $2^{\Oh(( n \log n)^{2/3})}$.
\end{theorem}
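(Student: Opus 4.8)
The plan is simply to compose \cref{lem:c4top3} with \cref{thm:algo-p3-forests}, so almost all of the work is already done and the proof amounts to bookkeeping. Fix $F \in \cS$ and let $(G,L)$ be an instance of $\llshom{C_4}$ with $G$ an $n$-vertex $F$-free graph. First I would dispose of the easy case: since $C_4$ is bipartite, if $G$ is not bipartite then $(G,L)$ is a no-instance, and this is checked in polynomial time. Otherwise, fixing a bipartition $(X,Y)$ of $G$, every homomorphism $G \to C_4$ sends $X$ into one bipartition class of $C_4$ and $Y$ into the other; so, exactly as in the discussion of consistent instances in \cref{sec:prel}, it suffices to solve the two consistent instances of $\llshom{C_4}$ obtained by intersecting each list appropriately with $\{1,3\}$ or $\{2,4\}$.

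Next I would apply \cref{lem:c4top3} to each of these (at most) two consistent instances, replacing each by two consistent instances of $\llshom{P_3}$; crucially, this reduction changes only the list function, not the graph $G$, so each of the resulting at most four $\llshom{P_3}$ instances is still an instance on the same $n$-vertex $F$-free graph. Hence \cref{thm:algo-p3-forests} solves each of them in time $2^{\Oh((n \log n)^{2/3})}$. Finally I would combine the answers: $(G,L)$ is a yes-instance iff one of the two consistent $\llshom{C_4}$ instances is, and by \cref{lem:c4top3} that holds iff both of its two associated $\llshom{P_3}$ instances are yes-instances. All of this glue is polynomial-time, and a constant number of calls to a $2^{\Oh((n \log n)^{2/3})}$-time procedure, plus polynomial overhead, runs in time $2^{\Oh((n \log n)^{2/3})}$.

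There is no genuine obstacle here, since the technical content lives entirely in \cref{lem:c4top3} and \cref{thm:algo-p3-forests}. The only point requiring a moment's care is that one must pass to consistent instances before invoking \cref{lem:c4top3}, and that the whole reduction leaves $G$ untouched --- this is what guarantees that the $F$-freeness hypothesis of \cref{thm:algo-p3-forests} is inherited by every instance produced along the way.
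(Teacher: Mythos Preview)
Your proposal is correct and matches the paper's approach exactly: the paper simply states that the theorem follows by combining \cref{lem:c4top3} with \cref{thm:algo-p3-forests}. Your additional care about first passing to consistent instances and noting that the reduction leaves $G$ (and hence $F$-freeness) untouched is precisely the bookkeeping that makes this one-line combination work.
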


\section{Hardness for $F$-free Graphs for $F \in \cS$}\label{sec:pthard}

\sv{\toappendix{\section{Omitted Parts of~\cref{sec:pthard}}\label{app:pt}}}

In this section, we will prove the hardness part of \cref{thm:pathfree}, i.e., if ${H \not \in \cHpoly \cup \{P_3,C_4\}}$ then there is $t$ such that \llshom{$H$} cannot be solved in subexponential time in $P_t$-free graphs.
It easily follows that such $H$ contains at least one of $K_2^\circ, K_2^{\circ\circ}, K_3, P_4, K_{1,3}$ as an induced subgraph, where $K^\circ_2$ and $K^{\circ\circ}_2$ are graphs consisting of an edge with a loop at one or both of its endpoint, respectively.
First, we will prove the theorem for several base cases for $H \in \{K_{1,3}, P_4, K^{\circ\circ}_2\}$.
\sv{%
  \begin{theorem}[$\spadesuit$]\label{thm:universal}
For $H\in\{K_{1,3}, P_4, K^{\circ\circ}_2\}$ the \llshom{$H$} problem cannot be solved in time $2^{o(n)}$ in $n$-vertex $P_{14}$-free graphs, unless the ETH fails.
Moreover, the problem is hard even for instances where all the lists are of size at most 2, and each vertex with a list of size exactly two has a neighbor with a list of size exactly one.
\end{theorem}
The proof follows by Theorems~\ref{thm:K13}, \ref{thm:P4}, and~\ref{thm:K2} in \cref{app:pt}.
}
Then, we will generalize the result for $H$ containing an induced subgraph $H' \in \{K_{1,3}, P_4, K^{\circ\circ}_2\}$.
The last cases when $H$ contains $K^\circ_2$ or $K_3$ as an induced subgraph will follows from the base cases and \cref{cor:bipartitetogeneral} as for such $H$ the graph $H^*$ contains $P_4$ as an induced subgraph.

\toappendix{%
\subsection{Hardness for $H = K_{1,3}$}

\begin{theorem}\label{thm:K13}
The \llshom{$K_{1,3}$} problem cannot be solved in time $2^{o(n)}$ in $n$-vertex $P_{10}$-free graphs, unless the ETH fails.
Moreover, the constructed reduction use only lists of size at most 2, and each vertex of the construction which has a list of size exactly two has a neighbor with a list of size exactly one.
\end{theorem}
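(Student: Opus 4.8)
The plan is to reduce, in polynomial time and with linear blow-up, from a problem that is already \NP-hard (hence, under the ETH, not solvable in time $2^{o(n)}$) on a structurally restricted class of inputs, chosen so that the restriction forces the produced instance of \llshom{$K_{1,3}$} to be $P_{10}$-free; a natural choice of source is \lcol{4} restricted to $P_6$-free graphs, which is \NP-hard by~\cite{DBLP:journals/iandc/GolovachPS14} and, since that reduction is linear, has no $2^{o(n)}$-time algorithm unless the ETH fails.

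First I would record the shape of the target problem. Write $c$ for the centre and $\ell_1,\ell_2,\ell_3$ for the leaves of $K_{1,3}$. As $K_{1,3}$ is bipartite we may assume the instance graph is bipartite and, after passing to a consistent instance, obtain a bipartition $V(G)=A\cup B$ with $A$ forced onto $c$ and $B$ forced onto $\{\ell_1,\ell_2,\ell_3\}$; every vertex of $B$ that has a neighbour is then automatically happy, so the task becomes: assign a leaf to each vertex of $B$ so that every $a\in A$ sees all three leaves among its neighbours. For each colour $i$ I add one global \emph{supplier} $s_i\in B$ with list $\{\ell_i\}$. An $A$-vertex adjacent to $s_i$ already sees $\ell_i$, so an $A$-vertex adjacent to $s_i$ and to two $B$-vertices of list $\{\ell_j,\ell_k\}$ encodes exactly ``these two vertices get different colours'', and adjacency to $s_i$ together with three such $B$-vertices encodes a not-all-equal constraint. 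From equality, inequality, and not-all-equal constraints one assembles, with $\Oh(1)$ auxiliary binary-list vertices, gadgets for implication and for ``forbid one Boolean pattern'', hence a constant-size \emph{selector} gadget whose legal states biject with any prescribed list $L'(v)\subseteq\{1,2,3,4\}$, and a constant-size \emph{edge} gadget joining two selectors and forbidding equal colours. The reduction creates one selector per vertex of the $P_6$-free input graph $G'$, one edge gadget per edge of $G'$, the three suppliers, and a bounded number of ``backbone'' vertices, each adjacent to a large portion of one side of the bipartition, whose only role is to keep $G$ connected and of small diameter. Correctness is routine: a list $4$-colouring of $G'$ fixes a legal state in every selector and extends to a homomorphism $(G,L)\tos K_{1,3}$, and conversely every such homomorphism restricts to a list $4$-colouring of $G'$.

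The delicate point, and the main obstacle, is to make $G$ be $P_{10}$-free. Because $G$ is bipartite it has no edges inside $A$ or inside $B$, so one cannot destroy a long induced path by adding an intra-class chord, and $P_{10}$-freeness has to be forced by the global layout. One argues by case analysis on an induced path $P$. If $P$ meets a backbone vertex or a supplier $h$, then $h$ dominates an entire class (or a large part of it), so at most two further vertices of that class can lie on $P$, whence $P$ has $\Oh(1)$ vertices. If $P$ avoids all backbone vertices and suppliers, then $P$ lies inside the union of the selector and edge gadgets; the sequence of gadgets visited by $P$ traces a walk in $G'$ which---provided the selectors have small radius and the edge gadgets are attached so that ``skipping'' a selector creates a chord---is forced to be essentially induced, hence short since $G'$ is $P_6$-free, and as the gadgets have constant size $P$ has fewer than $10$ vertices. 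Pushing this constant down to exactly $10$ is what forces the economical design of the selector and edge gadgets and of the supplier/backbone attachments, and this accounting is where essentially all of the work lies.

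Finally, the ``moreover'' clause is read off the construction: every list is $\{c\}$ (on $A$-vertices), a singleton leaf-list (on the suppliers and the Boolean-pattern pendants), or a $2$-element leaf-list (on the binary selector and auxiliary vertices); and every vertex carrying a $2$-element list lies in $B$ and, by construction, has a neighbour in $A$, whose list $\{c\}$ has size exactly one.
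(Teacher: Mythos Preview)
Your plan is coherent at a high level, but it remains a plan rather than a proof: the two decisive ingredients—the concrete selector/edge gadgets and the verification that the resulting graph is $P_{10}$-free—are entirely absent, and you say so yourself (``this accounting is where essentially all of the work lies''). There is a real quantitative obstruction to completing it as stated. A $P_6$-free source graph still contains induced $P_5$'s. Your selector must carry at least two binary $B$-vertices to encode four colours, and the edge constraint ``different $4$-colours'' is $(b_u^1,b_u^2)\neq(b_v^1,b_v^2)$, which is an OR of two binary inequalities; a single $A$-vertex adjacent to $b_u^1,b_u^2,b_v^1,b_v^2$ and a supplier only enforces NAE on those four bits, not this constraint, so the edge gadget needs several internal vertices. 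Tracing an induced $P_5$ of the source through five selectors and four edge gadgets then produces an induced path of length well beyond $9$ in the target, so at best you would obtain $P_t$-freeness for some $t>10$, i.e.\ a weaker theorem. A secondary gap is that you assume the Golovach--Paulusma--Song reduction to \lcol{4} on $P_6$-free graphs has linear size so that the ETH bound transfers; this needs to be checked.

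The paper sidesteps all of this by reducing directly from \textsc{3-Sat} with tiny explicit gadgets. Each variable is a copy of $K_{1,3}$ (centre with list $\{0\}$; two leaves $x,x'$ with list $\{1,2\}$ encoding true/false; one leaf with list $\{3\}$). Each clause is an $S_{3,3,3}$ whose centre $y$ has list $\{0\}$ and whose arms carry lists $\{1,3\},\{0\},\{2,3\}$; local surjectivity at $y$ forces some arm to start with colour $3$, which forces the attached variable to be in its satisfying state. The $P_{10}$-freeness comes from one global trick: every clause centre $y$ is made adjacent to every leaf $x,x'$ of every variable gadget, so the $y$'s together with the $x,x'$'s form a complete bipartite graph and any induced path can use at most two of each, after which only short tails into a single clause/variable gadget remain. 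The whole argument is a three-case check of a few lines and yields exactly the bound $10$.
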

\begin{proof}
Let $H=K_{1,3}$, where $0$ is the central degree three vertex and $1,2,3$ are the leaves.
We reduce from the 3-\textsc{SAT} problem. The ETH implies that an instance with $N$ variables and $M$ clauses cannot be solved in time $2^{o(N+M)}$~\cite{DBLP:books/sp/CyganFKLMPPS15}.
We first describe the reduction.
For every variable $z$, a variable gadget $\vrb(z)$ is represented by $K_{1,3}$ with the central vertex $v_0$ with $L(v_0)=\{0\}$ and the leaves $x,x',w$ with lists $\{1,2\}$, $\{1,2\}$, and $\{3\}$, respectively.

\begin{claimm}[Variable consistency]\label{cl:K13:var}
There exists exactly two homomorphisms $h_1,h_2:(\vrb(z),L) \to H$ in which vertices of $\vrb(z)$ are happy:  
\begin{enumerate}
\item[(i)] $h_1(x)=1$, $h_1(x')=2$, $h_1(v_0)=0$, $h_1(w)=3$, and \label{cl:item:true} 
\item[(ii)] $h_2(x)=2$, $h_2(x')=1$, $h_2(v_0)=0$, $h_2(w)=3$. \label{cl:item:false} 
\end{enumerate}
\end{claimm}

\noindent\textit{Proof of Claim~\ref{cl:K13:var}.}~
In order to make vertex $v_0$ happy, vertices $x$ and $x'$ have to be mapped to different vertices of $H$.
It is easy to check that all other vertices are happy.
  \hfill $\diamondsuit$
  \medskip

For every clause $c$, the clause gadget $\cls(c)$ is represented by $S_{3,3,3}$. 
Let $y$ denote the central vertex, and $s_1^i$, $s_2^i$, and $s_3^i$ denote vertices of the $i$-th branch ordered by the distance from $y$, for $i\in\{1,2,3\}$.
We set the lists as $L(y)=\{0\}$, $L(s_1^i)=\{1,3\}$, $L(s_2^i)=\{0\}$, and $L(s_3^i)=\{2,3\}$.
We construct the final graph $G$ as follows.
If variable $z$ appears as the $i$-th literal in clause $c$, then 
$s_2^i$ (of $\cls(c)$) is adjacent to $x$ or $x'$ (of $\vrb(z)$) if the literal is positive or negative, respectively.
Moreover, each central vertex $y$ of each clause is connected with each $x$ and with each $x'$ vertices of all variable gadgets.
For an overview of the described construction, see Figure~\ref{fig:K13}.
This completes the construction of the graph $G$.
The described construction gives only a constant blow-up, and therefore the ETH lower bound is preserved.
Observe that vertex $y$ is happy if and only if for at least one $i\in\{1,2,3\}$ has $h(s_1^i)=3$.

\begin{figure}[h]
\centering
\includegraphics[scale=0.8]{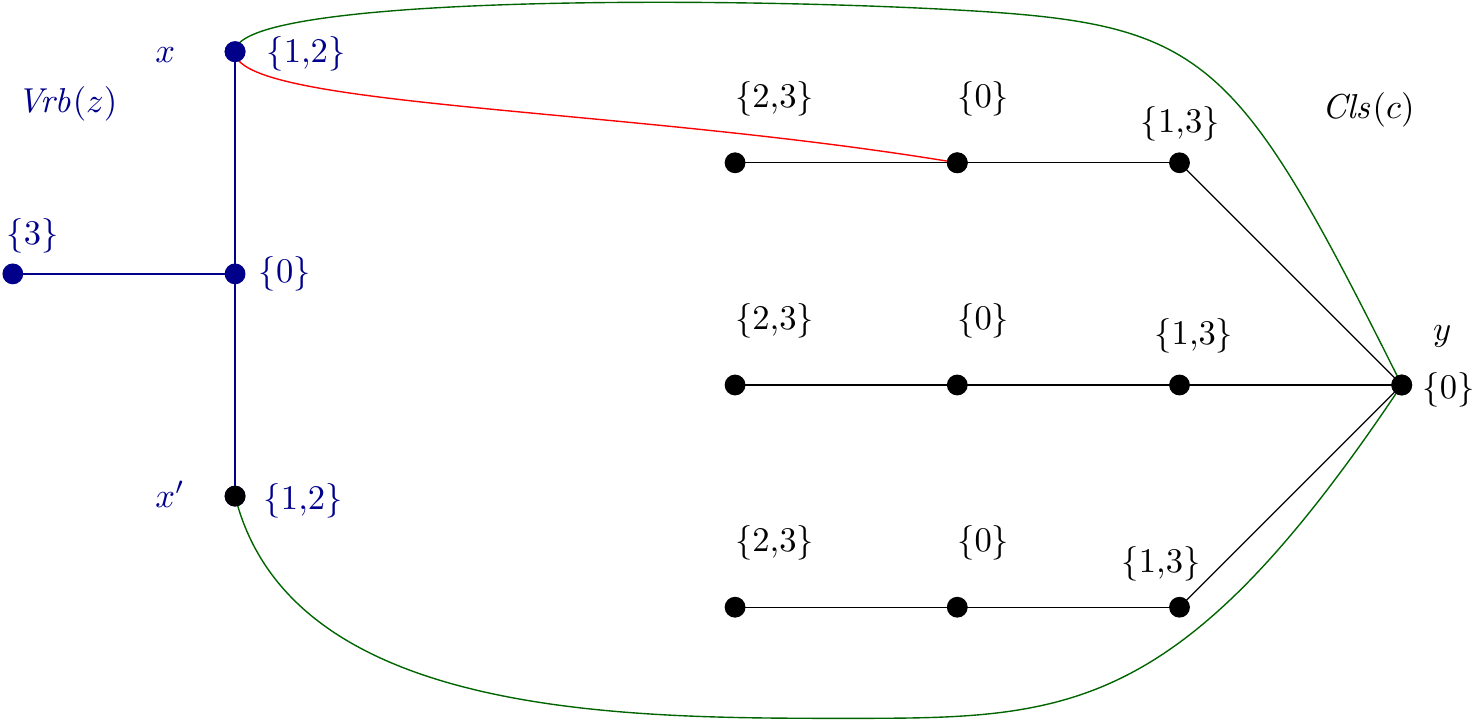} 
\caption{An example of variable gadgets connected to a clause gadget for the case $H=K_{1,3}$.
  The variable gadget is depicted by blue color, clause gadget by black color, green are connections between each variable and clause gadgets, and red is the connection representing an occurrence of a variable in the clause (in this example, a positive occurrence).
}
\label{fig:K13}
\end{figure}

First, suppose that we have a satisfying assignment $\phi$ of the formula. 
We will construct the sought homomorphism $h: (G,L) \tos K_{1,3}$.
For each variable $z$, we set $h$ of $\vrb(z)$ based on Claim~\ref{cl:K13:var}: we use assignment from (i) if $\phi(z)$ is true and (ii) otherwise.
As each clause $c$ is satisfied, say by the $i$-th literal, there is a corresponding variable gadget such that $s_2^i$ sees a vertex mapped to 1 (either a true variable occurring as a positive literal or a false variable occurring as a negative literal in $c$).
In order to make $s_2^i$ happy, we set $h(s_3^i)=2$ and so $h(s_1^i)=3$, that immediately makes $y$ also happy.
It is easy to check that all other vertices in a clause gadget may be mapped in a way to be happy and that we created a $\llshom{K_{1,3}}$ \yes-instance.

Second, suppose  there is a homomorphism $h: (G,L) \tos K_{1,3}$.
We define the truthfulness of variables based on Claim~\ref{cl:K13:var}: $x$ is true if $h(x) = 1$ (i) and false if $h(x) = 2$ (ii).
Now, it remains to check whether this is a satisfying assignment.
For each clause $c$, $y$ in $\cls(c)$ is happy. Therefore for some $i$, $h(s_1^i)=3$.
In order for $s_2^i$ to be happy, vertex $x$ (or $x'$) in the adjacent variable gadget $\vrb(z)$ has to be mapped to $1$. 
In both cases, variable $z$ satisfies $c$.

It remains to argue that the constructed graph is $P_{10}$-free.
Suppose there is an induced $P_{10}$ denoted as $P$.
Clearly, it is not solely within variable or clause gadgets.
Hence, it must contain vertex $x$ or $x'$ of some variable gadget as those are the only vertices that connect variable and clause gadgets.
We denote this particular vertex $p$.
If no central vertex $y$ of a clause gadget belongs to $P$, we are limited to an induced path on at most seven vertices.
Therefore, $P$ contains a central vertex of some gadget.  
As all central vertices of clause gadgets induce together with all vertices $x$ and $x'$ of variable gadgets a complete bipartite graph, either at most one other central vertex or at most one other vertex $x$ or $x'$ can belong to $P$. The former leads to an induced path on at most seven vertices, the latter to an induced path on at most nice vertices.
\end{proof}

\subsection{Hardness for $H = P_4$}

\begin{theorem}\label{thm:P4}
  The \llshom{$P_4$} problem cannot be solved in time $2^{o(n)}$ in $n$-vertex $P_{14}$-free graphs, unless the ETH fails.
Moreover, the problem is hard even for instances where all the lists are of size at most 2, and each vertex that has a list of size exactly two has a neighbor with a list of size exactly one.
\end{theorem}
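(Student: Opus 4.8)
The plan is to follow the template of \cref{thm:K13} and reduce from 3-\textsc{SAT}; recall that the ETH forbids a $2^{o(N+M)}$-time algorithm for formulas with $N$ variables and $M$ clauses. Write $V(P_4)=\{1,2,3,4\}$ with edges $12,23,34$. The feature of $P_4$ that will play the role of the degree-$3$ vertex of $K_{1,3}$ is the asymmetry among its vertices: a vertex mapped to an \emph{endpoint} ($1$ or $4$) is happy as soon as it sees a single color ($2$ or $3$) in its neighborhood, whereas a vertex mapped to an \emph{internal} vertex ($2$ or $3$) must see two prescribed colors ($\{1,3\}$ or $\{2,4\}$). Endpoints therefore provide free binary choices and the happiness of internal vertices provides the logical forcing.

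Concretely, I would build for every variable $z$ a gadget $\vrb(z)$ containing two \emph{literal vertices} $x_z,x_z'$ whose happiness --- mediated by a constant number of singleton-list \emph{anchor} vertices that play the role of $w$ in \cref{thm:K13} and of ambient sources of the four colors --- forces $\{h(x_z),h(x_z')\}$ to equal a fixed two-element subset of $V(P_4)$ in every $h\colon(G,L)\tos P_4$, one of the two states being read as ``$z$ is true''. For every clause $c=\ell_1\vee\ell_2\vee\ell_3$ I would build a gadget $\cls(c)$ shaped like a subdivided claw whose $i$-th branch is attached to the literal vertex of $\ell_i$ (to $x_z$ or $x_z'$ according to the sign, as in \cref{thm:K13}), designed so that making the internal vertices of $\cls(c)$ happy forces at least one branch to be ``activated'', and so that activating a branch propagates --- through the two-color requirement of a suitable internal vertex on that branch --- to force the attached literal vertex into the ``true'' state, while a non-activated branch imposes no constraint and is always colorable. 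As in \cref{thm:K13}, every list will have size at most two, and every size-two-list vertex ($x_z$, $x_z'$, and the relevant branch vertices) will be adjacent to a size-one-list vertex (an anchor, a clause-gadget vertex, or a color source); the reduction has size $\Oh(N+M)$, so the ETH bound transfers. Correctness then splits as usual: a satisfying assignment colors each $\vrb(z)$ in the corresponding state, activates one branch per clause (pointing at a true literal) and leaves the other branches inactive, which makes every vertex happy; conversely $h\colon(G,L)\tos P_4$ yields a well-defined assignment, and the happiness of $\cls(c)$ certifies a true literal in $c$.

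I expect two genuinely hard points. First, engineering the branch gadget of $\cls(c)$: since $P_4$ has only four colors and no vertex of degree three, the implication ``branch activated $\Rightarrow$ literal true'' cannot be obtained in one step as in \cref{thm:K13}, but must be propagated along a short chain of internal vertices, each pinned by its two-color requirement, while keeping the inactive state reachable and all lists small; getting this exactly right, together with the many local happiness checks, is the bulk of the work. Second, the $P_{14}$-freeness: as in \cref{thm:K13} one attaches the gadgets to a dense, complete-bipartite-style scaffold (making clause-gadget vertices adjacent to all literal vertices and to the color sources), which simultaneously encodes the ``for free'' parts of the happiness conditions and ensures that the literal vertices are the only cut vertices between a variable gadget and a clause gadget. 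Any induced path not confined to a single gadget must then traverse this scaffold; since a complete bipartite graph has no induced $P_4$, the scaffold can contribute only a bounded number of vertices to such a path, and on each side the path extends into at most one gadget of bounded diameter, which caps the number of vertices of any induced path at $13$. Turning this into a proof amounts to the (somewhat tedious) enumeration of how an induced path can interleave gadget-internal segments with the scaffold, in the spirit of the last paragraph of \cref{thm:K13}.
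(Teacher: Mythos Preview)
Your plan is essentially the paper's proof: a 3-\textsc{SAT} reduction with binary variable gadgets, subdivided-claw clause gadgets, and a complete-bipartite scaffold (every clause center adjacent to every literal-side vertex) that caps induced paths, followed by exactly the case analysis you sketch for $P_{14}$-freeness. For reference, the paper's concrete realization takes as variable gadget a handle $r_1r_2r_3r_4r_5$ with lists $\{1,3\},\{2,4\},\{3\},\{2,4\},\{1,3\}$ together with pendant leaves of list $\{2\}$ (one per literal occurrence, playing the role of your $x_z,x_z'$), and as clause gadget $S_{2,2,2}$ with lists $\{3\},\{2,4\},\{1,3\}$ from the center outward; the singleton $\{3\}$ at $r_3$ and at each clause center supplies both the forcing and the ``size-one neighbor'' required by the moreover clause.
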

\begin{proof}
Let $H=P_4$, where $1,2,3,4$ are the consecutive vertices on the path.
We reduce from the 3-\textsc{SAT} problem.
Let $\Phi$ be an instance of 3-\textsc{SAT}. %
Without loss of generality, we can assume that each variable appears at least once negatively and at least once positively in some clause in $\Phi$.
For every variable $x$ we introduce a variable gadget $\vrb(x)$ to be a double-sided broom with a five-vertex path $r_1 r_2 r_3 r_4 r_5$, called \emph{handle}, set $X$ of leaves on one end, and set $X'$ of leaves on the second end of the handle.
The number of vertices in $X$ and $X'$ equals to the number of positive and negative occurrences of $x$ in $\Phi$, respectively.
We set the lists $L$ of $r_1, r_2, r_3, r_4, r_5$ to be $\{1,3\},\{2,4\},\{3\},\{2,4\},\{1,3\}$, respectively, and the lists of all vertices in $X\cup X'$ to be $\{2\}$ (see a part of Figure~\ref{fig:P4} for an illustration).

\begin{claimm}[Variable consistency]\label{cl:P4:var}
There exists exactly two homomorphisms $h_1,h_2:(\vrb(x),L) \to H$ in which the vertices of the handle of $\vrb(x)$ are happy:  
\begin{enumerate}
\item[(i)] $h_1(r_1)=1$, $h_1(r_2)=2$, $h_1(r_3)=3$, $h_1(r_4)=4$, $h_1(r_5)=3$, $h_1(u)=2, u\in X\cup X'$, and \label{cl:item:true} 
\item[(ii)] $h_2(r_1)=3$, $h_2(r_2)=4$, $h_2(r_3)=3$, $h_2(r_4)=2$, $h_2(r_5)=1$, $h_2(u)=2, u\in X\cup X'$.\label{cl:item:false} 
\end{enumerate}
\end{claimm}

\noindent\textit{Proof of Claim~\ref{cl:P4:var}.}~
In order to make vertex $r_3$ happy, vertices $r_2$ and $r_4$ have to be mapped to different vertices of $H$.
As consequence, vertices $r_1,r_5$ are mapped to different vertices as well.
 It is easy to check that all other vertices except $X\cup X'$ are happy.
  \hfill $\diamondsuit$
  \medskip

\begin{figure}[h]
\centering
\includegraphics[scale=0.8]{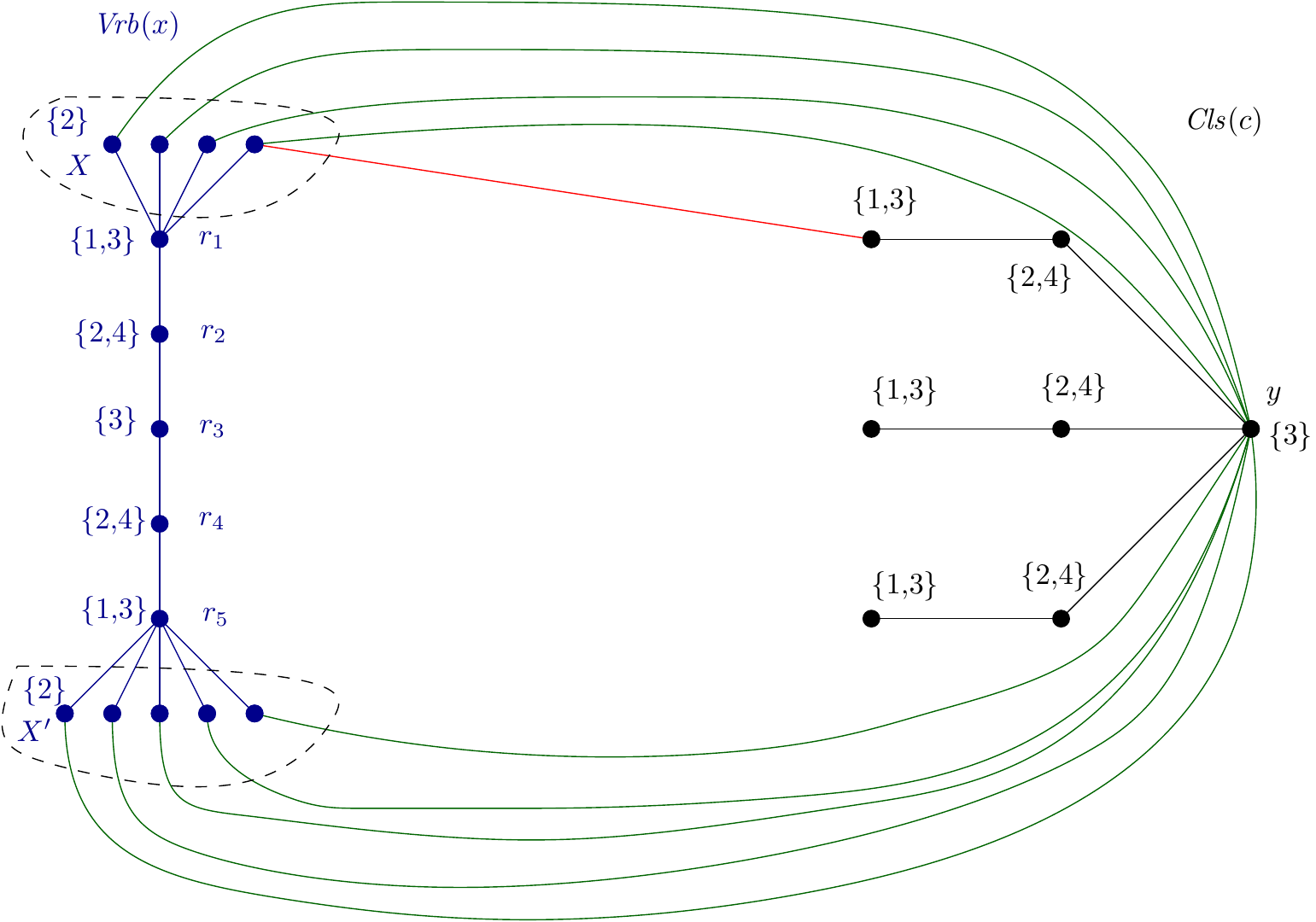} 
\caption{An example of variable gadgets connected to a clause gadget for the case ${H=P_4}$.
  The variable gadget is depicted by blue color, clause gadget by black color, green are connections between each variable and clause gadgets, and red is the connection representing an occurrence of a variable in the clause (in this example, a positive occurrence).
}
\label{fig:P4}
\end{figure}

Now, for every clause $c$, we create a subdivided claw $S_{2,2,2}$ as a clause gadget $\cls(c)$.
Let $y$ denote its central vertex and $s_i^j$ denote the vertex in distance $i$ from $y$ in the $j$-th branch, $i\in\{1,2\}, j\in\{1,2,3\}$.
We set the lists as $L(y)=\{3\}$, $L(s_1^j)=\{2,4\}$, $L(s_2^j)=\{1,3\}$.

We construct the final graph as follows.
If a variable $x$ occurs as the $j$-th literal in clause $c$, we connect $s_2^j$ (in $\cls(c)$)  with one vertex from $X$, or $X'$ (in $\vrb(x)$) if the literal is possitive or negative, respectively.
Every vertex from $X\cup X'$ is connected to exactly one $s_2^j$ vertex in some clause gadget.
Moreover, each vertex $y$ of each clause is adjacent to all vertices in $X$ and $X'$ of all variable gadgets.
For an overview of the described construction, see Figure~\ref{fig:P4}.
This completes the construction of the graph $G$.
The described construction gives only a constant blow-up, and therefore the ETH lower bound is preserved.
Observe that vertex $y$ is happy if and only if for at least one $i\in\{1,2,3\}$ has $h(s_1^i)=4$.

First, suppose that we have a satisfying assignment $\phi$ of the formula $\Phi$, and we will construct the sought homomorphism $h: (G,L) \tos P_4$.
For each variable $x$, we set $h$ of $\vrb(x)$ based on Claim~\ref{cl:P4:var}: we use assignment from (i) if $\phi(x)$ is true and (ii) otherwise.
As each clause is satisfied, there is an adjacent true variable gadget connected as positive literal or false variable gadget connected as negative literal.
As those cases are symmetric, we infer that the appropriate $s_2^j$ on the $j$-th branch that connects the clause gadget with the respective variable gadget is allowed to be mapped to $3$ as $h(r_1)=1$ (or $h(r_5)=1$ in case of a false variable with negative occurrence).
Therefore, we set $h(s_1^i)=4$, making the corresponding $y$ happy. 
It is easy to check that all other vertices in a clause gadget are happy and that we created a $\llshom{P_4}$ \yes-instance.

Second, suppose  there is a homomorphism $h: (G,L) \tos P_4$.
We define the truthfulness of variables based on Claim~\ref{cl:P4:var}: $x$ is true if $h(r_1)=1$ (i) and false if $h(r_1)=3$ (ii).
Now, it remains to check whether this is a satisfying assignment.
For each clause $c$, the center vertex $y$ of $\cls(c)$  is happy and therefore, as observed, for some $j$ $h(s_1^j)=4$.
Hence, $h(s_2^j)=3$. 
In order for the corresponding vertex in $X$ (or $X'$) to be happy, $r_1$ (or $r_5$) has to be mapped to $1$.
That, in both cases, is the variable that satisfies the particular clause (either a true variable with positive occurrence or a false variable with negative occurrence).

It remains to argue that the constructed graph is $P_{14}$-free.
Suppose there is an induced $P_{14}$ denoted as $P$.
Clearly, it is not solely within variable or clause gadgets.
Hence, it must contain a vertex $p\in X\cup X'$ of some variable gadget as those are the only vertices that connect variable and clause gadgets.
We distinguish two cases:
\begin{enumerate}
      \item There exists a central vertex $y$ of a $\cls(c)$ that belongs to $P$.
        Now, $P$ may contain one other vertex in $X$ (or $X'$) from a different (or possibly the same) variable gadget, but not more.
    In this case, we construct a path on at most $13$ vertices. 
    Or it may contain another central vertex from a different clause, but no more.
In this case, we may construct a path on at most $7$ vertices.
  \item No central vertex of any clause is in $P$.
    In this case, we are limited to an induced path on at most $11$ vertices.\qedhere
  \end{enumerate}
\end{proof}

\subsection{Hardness for $H = K_2^{\circ\circ}$}

\begin{theorem}\label{thm:K2}
  The \llshom{$K_2^{\circ\circ}$} problem cannot be solved in time $2^{o(n)}$ in $n$-vertex $P_{12}$-free graphs, unless the ETH fails.
Moreover, the problem is hard even for instances where all the lists are of size at most 2, and each vertex that has a list of size exactly two has a neighbor with a list of size exactly one.
\end{theorem}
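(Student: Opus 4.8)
\emph{Proof proposal.}
The plan is to follow the template of the proofs of \cref{thm:K13} and \cref{thm:P4}: reduce from $3$-\textsc{Sat} — for which the ETH forbids $2^{o(N+M)}$-time algorithms on instances with $N$ variables and $M$ clauses — by building variable gadgets with exactly two valid states, clause gadgets that are ``happy'' precisely when the clause is satisfied, and then bounding the length of induced paths in the constructed graph. The starting observation is that, writing $V(K_2^{\circ\circ})=\{1,2\}$, we have $N_{K_2^{\circ\circ}}(1)=N_{K_2^{\circ\circ}}(2)=\{1,2\}$, so a homomorphism to $K_2^{\circ\circ}$ is merely a $2$-colouring and a vertex is happy exactly when its neighbourhood uses both colours.

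First I would fix a constant-size ``palette'' gadget $\Gamma_0$, say a triangle on $w_1,w_2,w_3$ with $L(w_1)=L(w_3)=\{1\}$ and $L(w_2)=\{2\}$; it is internally happy and provides a fixed vertex $w_2$ coloured $2$ with a colour-$1$ neighbour. For each variable $z$ I would add a gadget $\vrb(z)$: a $4$-cycle $x_z-c_z-x'_z-q_z-x_z$ with $L(x_z)=L(x'_z)=\{1,2\}$, $L(c_z)=\{1\}$, $L(q_z)=\{2\}$. As the only neighbours of $c_z$ (and of $q_z$) are $x_z$ and $x'_z$, making $c_z$ happy forces $\{h(x_z),h(x'_z)\}=\{1,2\}$, and conversely either choice extends so that all four vertices are happy; thus $\vrb(z)$ has exactly two states, and I set ``$z$ true'' to mean $h(x_z)=1$ (equivalently $h(x'_z)=1$ iff $z$ is false). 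For each clause $c=(L_1\vee L_2\vee L_3)$ I would add a single vertex $y_c$ with $L(y_c)=\{2\}$, made adjacent to $w_2$, to every other clause vertex $y_{c'}$, and to the ``hub'' of each of its literals — that is, to $x_z$ if $L_j=z$ and to $x'_z$ if $L_j=\overline z$. Since $h(y_c)=2$ and $y_c$ always sees $w_2$ (coloured $2$), $y_c$ is happy iff some hub neighbour is coloured $1$, i.e.\ iff $c$ is satisfied; crucially the edges between $y_c$ and the other clause vertices and $w_2$ are harmless, because all those vertices are coloured $2$ anyway. Both directions of the reduction then follow routinely from these two observations, and since $G$ has $\Oh(N+M)$ vertices the ETH lower bound transfers. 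The ``moreover'' part is immediate: all lists have size at most $2$, and each size-$2$ vertex ($x_z$ or $x'_z$) is adjacent to $c_z$, whose list $\{1\}$ has size $1$.

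The step I expect to be the main obstacle — and the place where $K_2^{\circ\circ}$ differs genuinely from $P_4$ and $K_{1,3}$ — is twofold. On the design side, since in $K_2^{\circ\circ}$ \emph{every} vertex always needs both colours in its neighbourhood, a clause gadget cannot hinge on a single external vertex's colour without making the instance trivially infeasible; the one-vertex ``or'' gadget $y_c$ with its fixed colour $2$ and the fixed anchor $w_2$ is exactly what avoids this. On the analysis side one must verify that $G$ is $P_{12}$-free. For this I would observe that $Q:=\{w_2\}\cup\{y_c: c\text{ a clause}\}$ is a clique (each $y_c$ is adjacent to $w_2$ and to every other $y_{c'}$) and that $\Gamma_0$ is a clique, so an induced path of $G$ meets each of them in at most two consecutive vertices; moreover all connections between distinct gadgets run through vertices of $Q$, so $G-Q$ is a disjoint union of the $4$-cycles $\vrb(z)$ and a clique of size at most $3$, and an induced path inside a $C_4$ or a small clique has at most $3$ vertices. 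Deleting the at most two consecutive vertices of $P\cap Q$ from an induced path $P$ splits it into at most two subpaths, each contained in a single component of $G-Q$, so $|V(P)|\le 2+2\cdot 3=8$; hence $G$ is even $P_9$-free, and in particular $P_{12}$-free.
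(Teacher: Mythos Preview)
Your construction is correct and proves the statement, but it takes a genuinely different route from the paper. The paper keeps the template of the $K_{1,3}$ and $P_4$ cases almost verbatim: the variable gadget is a double-sided broom on a handle $r_1r_2r_3$ with lists $\{1,2\},\{1\},\{1,2\}$ and pending leaves with list $\{2\}$; the clause gadget is a full $S_{3,3,3}$ with alternating lists on the arms; the $j$-th arm tip $s_3^j$ is wired to a leaf of the relevant broom; and every clause centre $y$ is made adjacent to all broom leaves. The analysis then gives $P_{12}$-freeness. Your version instead exploits the loops on $K_2^{\circ\circ}$ directly: since any $2$-labelling is a homomorphism, you can use cliques freely, so a single vertex $y_c$ with list $\{2\}$ suffices as a clause gadget, a $4$-cycle suffices as a variable gadget, and the whole family $\{w_2\}\cup\{y_c\}$ can be a clique. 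This buys you a much tighter structure (components of $G-Q$ are just $C_4$'s and a $K_2$), and your clique-deletion counting argument cleanly gives $P_9$-freeness, stronger than the paper's $P_{12}$. The trade-off is uniformity: the paper's construction mirrors the $P_4$ and $K_{1,3}$ gadgets closely, which keeps \cref{sub:pthardgeneral} smoother when lifting to general $H$, whereas your triangle~$\Gamma_0$ and clause clique would not survive for loopless targets.

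One small inaccuracy: the palette $\Gamma_0$ is \emph{not} internally happy, since within the triangle $w_2$ sees only colour~$1$. It becomes happy only because $w_2$ is also adjacent to every $y_c$ (all coloured~$2$); so you are implicitly assuming at least one clause. That is harmless (the empty formula is trivially satisfiable), but you should say it, or simply take $\Gamma_0$ to be a $K_4$ with lists $\{1\},\{1\},\{2\},\{2\}$ so that it is genuinely internally happy.
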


\begin{proof}
Let $H=K_2^{\circ\circ}$, with vertices $1,2$.
We reduce from the 3-\textsc{SAT} problem.
Let $\Phi$ be an instance of 3-\textsc{SAT}. %
Without loss of generality, we can assume that each variable appears at least once negatively and at least once positively in some clause of $\Phi$.
For every variable $x$, we introduce a variable gadget $\vrb(x)$ to be a double-sided broom with a three-vertex path $r_1 r_2 r_3$, called \emph{handle}, set $X$ of leaves on one end, and set $X'$ of leaves on the second end of the handle.
The number of vertices in $X$ and $X'$ equals to the number of positive and negative occurrences of $x$ in $\Phi$, respectively.
We set the lists $L$ of $r_1, r_2, r_3$ to be  $\{1,2\},\{1\},\{1,2\}$, respectively, and the lists of all vertices in $X\cup X'$ to be $\{2\}$ (see a part of Figure~\ref{fig:K2}).

\begin{claimm}[Variable consistency]\label{cl:K2:var}
There exists exactly two homomorphisms $h_1,h_2:(\vrb(x),L) \to H$ in which the vertices of the handle of $\vrb(x)$ are happy:  
\begin{enumerate}
\item[(i)] $h_1(r_1)=1$, $h_1(r_2)=1$, $h_1(r_3)=2$, $h_1(u)=2, u\in X\cup X'$, and 
\item[(ii)] $h_2(r_1)=2$, $h_2(r_2)=1$, $h_2(r_3)=1$, $h_2(u)=2, u\in X\cup X'$.
\end{enumerate}
\end{claimm}

\noindent\textit{Proof of~\cref{cl:K2:var}.}~
In order to make vertex $r_2$ happy, vertices $r_1$ and $r_3$ have to be mapped to different vertices of $H$.
 It is easy to check that all other vertices except $X\cup X'$ are happy.
  \hfill $\diamondsuit$
\medskip
\begin{figure}[h]
\centering
\includegraphics[scale=0.8]{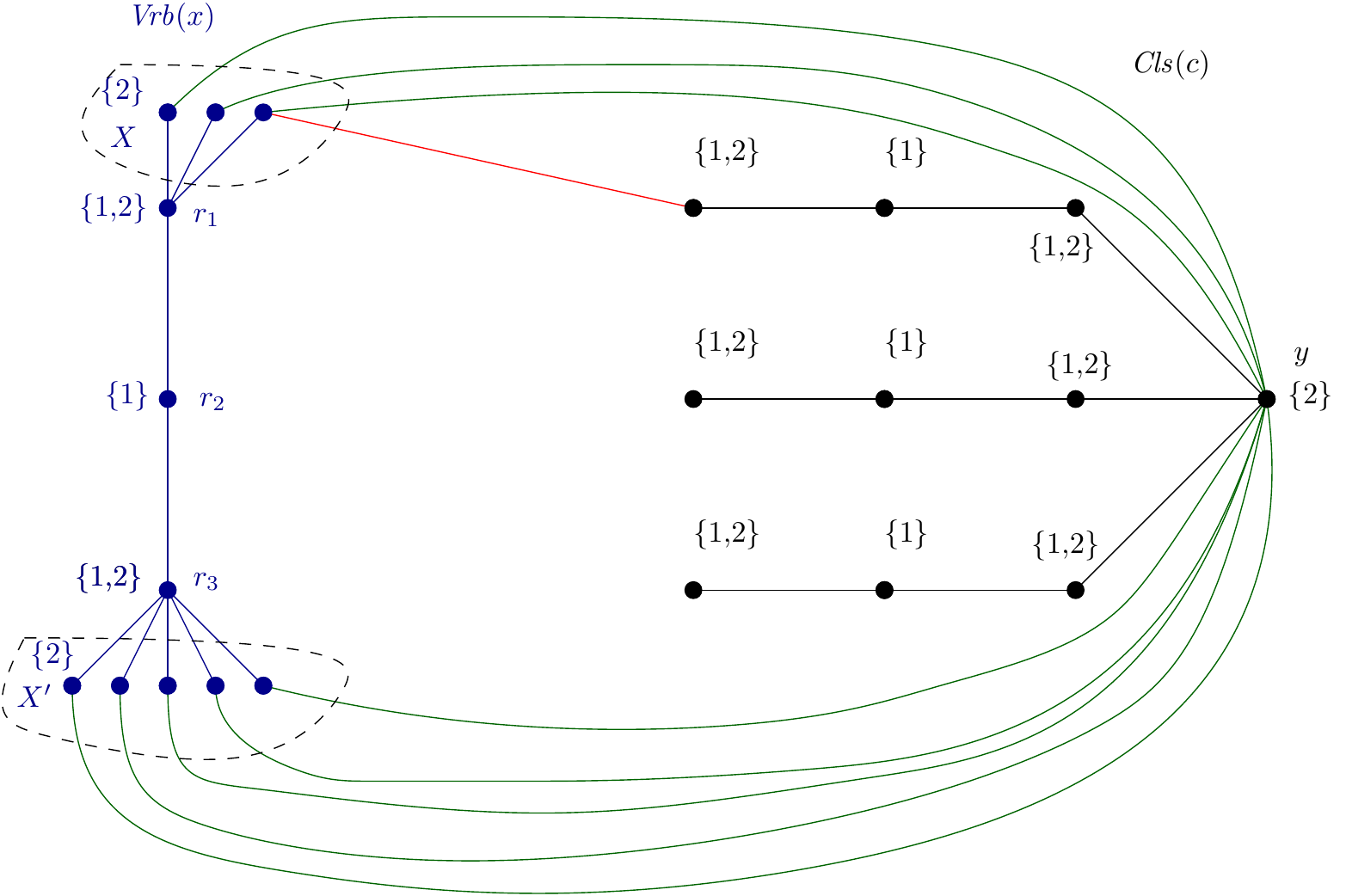} 
\caption{An example of variable gadgets connected to a clause gadget for the case ${H=K_2^{\circ\circ}}$.
  The variable gadget is depicted by blue color, clause gadget by black color, green are connections between each variable and clause gadgets, and red is the connection representing an occurrence of a variable in the clause (in this example, a positive occurrence).
}
\label{fig:K2}
\end{figure}

Now, for every clause $c$, we create a subdivided claw $S_{3,3,3}$ as a clause gadget $\cls(c)$. 
We denote by $y$ the central vertex and by $s_i^j$ the vertex in distance $i$ from $y$ in the $j$-th branch, $i\in\{1,2,3\}, j\in\{1,2,3\}.$
We set the lists as $L(y)=\{2\}$, $L(s_1^j)=\{1,2\}$, $L(s_2^j)=\{1\}$, and $L(s_3^j)=\{1,2\}$.

We construct the final graph as follows.
If a variable $x$ occurs as the $j$-th literal in clause $c$, we connect $s_3^j$ (in $\cls(c)$)  with one vertex from $X$, or $X'$ (in $\vrb(x)$) if the literal is possitive or negative, respectively.
We maintain the property that every vertex from $X\cup X'$ is connected to exactly one $s_3^j$ vertex in some clause gadget.
Moreover, each vertex $y$ of each clause is adjacent to all vertices in $X$ and $X'$ of all variable gadgets.
For an overview of the described construction, see \cref{fig:K2}.
This completes the construction of the graph $G$.
  The described construction gives only a constant blow-up, and therefore the ETH lower bound is preserved.
  Observe that vertex $y$ is happy if and only if for at least one $i\in\{1,2,3\}$ has $h(s_1^i)=1$.

First, suppose that we have a satisfying assignment $\phi$ of the formula $\Phi$, and we construct the sought homomorphism $h$.
For each variable $x$, we set $h$ of $\vrb(x)$ based on Claim~\ref{cl:K2:var}: we use the assignment from (i) if $\phi(x)$ is true and from (ii) otherwise.
As each clause is satisfied, there is an adjacent true variable gadget connected as positive literal or false variable gadget connected as negative literal.
As those cases are symmetric, we infer that the appropriate $s_3^j$ on the $j$-th branch that connects the clause gadget with the respective variable gadget is allowed to be mapped to $2$ as $h(r_1)=1$ (or $h(r_3)=1$ in case of a false variable with negative occurrence).
Therefore, in order to make $s_2^i$ happy, we set $h(s_1^i)=1$, which makes $y$ happy as well. 
It is easy to check that all other vertices in a clause gadget are happy and that we created a $\llshom{K_2^{\circ\circ}}$ \yes-instance.

Second, suppose  there is a homomorphism $h: (G,L) \tos K_2^{\circ\circ}$.
We define the truthfulness of variables based on Claim~\ref{cl:K2:var}: $x$ is true if $h(r_1)=1$ (i) and false if $h(r_1)=2$ (ii).

Now, it remains to check whether this is a satisfying assignment.
As observed, for each clause $c$, the center vertex $y$ of $\cls(c)$ is happy and therefore for some $j$ $h(s_1^j)=1$.
Hence, $h(s_3^j)=2$ as $s_2^j$ is happy. 
In order for the corresponding vertex in $X$ (or $X'$) to be happy, $r_1$ (or $r_3$) is  mapped to $1$.
That, in both cases, is the variable that satisfies the particular clause (either a true variable with a positive occurrence or a false variable with a negative occurrence).

It remains to argue that the constructed graph is $P_{12}$-free.
Suppose there is an induced $P_{12}$ denoted as $P$.
Clearly, it is not solely within variable or clause gadgets.
Hence, it must contain a vertex $p\in X\cup X'$ of some variable as those are the only vertices that connect variable and clause gadgets.
We distinguish two cases: 
\begin{enumerate}
      \item There exists a central vertex $y$ of a $\cls(c)$ that belongs to $P$.
        Now, $P$ may contain one other vertex in $X$ (or in $X'$) from a different (or possibly the same) variable gadget, but not more.
    In this case, we construct a path on at most $9$ vertices. 
    Or it also may contain another central vertex from a different clause, but no more.
In this case, we may construct a path on at most $9$ vertices.
  \item No central vertex of any clause is in $P$.
    In this case, we are limited to an induced path on at most $11$ vertices.\qedhere
  \end{enumerate}
\end{proof}

Let us point out that $(K_2^{\circ\circ})^*=C_4$, and for every $t$, the \llshom{$C_4$} problem is subexponential-time solvable in $P_t$-free graphs by \cref{thm:algo-c4-forests}.
Thus the implication inverse to the one in \cref{cor:bipartitealgotransfers} cannot hold.

} %
\subsection{Proof of the Hardness Part of~\cref{thm:pathfree}}\label{sub:pthardgeneral}
First, we show a lemma that helps us extend the hardness reductions to all graphs in the second part of~\cref{thm:pathfree}.

\sv{\toappendix{\subsection{Omitted Proof of~\cref{sub:pthardgeneral}}}}

\lv{\begin{lemma}\label{lem:cross}}
\sv{\begin{lemma}[$\spadesuit$]\label{lem:cross}}
  Let $H$ be a graph without isolated vertices and let $u,v\in V(H)$.
  There exists a graph $Z:=H\times H$ with lists $L$ and $z\in V(Z)$ with $L(z)=\{u,v\}$ such that there are at least two homomorphism $h_u, h_v: (Z, L) \tos H$ such that $h_u(z)=u$ and $h_v(z)=v$.
\end{lemma}

\toappendix{%
\sv{%
  \begin{lemma*}[Restated \cref{lem:cross}]
  Let $H$ be a graph without isolated vertices and let $u,v\in V(H)$.
  There exists a graph $Z:=H\times H$ with lists $L$ and $z\in V(Z)$ with $L(z)=\{u,v\}$ such that there are at least two homomorphism $h_u, h_v: (Z, L) \tos H$ such that $h_u(z)=u$ and $h_v(z)=v$.
\end{lemma*}
}
\begin{proof}
  First, we define $z$ as $(u,v)\in V(Z)$ and set $L(z)$ appropriately.
  We do not restrict other lists of $V(Z)$.
  For vertices $(a,b) \in V(Z)$ we define $h_u\left((a,b)\right):=a$ and $h_v\left((a,b)\right):=b$.
  We verify that $h_u$ is indeed a locally subject list homomorphism.
  Take an edge $(a,b)(c,d)\in E(Z)$.
  We infer that $ac\in E(H)$.
  For $(a,b)\in V(Z)$ there exist $d\in N_H(b)$ as no vertex is isolated.
  Vertex $(a,b)$ is happy as for each $c\in N_H(a)$ we have an edge $(a,b)(c,d)\in E(Z)$ and so vertex $(a,b)$ is happy.
  The proof for $h_v$ is analogous.
\end{proof}
}

\noindent Now, we are ready to prove the hardness part of~\cref{thm:pathfree}. In particular, we will show the following theorem.

\begin{theorem}\label{thm:pt-hard}
Let $H\not\in \mathcal H_{\mathrm{poly}}\cup\{P_3,C_4\}$ be a connected graph.
Let $q$ be the number of vertices in the longest induced path in $H\times H$.
There exists $t\le 14+2q$ such that $\llshom{H}$ cannot be solved in time $2^{o(n)}$ in $n$-vertex $P_t$-free graphs, unless the ETH fails.
\end{theorem}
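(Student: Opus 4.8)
The plan is to leverage the three base hardness results (Theorem~\ref{thm:K13}, \ref{thm:P4}, \ref{thm:K2}, summarized in Theorem~\ref{thm:universal}) for $H' \in \{K_{1,3}, P_4, K_2^{\circ\circ}\}$ together with \cref{cor:bipartitetogeneral} and \cref{lem:cross}, by a case analysis on the ``obstruction'' that $H$ must contain. First I would observe that a connected graph $H \notin \cHpoly \cup \{P_3, C_4\}$ must contain at least one of $K_2^\circ$, $K_2^{\circ\circ}$, $K_3$, $P_4$, $K_{1,3}$ as an induced subgraph: if $H$ has no loops and is triangle-free and claw-free and has no induced $P_4$, then $H$ is a path or cycle of bounded structure, and combined with excluding $P_3, C_4$ (and $K_3$) one is forced into $\cHpoly$; a loop forces $K_2^\circ$ or $K_2^{\circ\circ}$ once $H$ has more than one vertex. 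This splits into two groups: either $H$ contains an induced $H' \in \{K_{1,3}, P_4, K_2^{\circ\circ}\}$ directly, or $H$ contains $K_2^\circ$ or $K_3$ (the ``reduce via $H^*$'' group), in which case $H^* = H \times K_2$ contains an induced $P_4$ — since an edge with one loop, when crossed with $K_2$, yields $P_4$, and a triangle crossed with $K_2$ contains $C_6 \supseteq_{\mathrm{ind}} P_4$ — and by \cref{cor:bipartitetogeneral} it suffices to prove the lower bound for $\llshom{H^*}$, which falls into the first group.

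The core step is to show: if $H$ contains an induced copy of some $H' \in \{K_{1,3}, P_4, K_2^{\circ\circ}\}$, then $\llshom{H}$ is as hard as $\llshom{H'}$ in $P_t$-free graphs for an appropriate $t$, so that the $P_{14}$-hardness of the base case transfers. The idea is to take an instance $(G', L')$ of $\llshom{H'}$ produced by the base reduction — which by Theorem~\ref{thm:universal} uses only lists of size $\le 2$ with every size-$2$ list vertex having a size-$1$ neighbor — and reinterpret its lists as subsets of $V(H)$ via the embedding $H' \hookrightarrow H$. This is not quite a valid $\llshom{H}$ instance because a locally surjective homomorphism into $H$ must also realize the neighbors of $H'$-vertices that lie in $V(H) \setminus V(H')$. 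To fix this, for each vertex $w$ of $G'$ I would attach a gadget: using \cref{lem:cross} with $Z = H \times H$, for each pair $\{u,v\}$ arising as a list I get a ``selector'' vertex $z$ with $L(z) = \{u,v\}$ inside a graph $Z$ all of whose vertices can be made happy under either choice at $z$; I glue a fresh copy of this $Z$ (identifying $z$ with the $G'$-vertex, or making $z$ adjacent to it) so that every vertex of $G$ outside the ``skeleton'' has all of its $N_H$-neighbors available regardless of the global choice, while the choice at the size-$2$ vertices still encodes the original instance. One must check that the happiness constraints of the base reduction's skeleton vertices are preserved and that no new solutions are introduced — here the structure of $Z = H\times H$ (every vertex non-isolated, hence locally surjective via the product projections) does the work.

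The diameter/path-length bookkeeping is where the bound $t \le 14 + 2q$ comes from: the base instances are $P_{14}$-free, and each attached copy of $Z = H \times H$ has induced paths of length at most $q$ (the number of vertices in the longest induced path of $H \times H$); an induced path in the final graph $G$ can pass through at most a bounded number of $Z$-copies before returning to the skeleton, and a careful accounting — an induced path alternates between the skeleton (contributing at most $14$ vertices in total, by $P_{14}$-freeness of the base construction, since the $Z$-copies are pendant and do not create shortcuts) and at most two $Z$-attachments contributing $q$ vertices each — yields $P_{14+2q}$-freeness. The main obstacle I expect is precisely this last point: verifying rigorously that attaching the $Z$-gadgets does not create long induced paths (one must ensure the gadgets are ``locally'' attached, e.g.\ a $Z$-copy shares only its vertex $z$ with the rest, so that any induced path enters and leaves a $Z$-copy at most once and spans $\le q$ of its vertices), and simultaneously that the gadgets do not destroy the correctness equivalence — in particular that happiness of $z$ and of the $Z$-internal vertices is compatible with \emph{both} branch choices, which is exactly the content of \cref{lem:cross}, but needs to be combined with the requirement (from Theorem~\ref{thm:universal}) that size-$2$-list vertices have a size-$1$ neighbor to pin down the realized colors on the skeleton side.
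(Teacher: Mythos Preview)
Your high-level strategy matches the paper's: case-split on which of $K_2^\circ, K_2^{\circ\circ}, K_3, P_4, K_{1,3}$ is induced in $H$; lift the base hardness for $H' \in \{K_{1,3}, P_4, K_2^{\circ\circ}\}$ to $H$ by attaching gadgets built from $H \times H$ via \cref{lem:cross}; and handle $H' \in \{K_2^\circ, K_3\}$ through $H^*$ and \cref{cor:bipartitetogeneral}. The path-length accounting (skeleton contributes at most $14$, two pendant gadgets contribute at most $q$ each) is also right in spirit.

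The gap is in the gadget attachment. If you take $L(z) = L(w)$ and \emph{identify} $z$ with $w$, then under either projection of $Z = H \times H$ the vertex $w$ acquires neighbors realizing all of $N_H(h(w))$ --- including the neighbors that lie inside $H'$. Hence every skeleton vertex becomes happy regardless of how its $G'$-neighbors are colored, and the backward direction of the reduction collapses (e.g.\ in the $P_4$ base construction, happiness of $r_3$ no longer forces $r_2,r_4$ to receive different colors). Making $z$ merely \emph{adjacent} to $w$ with $L(z)=L(w)$ does not even yield a homomorphism in general, since for the lists used in the base reductions the two colors in $L(w)$ are typically non-adjacent in $H'$. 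The paper's construction avoids this by attaching one gadget per \emph{missing color} rather than one per vertex. For each $c \in N_H(a)\setminus V(H')$ with $a \in L(w)$, it connects $w$ to a fresh contact vertex whose list is $\{c\}$ (sitting inside a copy of $H$ with singleton lists) when $c$ is a common neighbor of both colors in $L(w)$ or when $|L(w)|=1$, and $\{c,c_q\}$ (sitting inside a copy of $H\times H$ produced by \cref{lem:cross}) when $c$ is a private neighbor of $a$; here $c_q$ is the unique color of $w$'s size-$1$ $G'$-neighbor $q_w$, and this is precisely where the ``every size-$2$ vertex has a size-$1$ neighbor'' clause of the base theorems is exploited. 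The contact vertex can thus supply $w$ only with the new color $c$ or with $c_q$, which $w$ already sees in $G'$; it never supplies the remaining $H'$-colors, so the original $H'$-happiness constraints on the skeleton survive intact and the equivalence with the base instance is preserved.
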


\begin{proof}
  Let $H\not\in \mathcal H_{\mathrm{poly}}\cup\{P_3,C_4\}$.
  As we stated above, the graph $H$ contains at least one of $K_2^\circ, K_2^{\circ\circ}, K_3, P_4, K_{1,3}$ as an induced subgraph $H'$.
  \sv{\cref{thm:universal} proved the cases when $H \in \{K_{1,3}, P_4, K^{\circ\circ}_2\}$.}%
  \lv{Theorems \ref{thm:K13},~\ref{thm:P4}, and~\ref{thm:K2} proved the cases when $H \in \{K_{1,3}, P_4, K^{\circ\circ}_2\}$.}

First, suppose that $H'\in \{K_{1,3},P_4,K_2^{\circ\circ}\}$.
We show how to adjust the hardness construction for $H'$ to $H$.
Recall that in the hardness reductions for $H'$, the lists of each vertex are of size at most two, and moreover, if they are of size exactly two, they always have a neighbor with a list of size precisely one.
Now, we describe how to modify the hardness construction, called \emph{original}, for $\llshom{H'}$ which is an instance $(G',L')$ to an instance $(G,L)$ that will describe a hardness construction for $\llshom{H}$.

Let $w$ be a vertex of $G'$.
First, consider the case that $|L(w)|=1$ where $\{c_w\}=L(w)$.
Let $X_w:=\left(V(H)\setminus V(H')\right)\cap N_H(c_w)$.
In other words, set $X_w$ represents the neighbors of $c_w$ that are only in $H$ but not in $H'$.
We add $|X_w|$ disjoint coppies of graph $H$ into $G$ with lists $L(y)=\{y\}$ for $y\in V(H)$.
For each $c\in X_w$ we connect $c$ in the $c$-th coppy of $H$ with $w$.
We call $c$ the \emph{contact vertex} of the respective additional gadget.

Now, consider the case that $|L(w)|=2$.
Let $\{a,b\}= L(w)$.
As observed there is a vertex $q_w \in N_{G'}(w)$ such that $|L(q)|=1$ and let $\{c_q\}=L(q)$.
Let $X_w^{ab}:=\left(V(H)\setminus V(H')\right)\cap N_H(a)\cap N_H(b)$.
Let $X_w^a:=\left(V(H)\setminus V(H')\right)\cap N_H(a)\setminus X_w^{ab}$.
Let $X_w^b:=\left(V(H)\setminus V(H')\right)\cap N_H(b)\setminus X_w^{ab}$.
Finally, let $X_w:=X_w^b\cup X_w^a$.
In other words, set $X^{ab}_w$ represents common neighbors of $a$ and $b$ that are only in $H$ and not in $H'$.
Similarly, set $X^a_w$ is composed of the neighbors of $a$ which are not the neighbors of $b$, and the symmetrical is true for set $X^b_w$.

We add $|X_w^{ab}|$ disjoint coppies of graph $H$ into $G'$ with lists $L(y)=\{y\}$ for $y\in V(H)$.
For each $c\in X_w^{ab}$, we connect $c$ in the $c$-th coppy of $H$ with $w$.
We call $c$ the \emph{contact vertex}.
We use Lemma~\ref{lem:cross} and we add $|X_w|$ coppies of graph $Z_w$ into $G'$.
We connect $w$ with a special vertex $z\in V(Z_w)$.
For each $c\in X_w$, we define $L(z):=\{c,c_q\}$ in the $c$-th coppy of $Z_w$.
Again, we say that $c$ is the \emph{contact vertex}.
Consult~\cref{fig:entertaining} for an overview of the construction.

\begin{figure}[h]
\centering
\includegraphics[scale=0.8]{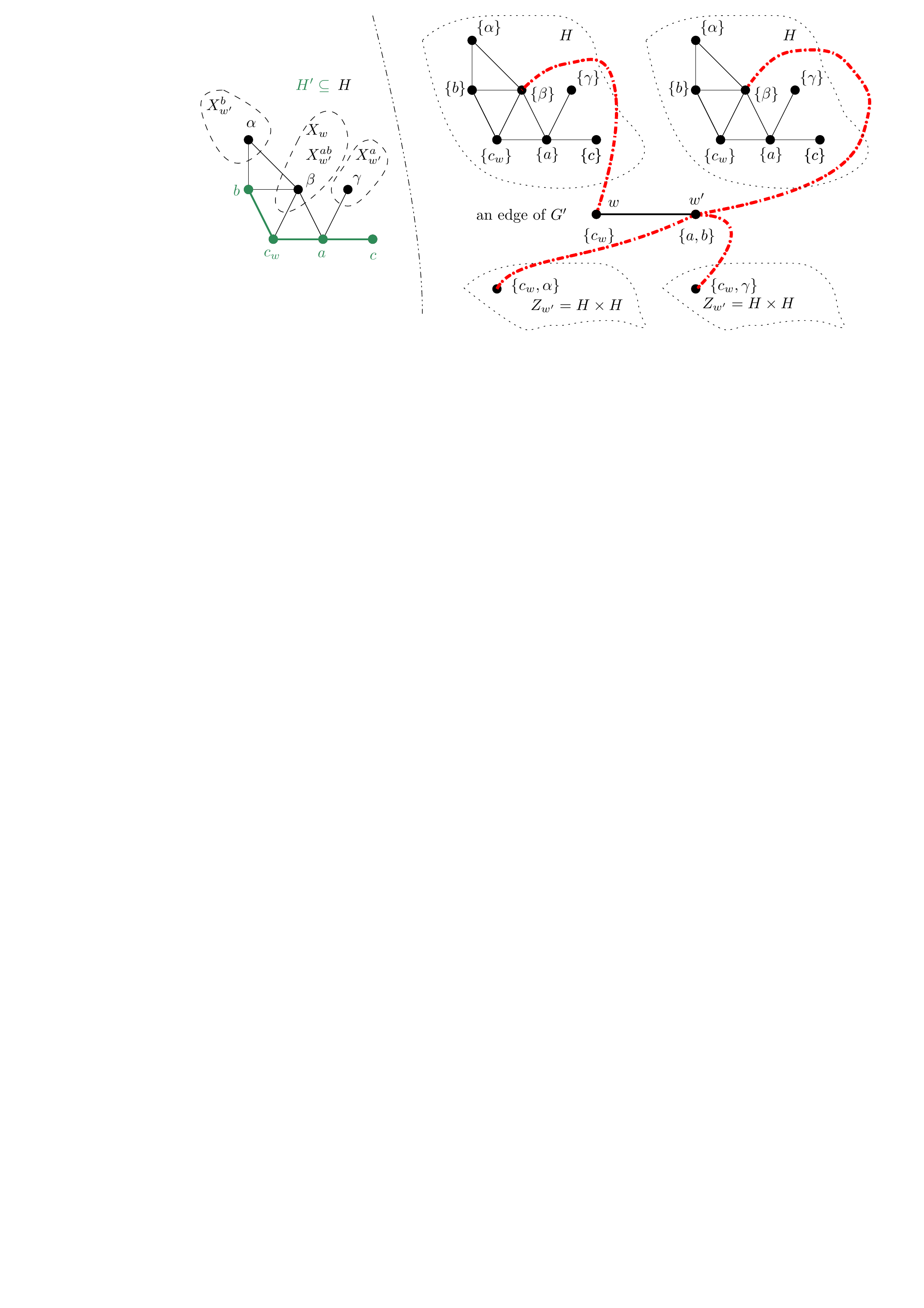} 
\caption{An example of constuction of $(G,L)$ shown on one edge $w,w'$ of $(G',L')$.
  The added gadgets are attached using red dash-dotted edges.
}
\label{fig:entertaining}
\end{figure}

The additional gadgets always allow us to make the original vertices of the construction happy with respect to vertices outside $H'$.
On the other hand, the additional gadgets never allow the contact vertices to be mapped to any vertex of $H$ except for the ones that are already seen in the neighborhood within the original construction.
The exception happens only when the original vertex had a list of size two and the color in the list had a private neighbor (with respect to the other color) outside of $H'$.
Then, the additional gadget may let to map its contact vertex to the color of the neighbor in $G'$ which has the list of size one.
Thus, that does not affect the correctness of the reduction.
Therefore, we conclude that the hardness of the original construction is preserved.

Observe that the length of the longest induced path in $G$ is the length of the longest path in $G'$ plus twice the length of the longest path in $H\times H$, which we denoted as $q$.

It remains to show what to do if $H'\in \{K_2^\circ, K_3\}$ is an induced subgraph of $H$.
As $H$ is non-bipartite, we create a connected bipartite graph $H^*$.
Now, observe that if $H' = K_2^\circ$ then $H^*$ contains $P_4$.
Further, if $H' = K_3$ then $H^*$ contains $C_6$ and so $P_4$.
As in the case of $H^*$ containing a $P_4$, we already proved the hardness, the hardness for $H$ follows by \cref{cor:bipartitetogeneral}.
\end{proof}

\toappendix{
\section{Excluding General Graphs $F$}\label{sec:otherhard}
In this section, we prove \cref{thm:OtherHardness}. Actually, we will work in a slightly more general setting.
For $p,q \geq 1$, let $\cG_{q,p}$ be the class of graphs $G$ such that:
\begin{enumerate}
 \item $\Delta(G) \leq q$,
 \item the girth of $G$ is at least $p$.
\end{enumerate}
We will show the following theorem.

\begin{theorem}
\label{thm:OtherHardness-real}
For every $H \notin \cHpoly$ and every $p \geq 1$ there is $q$ such that 
the \llshom{$H$} problem cannot be solved in time $2^{o(n)}$ on $n$-vertex instances $(G,L)$ where $G \in \cG_{q,p}$, unless the ETH fails.
\end{theorem}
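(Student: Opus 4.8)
The plan is to reduce from $3$-\textsc{SAT} --- which, assuming the ETH, has no $2^{o(N+M)}$-time algorithm on instances with $N$ variables and $M$ clauses~\cite{DBLP:books/sp/CyganFKLMPPS15} --- to $\llshom{H}$ so that, for fixed $H$ and fixed $p$, the produced graph $G$ lies in $\cG_{q,p}$ and has $\Oh(N+M)$ vertices; the ETH bound then transfers. I would reuse the architecture of \cref{sec:pthard}: as in \cref{thm:pt-hard} it suffices to treat $H\in\{K_{1,3},P_4,K_2^{\circ\circ}\}$ and then lift, namely if a connected $H\not\in\cHpoly$ has one of these as an induced subgraph we graft on the constant-size gadgets of \cref{lem:cross} (replaced, since they have bounded size, by bounded-degree variants of girth $\ge p$), and if $H$ has $K_2^\circ$ or $K_3$ as an induced subgraph then $H^*$ has an induced $P_4$ and we apply \cref{cor:bipartitetogeneral}, which respects the class $\cG_{q,p}$.

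The variable and clause gadgets used in the proofs of \cref{thm:K13,thm:P4,thm:K2} are already \emph{forests} (brooms and subdivided claws); what keeps those reductions out of $\cG_{q,p}$ is only (i) vertices of unbounded degree (the ends of the brooms, which carry one leaf per occurrence, and the ``broadcast'' edges joining every clause centre to every variable terminal), and (ii) the short cycles these broadcast edges form together with the incidence edges. I would eliminate both. Every unbounded-degree hub is expanded into a bounded-degree binary \emph{distribution tree} whose internal vertices inherit the hub's $2$-element list and which --- via the auxiliary happiness gadgets below attached to each of them --- forces the hub's value to be copied to all its leaves, which then serve as the connection points; this keeps the logical core a forest. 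The happiness of all newly created vertices, and of the clause centres (which have lost their broadcast neighbours), is restored by hanging on \emph{certifier gadgets} through \emph{single edges}. For a vertex $c$ of $H$, a $c$-\emph{certifier} is a bounded-degree graph $Q_c$ of girth at least $p$ admitting a locally \emph{bijective} map onto $H$ with a distinguished preimage of $c$; these are just high-girth topological covers of the fixed graph $H$, which have the same maximum degree as $H$ and exist with arbitrarily large girth. Attaching $Q_c$ to a vertex $x$ by one edge supplies $x$ with a neighbour coloured $c$, forces $h(x)\in N_H(c)$, creates no cycle shorter than $p$, and leaves every vertex of $Q_c$ happy. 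For a vertex whose list has two elements $\{\alpha,\beta\}$ one attaches instead a fixed finite combination of certifiers (and at most one auxiliary port vertex), chosen from the adjacency pattern of $\alpha,\beta$ in $H$, so that the vertex is happy under either colour without being over-constrained. Since $H$ is fixed, each vertex carries $\Oh(1)$ such gadgets of bounded size, so $\Delta(G)\le q$ for some $q=q(H)$ \emph{independent of $p$}.

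It then remains to check routine points: (i) $\Phi$ is satisfiable iff $(G,L)$ is a \yes-instance --- the argument of \cref{sec:pthard} goes through essentially verbatim, because distribution trees merely propagate the variable choices and certifiers merely re-supply the colours the broadcast edges used to supply, so the clause semantics are unchanged; (ii) every cycle of $G$ lies inside a single certifier --- all other gadgets are forests glued along single edges --- and hence has length $\ge p$; (iii) $\Delta(G)\le q(H)$; and (iv) $|V(G)|=\Oh(p(N+M))=\Oh(N+M)$, so the ETH lower bound is preserved. Consequently $\llshom{H}$ has no $2^{o(n)}$-time algorithm on $n$-vertex graphs of $\cG_{q,p}$ unless the ETH fails; and as $q$ depends only on $H$, this is exactly the form needed to derive \cref{thm:OtherHardness}.

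The crux is the construction of the two gadget ingredients: (a) for the fixed, possibly looped graph $H$, covers of arbitrarily large girth with the same bounded degree and a prescribed fibre --- standard for simple graphs via random or explicit lifts, but requiring some care when $H$ has loops --- and (b) the conditional certifier for a $2$-element-list vertex, a finite but case-heavy gadget whose definition splits on how $N_H(\alpha)$ and $N_H(\beta)$ compare (for instance the sub-case $N_H(\beta)\subseteq N_H(\alpha)$ needs a slightly different arrangement). Re-confirming that deleting the broadcast edges leaves each base-case clause gadget functioning as before is straightforward but must be done separately for each $H\in\{K_{1,3},P_4,K_2^{\circ\circ}\}$.
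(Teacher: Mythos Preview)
Your case analysis has a real gap. You invoke the decomposition from \cref{thm:pt-hard}, but that theorem is stated for $H\notin\cHpoly\cup\{P_3,C_4\}$: the list $\{K_{1,3},P_4,K_2^{\circ\circ},K_2^\circ,K_3\}$ of forced induced subgraphs is derived precisely by first \emph{excluding} $P_3$ and $C_4$. Here you must cover every connected $H\notin\cHpoly$, and neither $P_3$ nor $C_4$ contains any of those five graphs as an induced subgraph, nor does passing to $H^*$ help (both are already bipartite). So your reduction, as written, says nothing about the targets $H=P_3$ and $H=C_4$, which are the very cases that distinguish \cref{thm:OtherHardness-real} from the hardness side of \cref{thm:pathfree}.

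The paper's route avoids this by choosing a different, single base case: it observes that every connected $H\notin\cHpoly$ (after possibly replacing $H$ by $H^*$ via \cref{cor:bipartitetogeneral}) contains $P_3$ as a \emph{not-necessarily-induced} subgraph, and then gives a direct \textsc{NAE 3-Sat} $\to$ \llshom{$P_3$} reduction whose instance graph is already in $\cG_{3,p}$ by construction (long even cycles for variables, long subdivided claws for clauses, no broadcast edges at all). The lift to general $H$ is done by attaching to every vertex a single ``funny gadget'' built by a BFS on $H$ (for list $\{2\}$) or on $H\times H$ (for list $\{1,3\}$), with leaves wired back to shallow levels to close off happiness; these play the role of your certifiers but are explicit and handle the two-element-list case uniformly. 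Compared with your plan, this needs no distribution trees, no surgery on the \cref{sec:pthard} gadgets, and only one base case.

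A secondary concern: your distribution-tree step says internal nodes ``inherit the hub's $2$-element list'' and that attaching happiness gadgets ``forces the hub's value to be copied''. For, say, $H'=P_4$ with list $\{1,3\}$, two adjacent vertices cannot both carry $\{1,3\}$ (the colours $1$ and $3$ are non-adjacent), so the tree cannot be built as described; and certifiers only \emph{supply} neighbour colours, they do not by themselves propagate the root's colour down the tree. What actually forces copying is path length modulo the local period of $H'$ (this is exactly how the paper's variable cycles work for $P_3$), and that mechanism has to be spelled out separately for each of your three base cases. None of this is fatal, but it is considerably more work than your sketch suggests, and the paper's $P_3$-only approach sidesteps it entirely.
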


Note that for every $F$ which is not a forest of maximum degree at most $q$, every graph in $\cG_{q,|V(F)|+1}$ is $F$-free.
Thus, \cref{thm:OtherHardness-real} implies \cref{thm:OtherHardness}.

Let $H \not \in \cHpoly$ be a connected graph.
First, we argue that to prove \cref{thm:OtherHardness-real} we may assume that $H$ contains $P_3$ as a subgraph (not necessarily induced).
If $|V(H)| \geq 3$, then $H$ certainly contains $P_3$ as a subgraph as we suppose that $H$ is connected.
The only graphs $H$ such that $|V(H)| \leq 2$ and $H \not \in \cHpoly$ are $K^\circ_2$ and $K^{\circ \circ}_2$.
However, $H^*$ for $H \in \{K^\circ_2,K^{\circ \circ}_2\}$ is connected bipartite graph on 4 vertices and again it contains $P_3$ as a subgraph.
To conclude by \cref{cor:bipartitetogeneral}, it suffices to show hardness for $H^*$.

In the proofs, we will always denote the vertices of a certain $P_3$ of $H$ used in the reductions as $1,2$, and $3$, where $1$ and $3$ are the endpoints of the $P_3$ and $2$ is the middle vertex. 
First, we show the reduction for $H = P_3$.
Further, we use this reduction as a basic step to show hardness for arbitrary $H \notin \cHpoly$.

\subsection{$H = P_3$}
First, we show a slightly stronger version of \cref{thm:OtherHardness-real} for the case $H = P_3$.

\begin{theorem}
 \label{thm:OtherHardnessP3}
 Let $p \in \N$.
 Assuming ETH, the  \llshom{$P_3$} problem cannot be solved in time $2^{o(n)}$ on $n$-vertex instances $(G,L)$, even if they satisfy the following properties:
 \begin{enumerate}
  \item $G \in \cG_{3,p}$.  
  \item The distance between every two vertices of $G$ of degree 3 is at least $p$.
  \item Any list $L(v)$ equals to $\{1,3\}$ or $\{2\}$. Moreover, every edge of $G$ has one endvertex with the list $\{1,3\}$ and the other with the list $\{2\}$.
 \end{enumerate}
\end{theorem}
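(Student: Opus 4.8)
The plan is to reduce from a maximum-degree-bounded variant of a satisfiability-type problem — most naturally from \nae{} (equivalently, hypergraph $2$-coloring) restricted to instances of bounded size per clause and bounded number of clauses per variable, which is known to remain ETH-hard. Since the target graph $P_3$ with consecutive vertices $1,2,3$ encodes precisely hypergraph $2$-coloring (as recalled in the introduction: $\mathcal V\mapsto\{1,3\}$, $\mathcal E\mapsto\{2\}$), the job is to realize such an instance as a graph $G$ that additionally lives in $\cG_{3,p}$ and has its degree-$3$ vertices pairwise far apart. First I would start from a $3$-regular (or max-degree-$3$) ETH-hard \nae{}-style instance — e.g. using the standard trick that $3$-SAT, and hence \nae{}, stays hard under a linear-size reduction when every variable occurs a bounded number of times, so that after splitting we get degree at most $3$ everywhere.

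The core of the construction is the \emph{long-path / girth gadgetry}. To enforce both (1) girth $\ge p$ and (2) pairwise distance $\ge p$ between degree-$3$ vertices, every ``wire'' of the incidence graph gets subdivided into a path of length $\ge p$ with alternating lists $\{1,3\},\{2\},\{1,3\},\{2\},\dots$; a path whose two endpoints carry the list $\{1,3\}$ (or both $\{2\}$) propagates the value faithfully because the unique middle vertices of each internal $P_3$ force consistency — this is exactly the variable-consistency phenomenon used repeatedly in the hardness lemmas. Variable gadgets are then long cycles (of even length, length $\ge p$) of vertices with alternating lists, so that a locally surjective homomorphism must assign the $\{1,3\}$-vertices a constant value $1$ or $3$ (this encodes the truth value); branching points of degree $3$ occur only where a variable cycle spawns a wire toward a clause, and we make these spawn-points themselves $\ge p$ apart around the cycle by padding. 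Each clause (hyperedge $e$) becomes a single vertex with list $\{2\}$, to which the incident wires are attached; since $\deg$ must stay $\le 3$, a clause on $k$ literals is implemented as a path/tree of degree-$3$ ``collector'' vertices with list $\{2\}$, and again the collector vertices are forced apart by subdividing. The happiness condition at a clause vertex with list $\{2\}$ says exactly that among its $P_3$-neighbors it must see both $1$ and $3$, i.e.\ the NAE constraint; happiness at $\{1,3\}$-vertices is automatic once we ensure each has a $\{2\}$-neighbor mapped appropriately, which the alternating structure guarantees.

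Then I would verify the three asserted properties and the reduction's correctness. Property (3) is built in by design (all lists are $\{1,3\}$ or $\{2\}$, and the graph is bipartite with the two list-types on opposite sides, so every edge is mixed). Properties (1) and (2) follow because every cycle in $G$ arises either as a variable cycle (length $\ge p$ by construction) or by going out along a wire of length $\ge p$ and back, hence has length $\ge 2p\ge p$; and any two degree-$3$ vertices are separated either along a wire/collector path (length $\ge p$) or around a cycle (padded to $\ge p$). The forward direction of correctness: a satisfying NAE-assignment yields colors for variable cycles, propagated along wires, and then each clause collector sees both $1$ and $3$. The backward direction: the alternating-list forcing along cycles and wires shows any $h:(G,L)\tos P_3$ induces a well-defined $\{1,3\}$-valued (hence $\{\texttt{true},\texttt{false}\}$-valued) assignment, and local surjectivity at clause vertices gives the NAE condition; one must also check local surjectivity at the ``turn'' vertices where wires meet cycles, which is where the degree-$3$ padding is used. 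Because all subdivisions have length $\Theta(p)$ with $p$ a constant, the instance size is linear in the original, so the $2^{o(n)}$ lower bound transfers.

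\textbf{Main obstacle.} The delicate point is simultaneously (a) keeping $\Delta(G)\le 3$ — which forbids the convenient ``one clause vertex adjacent to all its wires'' picture and forces clause-collector trees whose \emph{internal} vertices are degree $3$ — and (b) keeping \emph{all} degree-$3$ vertices pairwise at distance $\ge p$, since collector trees and cycle-to-wire junctions naturally crowd degree-$3$ vertices together. Resolving this requires carefully interleaving the collector-path subdivisions with the wire subdivisions and checking that local surjectivity still holds at every junction vertex after padding (a junction vertex of degree $3$ with list $\{2\}$ must see a $1$ and a $3$; with list $\{1,3\}$ must see a $2$ on each incident side). Getting these happiness conditions to hold for \emph{every} auxiliary vertex, not just the ``semantic'' ones, while preserving girth and the distance condition, is the technical heart of the argument; everything else is bookkeeping on the standard $P_3 \leftrightarrow$ NAE correspondence.
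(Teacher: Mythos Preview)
Your overall plan --- reduce from \nae, make variable gadgets long cycles, make clause gadgets subdivided claws with a single degree-$3$ center, and stretch all connecting paths to length $\ge p$ --- is exactly the paper's approach. But one claim in your proposal is wrong and, left uncorrected, would break the reduction: you write that along an alternating $\{1,3\},\{2\},\{1,3\},\dots$ path the value is ``propagated faithfully'' because ``the unique middle vertices of each internal $P_3$ force consistency''. In fact a $\{2\}$-vertex is happy only if its two $\{1,3\}$-neighbours receive \emph{different} colours, so along such a path the $\{1,3\}$-values \emph{alternate}. Whether two $\{1,3\}$-endpoints agree therefore depends on the path length modulo $4$: they agree iff the length is $\equiv 0\pmod 4$ and flip iff it is $\equiv 2\pmod 4$. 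Your claim that ``all $\{1,3\}$-vertices on a variable cycle get a constant value'' is likewise false; only the designated head vertices do, and only because the paper spaces them at distance exactly $4p$.

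This mod-$4$ arithmetic is the actual technical heart of the construction, not the collector-tree issue you flag. The paper places consecutive heads on a variable cycle at distance $4p$ (so all heads agree), gives positive arms length $4p$ (preserving the value) and negative arms length $4p+2$ (flipping it); this is how negation is handled, a point your proposal leaves open. Your ``main obstacle'' is a red herring: since you are reducing from \nae\ $3$-SAT, every clause has exactly three literals, so the clause gadget is just a subdivided claw $S_{a_1,a_2,a_3}$ with one degree-$3$ center --- no collector trees are needed, and no crowding of degree-$3$ vertices arises. Once you fix the propagation analysis to track parity mod $4$, everything else in your outline goes through essentially as in the paper.
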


\begin{proof}
We show a reduction from the \nae problem, where we ask for a truth assignment, where each clause contains at least one true and at least one false literal. 
Let $\Phi$ be an instance of \nae with variables $x_1, \dots,x_n$ and clauses $c_1,\dots,c_m$. 
The ETH implies that there is no algorithm solving every such instance in time $2^{o(n+m)}$~\cite{Lovasz73}.

For every variable $x \in \{x_1,\dots,x_n\}$ we introduce a variable gadget $\vrb(x)$.
It is a cycle of length $r \cdot (d + 1)$, where $d + 1 = 4p$ and $r$ is the number of all occurrences of the variable $x$ in the formula $\Phi$.  
In every variable gadget we mark out $r$ vertices $u^1,\dots,u^{r}$, called \emph{heads}, such that between $u^i$ and $u^{i+1}$, where $i = 1,\dots,r-1$, there are $d$ another vertices $u^i_1, u^i_2,....,u^i_{d}$. 
Analogically, between $u^{r}$ and $u^1$ there are $d$ vertices $u^{r}_1, \dots,u^{r}_{d}$. 
We will refer to vertices between the heads as \textit{sections} of the variable gadgets.
An example of a variable gadget is shown on \cref{fig:cls_gadget}.

For every clause $c \in \{c_1,\dots,c_m\}$ we create a subdivided claw $S_{a_1,a_2,a_3}$ as a clause gadget $\cls(c)$.
Every arm of the subdivided claw corresponds to one literal of $\ell_1,\ell_2,\ell_3$ of the clause $c$ and its length depends on whether the literal is positive or negative.
If the literal $\ell_i$ is positive, then the length $a_i$ of the corresponding arm equals to $4p$, and we will refer to such arms as \textit{positive arms}.
Otherwise, if $\ell_i$ is negative, the length of the arm $a_i$ equals to $4p + 2$ and, analogously, we will refer to them as \textit{negative arms}.

Using these two types of gadgets, we construct the final graph. 
Let $\ell_{x}$ denote a literal equal to a variable $x$ or to its negation and consider a clause $c = (\ell_{x_i}, \ell_{x_j}, \ell_{x_k})$.
If a variable $x$ of $c$ occurs as a positive literal, then we connect one of the head vertices $\{u^1,\dots,u^r\}$ of the gadget $\vrb(x)$ with the endpoint of one positive arm of $\cls(c)$. 
Otherwise, we use a negative arm. 
We connect every head of each variable gadget with exactly one endpoint of an arm of each clause gadget. 
An example of a clause gadget connected to the three gadgets of the corresponding variables is shown on \cref{fig:cls_gadget}.
This completes the construction of the graph $G$.

Observe that $G$ is bipartite. Indeed, each variable gadget is an even cycle, and all heads are in one bipartition class.
Furthermore, each arm of a clause gadget has an even length, which means that the center of each clause gadget is in the different bipartition class than the heads of variable gadgets.

\begin{figure}[h]
\centering
\includegraphics{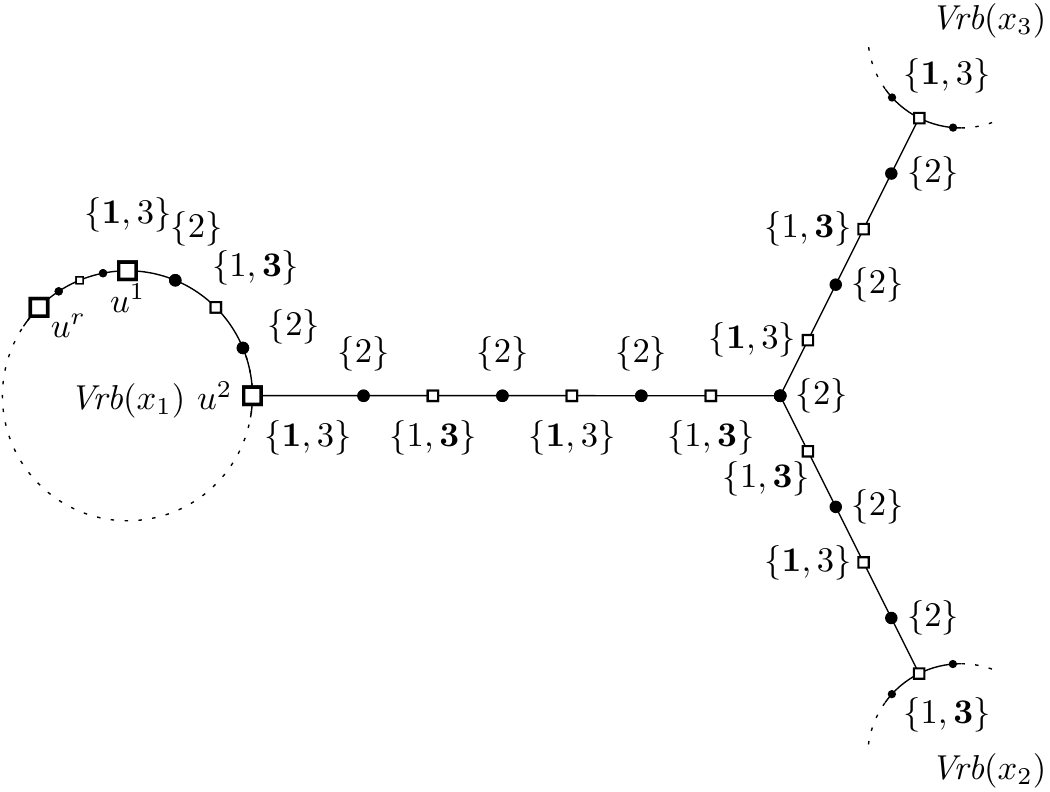} 
\caption{An example of variable gadgets connected to a clause gadget $\cls(c)$ for $p = 1$ and $c = \neg x_1 \vee x_2 \vee x_3$ (thus, $\cls(c)$ has two positive and one negative arm). 
Since the graph $G'$ is bipartite, vertices of one bipartition class (with the list $\{1,3\}$) are depicted as boxes and vertices from the other one as disks.
The bold numbers in the lists $\{1,3\}$ represent images of the vertices under a homomorphism $h: (G,L) \tos P_3$ created from a truth assignment of $\Phi$ such that $x_1$ and $x_3$ are true and $x_2$ is false.} 
\label{fig:cls_gadget}
\end{figure}

We argue the graph $G$ has the sought properties.
It is clear that $\Delta(G) = 3$ and only vertices of degree 3 are the centers of the clause gadgets and the heads of the variable gadgets.
The distance between two consecutive heads of a variable gadget is $4p$, and the distance between a head and a center of a clause gadget connected by an arm is at least $4p$ as well.
Thus, each vertex of degree 3 is separated by at least $4p$ vertices of degree 2.
The last property to check is that $G$ does not have a short cycle.
Clearly, cycles $\vrb(x)$ are longer than $4p$.
Every other cycle in $G$ has to include an arm of some clause gadget.
Observe that since $p$ is a constant, then the number of vertices of the reduction graph is linear in the number of clauses and variables of the formula $\Phi$.

Now, we set the lists $L$. We recall that $1,2,3$ are the vertices of $P_3$, where $2$ is the middle one.
As observed above, $G$ is bipartite, and all heads of variable gadgets are in one bipartition class $X$, while the centers of clause gadgets are in the other bipartition class $Y$. We assign the list $\{1,3\}$ to all vertices from $X$ and the list $\{2\}$ to all vertices of $Y$.

Finally, we show there is a homomorphism $h:(G,L) \tos P_3$ if and only if $\Phi$ is \yes-instance of \nae.
Let $\phi$ is a satisfying truth assignment of $\Phi$, and we construct the sought homomorphism $h: (G,L) \tos P_3$.
All vertices with the list $\{2\}$ are mapped to the vertex $2$.
Since all vertices with the list $\{1,3\}$ are connected to a vertex with the list $\{2\}$, the vertices with the list $\{1,3\}$ are trivially happy.
We say that a homomorphism $h$ assigns values to some sequence of a vertices $(w_1,\dots,w_s)$ according to a \textit{pattern} $(p_1,\dots,p_t)$, where $t \leq s$, if $h(w_i) = p_{(i  \mod t)+1}$ for $i = 1,\dots,s$, see \cref{fig:pattern} for an example.

\begin{figure}[h]
\centering
\includegraphics{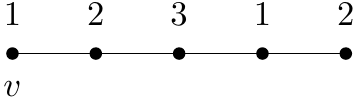}
\caption{A path with values of vertices assigned according to the pattern $(1,2,3)$ starting from the vertex $v$.} 
\label{fig:pattern}
\end{figure}

Let $x$ be a variable of $\Phi$.
If $\phi(x)$ is true, then we set $h(u)=1$ for every head $u \in \{u^1,u^2,...,u^{r}\}$ in the variable gadget $\vrb(x)$. 
For each section $u^i_1,\dots,u^i_{d}$ in the gadget $\vrb(x)$, the homomorphism $h$ assigns the values according to the pattern $(2,3,2,1)$.
Since there are $4p-1$ section vertices between two consecutive heads in $\vrb(x)$, all vertices with the list $\{2\}$ are happy. 
Note that the last but one vertex $u^i_{d-1}$ of the section $u^i_1,\dots,u^i_{d}$ is assigned the value $3$ and the head $u^{i+1}$ has the value $1$, thus the last vertex $u^i_d$ of this section is indeed happy.
Let $u$ be a head vertex of the gadget $\vrb(x)$ and $v_1,\dots,v_z$ be vertices of the arm connected to $u$, where $v_1$ is connected to $u$ and $v_z$ is connected to the center of the corresponding clause gadget.
We assign the value of $h$ to $v_1,\dots,v_z$ again according to the pattern $(2,3,2,1)$.

On the other hand, if $\phi(x)$ is false, the assigning values are analogous, but we switch $1$ and $3$.
I.e, we set $h(u) = 3$ for all heads $u$ in the gadget $\vrb(x)$ and we assign the values to the section vertices and the vertices of arms of clause gadgets according to the pattern $(2,1,2,3)$.
See \cref{fig:cls_gadget} for an illustration of the construction of $h$.

We set the value of $h$ to all vertices of $G$.
It is clear that $h$ is a homomorphism, and it remains to argue that the centers of all clause gadgets are happy, as we argue that for all other vertices.
Let $w$ be a center of a clause gadget $\cls(c)$.
Since $\Phi$ is satisfiable instance of \nae with the satisfying assignment $\phi$, the clause $c$ contains a literal $\ell_1$ with a value true (under $\phi$) and a literal $\ell_2$ with a value false.

Let $\ell_1$ be a positive occurrence of a variable $x$.
Thus, $\phi(x)$ is true.
Then, for all heads $u$ of the gadget $\vrb(x)$ holds that $h(u) = 1$. 
Let $v_1,\dots,v_z$ be vertices of the arm connected a head $u$ of $\vrb(x)$ and the center $w$ of $\cls(c)$ ($v_q$ is connected to $w$).
By construction of $h$, the values of $v_1,\dots,v_z$ are assigned according to the pattern $(2,3,2,1)$.
Since $\ell_1$ is a positive literal, then $z = 4p$ and $h(v_z) = 1$.

On the other hand, if $\ell_1$ is a negative occurrence of $x$, then $\phi(x)$ is false and for all heads $u$ of $\vrb(x)$ holds that $h(u) = 3$.
The arm $v_1,\dots,v_z$ is negative and $z = 4p + 2$.
The pattern used for $v_1,\dots,v_z$ is now $(2,1,2,3)$.
Thus, $h(v_z) = 1$ again.

The case for $\ell_2$ is analogous; however we switch $1$ and $3$.
Thus, the vertex $v'_z$ of the corresponding arm connected to $w$ has value $h(v'_z) = 3$, independently on whether $\ell_2$ is a positive or negative literal.
Therefore, we can conclude that the vertex $w$ is happy as it is connected to vertices $v_z$ and $v'_z$ such that $h(v_z) = 1$ and $h(v'_z) = 3$.

Now, suppose there is a homomorphism $h: (G,L) \tos P_3$.
Let $x$ be a variable of $\Phi$ and $u^1,\dots,u^r$ be heads of the variable gadget $\vrb(x)$.
Since there are $4p-1$ section vertices between $u^i$ and $u^{i+1}$ (and $u^r$ and $u^1$) and $h$ is a locally surjective, then $h$ has to assign the same value to all heads $u^1,\dots,u^r$.
If $h$ assigns 1 to the heads of $\vrb(x)$, then we set $\phi(x) = 1$. 
Otherwise, $\phi(x) = 0$. 
Let us show that $\phi$ satisfies $\Phi$. 
Suppose that there is a clause $c$ which is not satisfied by $\phi$. 
It means that all literals occurring in $c$ are all true or all are false.
Suppose that all literals in $c$ are true (as the other case is analogous) and let $\ell$ be a literal of $c$.

Let $\ell$ be a positive occurrence of a variable $x$ in $c$.
Let $u$ be a head of $\vrb(x)$ connected by an arm $v_1,\dots,v_z$ to the center $w$ of the clause gadget $\cls(c)$.
Since $\ell$ is a positive literal and it is true under $\phi$, it holds that $h(u) = 1$ and $z = 4p$.
Thus, it must hold that $h(v_z) = 1$ as $h$ is locally surjective.

On the other hand, if $\ell$ is a negative occurrence of $x$, then $h(u) = 3$ and $z = 4p + 2$.
However, it holds $h(v_z) = 1$ again in this case as well.
Since we suppose that all literals of $c$ are true (under $\phi$), the vertex $w$ is connected only to vertices $v$ such that $h(v) = 1$.
Thus, the homomorphism $h$ would not be locally surjective.

The case when all literals of $c$ are false is analogous.
We only switch the role of $1$ and $3$, i.e., the center vertex $w$ of $\cls(c)$ would be connected only to vertices $v$ such that $h(v) = 3$.
Thus, we conclude that there is a homomorphism $h: (G,L) \tos P_3$ if and only if the formula $\Phi$ is a satisfiable instance of \nae, which completes the proof of \cref{thm:OtherHardnessP3}.
\end{proof}

We observed that if $F \notin \cS$, then every graph in $\cG_{3,|V(F)|+1}$ is $F$-free.
Thus \cref{thm:OtherHardnessP3} immediately yields the following corollary, complementing \cref{thm:algo-p3-forests}.

\begin{corollary}
For every $F \notin \cS$, the \llshom{$P_3$} problem cannot be solved in subexponential time in $F$-free graphs unless the ETH fails.
\end{corollary}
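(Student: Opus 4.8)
The plan is to obtain this corollary directly from \cref{thm:OtherHardnessP3}: the instances produced there are so structurally restricted that, for the right choice of the parameter $p$, they are automatically $F$-free, so the $2^{o(n)}$ lower bound for $\llshom{P_3}$ on those instances transfers to $F$-free graphs without any extra work. This mirrors the way \cref{thm:algo-p3-forests} is complemented on the negative side.

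Concretely, given $F \notin \cS$, I would set $p := |V(F)| + 1$ and apply \cref{thm:OtherHardnessP3} with this $p$, obtaining a family of hard instances $(G,L)$ in which $G$ has maximum degree at most $3$, girth at least $p$, and — crucially — any two vertices of $G$ of degree exactly $3$ are at distance at least $p$. The only thing to verify is that such a $G$ contains no induced copy of $F$, and I would split this into cases. If $F$ is not a forest, it contains a cycle of length at most $|V(F)| = p-1$, which cannot occur as a subgraph of a graph of girth at least $p$, so $G$ is $F$-free. If $F$ is a forest, then since $F \notin \cS$ some component $T$ of $F$ is neither a path nor a subdivided claw, hence $T$ has at least four leaves, and therefore $T$ has either a vertex of degree at least $4$ or two vertices $u, w$ of degree $3$. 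A vertex of degree at least $4$ cannot be embedded into a graph of maximum degree $3$. In the remaining subcase, $\operatorname{dist}_T(u,w) \le |V(T)| - 1 \le p - 2 < p$, so an induced copy $G[S] \cong F$ would contain two vertices that have degree exactly $3$ in $G$ (the bound $\Delta(G) \le 3$ forces equality, since their degree in $G[S]$ is already $3$) and that lie at distance strictly less than $p$ in $G$, contradicting the distance guarantee of \cref{thm:OtherHardnessP3}. So in every case $G$ is $F$-free.

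Putting it together, \cref{thm:OtherHardnessP3} rules out a $2^{o(n)}$-time algorithm for $\llshom{P_3}$ on these $n$-vertex instances under the ETH, and all of them are $F$-free, so no subexponential-time algorithm exists for $\llshom{P_3}$ on $F$-free graphs. I do not expect a genuine obstacle here, since the entire content lives in \cref{thm:OtherHardnessP3}; the one point that needs care is precisely the last subcase of the forest analysis. It shows that bounded maximum degree together with large girth is \emph{not} by itself enough to forbid $F$ (a bounded-degree tree with two branch vertices can still appear induced in a high-girth cubic graph), which is exactly why \cref{thm:OtherHardnessP3} is stated with the additional ``distance between degree-$3$ vertices'' condition rather than only with bounds on maximum degree and girth.
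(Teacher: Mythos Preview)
Your proposal is correct and follows the same route as the paper: the corollary is deduced directly from \cref{thm:OtherHardnessP3} by choosing $p = |V(F)|+1$ and checking that the hard instances are $F$-free. Your case analysis is in fact more careful than the paper's one-line justification; the paper simply asserts that graphs in $\cG_{3,|V(F)|+1}$ are $F$-free whenever $F\notin\cS$, which is literally false for, say, the double claw (two adjacent degree-$3$ vertices with two leaves each) inside a cubic graph of large girth. You correctly isolate the subcase of a bounded-degree forest with two branch vertices and explain that the extra ``distance between degree-$3$ vertices'' guarantee of \cref{thm:OtherHardnessP3} is exactly what rules it out --- this is the right reading of why that condition is stated in the theorem.
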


\subsection{General $H$}
In this section, we will finish the proof of \cref{thm:OtherHardness-real}. 
Recall that we may assume that $H$ contains $P_3$ as a subgraph, as otherwise, by \cref{cor:bipartitetogeneral},  we can consider $H^*$ instead.
We select a fixed $P_3$ in $H$, and we start with the reduction from the previous section.
However, the homomorphisms used in the previous proof are not locally surjective anymore, as the vertices $1,2$, and $3$ of the selected $P_3$ may have other neighbors in $H$ apart from their neighbors in the $P_3$.
We will fix this by new gadgets, which we will connect to the vertices of the original graphs to make them happy again.

Recall that there are only two types of lists ($\{1,3\}$ and $\{2\}$) used in the reduction.
We say the vertex $2$ of $H$ is \emph{trivial} if the vertex $2$ does not have any other neighbor in $H$ apart from $1$ and $3$.
Analogously, the vertex $w \in \{1,3\}$ is \emph{trivial} if $w$ are connected only to the vertex $2$ in $H$.
We would like to point out that if $H = K_3$, then the vertex $2$ is trivial, and $1$ and $3$ are not trivial as they are connected to each other.

Note that if $2$ is trivial, then any locally surjective homomorphism $h: (G,L) \tos P_3$ makes all vertices with the list $\{2\}$ happy even if we consider $h$ as homomorphism (not necessarily locally surjective) $(G,L) \to H$.
Analogously, it holds for vertices with the list $\{1,3\}$ if both vertices $1$ and $3$ are trivial.

We will construct two types of gadgets, each for one type of the lists.
Each such gadget will have a special vertex $v$, which will be identified with an original vertex $u$ (thus, $v$ and $u$ will have the same lists).
We will construct a graph $G'$ by appending such gadgets to each vertex of the original graph $G$ from the previous reduction.
To finish the proof, we will prove there is a locally surjective homomorphism $h: (G,L) \tos P_3$ if and only if there is a locally surjective homomorphism $h': (G',L') \tos H$.
Now, we present the gadgets more formally.

\begin{lemma}
 \label{lem:Gadget2}
 Let $H$ be a connected graph containing $P_3$ (with the vertices $1,2,3$) as a subgraph.
 Let $S$ be a list $\{1,3\}$ or $\{2\}$.
 Suppose at least one vertex in $S$ is not trivial.
 Then, for any $p \geq |V(H)|^2$, there is a graph $H_S$ with lists $L_S$ of the following properties:
 \begin{enumerate}
  \item $H_S \in \cG_{q,p}$ where $q = f(\Delta(H), p)$ for a suitable function $f$. 
  \item There is a vertex $v \in V(H_S)$, called \emph{root}, such that $L_S(v) = S$.
  \item For each $i \in S$, there is a list homomorphism $h_i: (H_S, L_S) \to H$ such that $h_i(v) = i$.
  \item If $S = \{1,3\}$, then the homomorphisms $h_1$ and $h_3$ make  all vertices of $H_S$ happy (i.e., the homomorphism $H_S$ is locally surjective).
  \item If $S = \{2\}$, then:
  \begin{enumerate}
   \item Any neighbor $u$ of $v$ does not contain $1$ and $3$ in its lists. 
   \item The homomorphism $h_2$ makes  all vertices of $H_S$, except the root $v$, happy. 
  Moreover, $h_2(N_{H_S}(v)) = N_H(2) \setminus \{1,3\}$.
  \end{enumerate}
 \end{enumerate}
\end{lemma}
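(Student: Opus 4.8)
The plan is to build the gadget $H_S$ by ``inflating'' a path-like structure so that the unique vertex $2$ (or the pair $\{1,3\}$) is forced to be surjective at every non-root vertex, while controlling the girth and maximum degree. First I would treat the two cases separately. Suppose $S=\{2\}$ and the vertex $2$ is not trivial (the case where $1$ or $3$ is non-trivial is symmetric, up to swapping roles). The idea is to take a long even cycle $C$ through the root $v$, all of whose vertices alternate between lists $\{2\}$ and $\{1,3\}$, so that mapping it by the pattern $(2,3,2,1)$ (as in the proof of \cref{thm:OtherHardnessP3}) makes every $\{1,3\}$-vertex happy with respect to the $P_3$-part; and then, at each internal vertex $u$ of the cycle with list $\{2\}$, attach a small pendant tree that forces the remaining neighbors of $2$ in $H$ (namely $N_H(2)\setminus\{1,3\}$) to be realized in $N_{H_S}(u)$. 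Concretely, for each $w\in N_H(2)\setminus\{1,3\}$ we hang off $u$ a path of length at least $p$ whose endpoint has list $\{w\}$ and whose internal vertices have singleton lists walking along a fixed walk from $2$ to $w$ in $H$ (such a walk of bounded length exists since $H$ is connected and we can pad it to any desired parity/length); this also forces happiness of those path vertices when $h_2$ is applied. The crucial point is that at the root $v$ we deliberately do \emph{not} attach such pendants for $1$ and $3$: the only neighbors of $v$ inside $H_S$ are the two cycle-neighbors (whose lists are $\{1,3\}$ — but we intersect to remove these, so actually we attach the pendant gadgets for $N_H(2)\setminus\{1,3\}$ to $v$ as well, and leave the cycle edges at $v$ as the only $\{1,3\}$-connection), ensuring property 5(a) and the precise equality $h_2(N_{H_S}(v))=N_H(2)\setminus\{1,3\}$.

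For $S=\{1,3\}$ the construction is similar but now we need \emph{both} $h_1$ and $h_3$ to be fully locally surjective, including at the root. Here I would again take a long even cycle through $v$ alternating lists $\{1,3\}$ and $\{2\}$, colored by $(1,2,3,2)$ or its reverse, and at each $\{1,3\}$-vertex $u$ (now including $v$) attach, for every $w\in(N_H(1)\cup N_H(3))\setminus\{2\}$, a pendant path of length $\geq p$ realizing a walk from the appropriate endpoint to $w$. Because at least one of $1,3$ is non-trivial, there is at least one such $w$, so the gadget is nonempty and genuinely does work; when it is trivial for one of them, no pendant is needed for that color and the cycle edges already suffice. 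One must check that a single pendant attached at $u$ can simultaneously serve both $h_1$ and $h_3$: this is arranged by choosing for each target $w$ two walks, one starting at $1$ and one starting at $3$, of the \emph{same length} (pad the shorter one by back-and-forth steps), and building one pendant path per $w$ that is mapped by the first walk under $h_1$ and by the second under $h_3$; the lists on that path are the union of the two walks' vertices at each position, all of size at most $2$ — but we actually only need them to be singletons if the two walks agree, and otherwise size $\leq 2$, which is still fine since the lemma does not restrict pendant list sizes. The degree and girth bounds are then immediate: every vertex has degree at most $2+|N_H(2)\setminus\{1,3\}|\cdot 1 \leq \Delta(H)+2$ or similar (so $q=f(\Delta(H),p)$ with $f$ essentially $\Delta(H)+O(1)$ or, if we want the pendants vertex-disjoint and of length exactly determined by $p$, still a bounded function), and the girth is at least $p$ because the only cycle in $H_S$ is the long even cycle, which we chose of length $\geq p$; all pendants are trees.

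The main obstacle I anticipate is verifying property 3 in full — namely that for \emph{each} $i\in S$ there genuinely is a list homomorphism $(H_S,L_S)\to H$ with $h_i(v)=i$, simultaneously respecting all the pendant lists. This requires that the pendant walks can be chosen with prescribed endpoints, prescribed length parity (to match the bipartite-like structure of $H^{*}$-style arguments), and prescribed starting vertex, all of bounded length; this follows from connectivity of $H$ together with the standard fact that in a connected graph containing an edge one can realize walks of every sufficiently large length between any two vertices of the correct parity class (and if $H$ has an odd closed walk, of every sufficiently large length, period). A second, more bookkeeping-heavy obstacle is ensuring the happiness conditions 4 and 5(b) hold at the \emph{junction} vertices — i.e., a $\{2\}$-vertex $u$ on the cycle must see, under $h_2$, both $1$ and $3$ (from its two cycle neighbors, which is where the $(2,3,2,1)$ pattern matters) \emph{and} each $w\in N_H(2)\setminus\{1,3\}$ (from its pendants); one must check the pattern lengths on the cycle segments are compatible with $p$ and with placing the heads, exactly as in \cref{thm:OtherHardnessP3}. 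I would dispatch these by reusing the pattern-based coloring argument from that theorem essentially verbatim and then checking the finitely many local configurations around a cycle vertex and around the root.
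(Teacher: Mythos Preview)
Your proposal has a genuine structural gap that prevents properties 4 and 5(b) from holding. You build $H_S$ from a long cycle with pendant \emph{paths}, but a vertex of degree $2$ in $H_S$ that is mapped to some $a\in V(H)$ can only be happy if $|N_H(a)|\leq 2$. As soon as any vertex appearing on one of your walks has degree $\geq 3$ in $H$ (which is unavoidable, since the lemma is stated for arbitrary connected $H$ containing a $P_3$), the corresponding pendant or cycle vertex fails to be happy. The happiness requirement is not only at the junction vertices you flagged as the ``bookkeeping-heavy obstacle'' --- it must hold at \emph{every} internal vertex of every pendant path and at every $\{1,3\}$-vertex of the cycle, and your construction simply does not provide enough neighbors for these vertices. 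There is a second, separate issue in the $S=\{2\}$ case: your cycle through the root $v$ forces $v$ to have two cycle-neighbors with list $\{1,3\}$, which directly contradicts property 5(a). You notice this and try to patch it, but the patch is incoherent --- you cannot both keep the cycle through $v$ and exclude $1,3$ from the lists of $v$'s neighbors.

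The paper's construction sidesteps both problems by abandoning paths in favor of a BFS tree. For $S=\{2\}$, it grows a tree rooted at $v$ whose first level consists of copies of $N_H(2)\setminus\{1,3\}$ with singleton lists (so 5(a) is automatic), and thereafter every vertex that is a copy of $u\in V(H)$ gets copies of \emph{all} of $N_H(u)$ as children --- so every internal tree vertex is happy by construction. The tree is grown to depth $2p$, and then each leaf is made happy by adding edges back to suitable copies in the first $p$ levels (which exist since $p\geq|V(H)|$); this keeps the girth at least $p$. For $S=\{1,3\}$ the same BFS idea is run on the product $H\times H$ starting from the vertex $(1,3)$, with lists being the coordinate pairs; the two coordinate projections then give $h_1$ and $h_3$ simultaneously, and this is precisely where the hypothesis $p\geq |V(H)|^2$ is used. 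Your ``two walks of the same length'' idea is a step in this direction, but BFS on the product is what makes happiness work at every vertex rather than just at the root.
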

The last sought property says the following.
Suppose we identify the root $v$ of gadget $H_2$ to a vertex $u \in V(G)$ with a list $\{2\}$ from the original reduction graph $G$ and denote the new graph $G'$.
Let $h: (G,L) \to P_3$ be a list homomorphism that makes the vertex $u$ happy.
Now, if we combine $h$ with $h_2$ to a homomorphism $h': (G',L') \to H$, then $h'$ makes the vertex $v = u$ happy as well.
On the other hand, if $h$ does not make the vertex $u$ happy, then there is no list homomorphism $h'_2: (H_2,L_2) \to H$ that combined with $h$ would make the vertex $u$ happy as any neighbor of $v = u$ in $H_2$ does not contain $1$ and $3$ in its list.

We will prove \cref{lem:Gadget2} by analyzing two cases according to two types of lists.
We start with the case $S = \{2\}$, as its proof is slightly easier.
Further, we use a similar idea to prove the lemma for $S = \{1,3\}$.

\begin{proof}[Proof of \cref{lem:Gadget2} for $S = \{2\}$]
 First, we start to build the graph $H_2 := H_{\{2\}}$ as a tree using BFS-like procedure on the vertices of $H$.
 We start with its root $v$ and we set the list $L_2(v) = \{2\}$.
 Let $N'(2) = N_{H}(2) \setminus \{1,3\}$.
 Note that $N'(2)$ is not empty as we suppose the vertex $2$ is not trivial.
 We add to $H_2$ copies of vertices of $N'(2)$ as children of $v$.
 Thus, each added vertex $u'$ to $H_2$ is a copy of a vertex $u$ of $H$ and we can set each list $L_2(u') = \{u\}$.
 In this way, we set lists of all vertices which we will add.
 
 Now, we will continue recursively in the building of $H_2$.
 Let $u'$ be a leaf of $H_2$, and it is a copy of a vertex $u \in V(H)$.
 We add copies of vertices $N_H(u)$ to $H_2$ as children of $u'$.
 We do not mark vertices whose copies are already in $H_2$, and a vertex of $H$ will have several copies of itself in $H_2$.
 Moreover, if we add copies of neighbors of the vertex $2$ again, we also include the vertices $1$ and $3$, not like in the first level of the tree where we exclude them.
 
 We end our building of $H_2$ when all the leaves of $H_2$ are at depth $2p$.
 Note that all lists have size 1, and the only possible homomorphism $h_2: (H_2,L_2) \to H$ makes all vertices of $H_2$ happy, except the leaves and the root.
 Moreover, for the root it holds that $h_2(N_{H_2}(v)) = N_H(2) \setminus \{1,3\}$.
 Thus, we add edges to the leaves and certain vertices in $H_2$ to make them happy as well.
 
 Let $u'$ be a leaf of $H_2$ and copy of $u \in V(H)$.
 Since we assume that $p \geq |V(H)|$, each vertex of $H$ has a copy in the first $p$ levels of $H_2$.
 Thus, we pick such a copy $w'$ (for $w'$ not being the root $v$) for each vertex $w \in N_H(u)$ and add an edge $u'w'$ to $E(H_2)$.
 We repeat this procedure for each leaf $u$ of $H_2$.
 This finishes the construction of $H_2$.
 Each vertex except the root is happy now in $h_2$.
 Since we did not add any new edge incident to the root $v$, it still holds that $h_2(N_{H_2}(v)) = N_H(2) \setminus \{1,3\}$.
 See \cref{fig:funny_gadget}, for an example of the construction of $H_2$.
 
 \begin{figure}[h]
  \centering
  \includegraphics{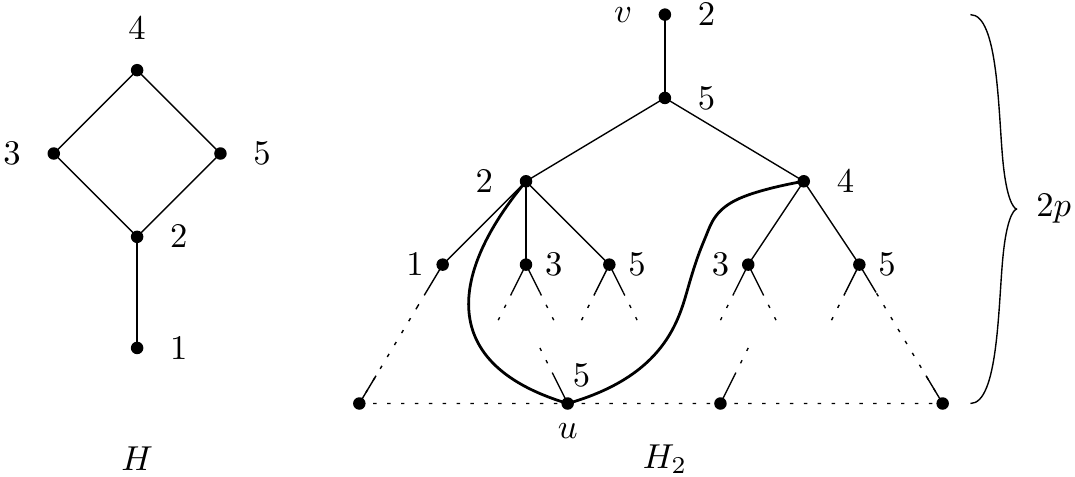}
  \caption{Example of the construction of the gadget $H_2$ with few top levels, some leaves, and the root $v$. The thicker edges from the leaf $u$ depicted the edges added after the BFS-like procedure to make the leaf $u$ happy in the homomorphism $h_2$.}
  \label{fig:funny_gadget}
 \end{figure}

 We connect the leaves only to the vertices in the first $p$ levels of $H_2$.
 Since the distance between any vertex in the first $p$ levels and any leaf (in the level $2p$) is at least $p$, we did not make a cycle shorter than $p$.
 Moreover, the maximum degree of $H_2$ depends only on $p$ and $\Delta(H)$ (it is roughly bounded by $\Delta(H)^{2p}$).
 Thus, $H_2 \in \cG_{q,p}$ for a suitable $q$. 
\end{proof}

We are aware that for some $H$, there is a simpler construction of $H_2$ with fewer vertices and a smaller maximum degree.
However, we described a general construction that is as simple as possible for the sake of the readability of the proof.
Now, we will proceed with the proof for the case $S = \{1,3\}$.
We will construct $H_{1,3}$ again by BFS-like procedure.
However, we will run the procedure on $H \times H$ instead of $H$.

\begin{proof}[Proof of \cref{lem:Gadget2} for $S = \{1,3\}$]
 By assumption of \cref{lem:Gadget2}, we know that at least one of vertices $1$ and $3$ has a neighbor apart from $2$.
 Let $K = H \times H$, i.e.,
 \begin{align*}
  V(K) &= V(H) \times V(H), \\
  E(K) &= \bigl\{\{(u_1,v_1),(u_2,v_2)\} \mid u_1u_2,v_1v_2 \in E(H) \bigr\}.
 \end{align*}
 We run again BFS-like procedure to build $H_{1,3}:=H_{\{1,3\}}$, now on the graph $K$.
 We start from the vertex $(1,3) \in V(K)$ and make its a copy as the root $v$ of $H_{1,3}$.
 The vertices of $V(K)$ correspond to list of vertices of $H$ of length at most 2, i.e., a vertex $u'$ that is a copy of $(u_1,u_2) \in V(K)$ will have the list $(u_1,u_2)$.
 For the purpose of this construction, we consider these lists as ordered pairs of vertices of $H$.
 Thus, the root $v$ has the list $(1,3)$.
 We run the BFS-like procedure. Moreover, we do not need any exception for the root as we needed in the previous case.
 To a leaf $u'$ of  $H_{1,3}$ that is a copy of $u = (u_1,u_2) \in V(K)$ we add copies of vertices $N_K(u) = N_H(u_1) \times N_H(u_2)$.
 We stop the process again when all leaves of $H_{1,3}$ are in depth $2p$.
 We define the homomorphisms $h_1, h_3: (H_{1,3},L_{1,3}) \to H$ naturally as follows.
 Let $u' \in V(H_{1,3})$ with a list $(u_1,u_3)$.
 Then, $h_1(u') = u_1$ and $h_3(u') = u_3$.
 
 First, we argue that $h_1$ and $h_3$ are homomorphisms, and they make all vertices happy except the leaves.
 Let $u'w' \in E(H_{1,3})$ and $u'$ be a copy of $u = (u_1,u_3) \in K$ and $w'$ be a copy of $w = (w_1,w_3)$.
 Then, $h_i(u') = u_i$ and $h_i(w') = w_i$.
 By construction of $H_{1,3}$, it holds that $uw \in E(K)$.
 It follows that $u_1w_1 \in E(H)$ and $u_3w_3 \in E(H)$.
 Thus, $h_1$ and $h_3$ are indeed homomorphisms.
 Suppose $u'$ is a non-leaf vertex of $H_{1,3}$.
 Thus, the set of lists of all children of $u'$ is the set $N_H(u_1) \times N_H(u_3)$.
 Therefore, $h_i(N_K(u')) = N_H(u_i)$, and the homomorphism $h_i$ makes the vertex $u'$ happy.
 
 To finish the construction of $H_{1,3}$ we add some edges to the leaves, in a similar way as we did in the proof for the previous case, that the homomorphisms $h_1$ and $h_3$ make happy all vertices of $H_{1,3}$.
 Let $C$ be a connected component of $K$ that contains the vertex $(1,3)$.
 Note that the graph $H_{1,3}$ contains only copies of vertices of $C$.
 Since $p \geq |V(H)|^2$, each vertex of $C$ has at least one copy in the first $p$ levels of $H_{1,3}$.
 Let $u'$ be a leaf of $H_{1,3}$ and a copy of $u \in K$.
 For each vertex $w \in N_K(u)$ we take one of its copy $w'$ in the first $p$ levels of $H_{1,3}$ and add an edge $\{u',w'\}$ to $H_{1,3}$.
 This finishes the construction of $H_{1,3}$.
 Thus, the homomorphisms $h_1$ and $h_3$ make happy all vertices of $H_{1,3}$.
 Again, the graph $H_{1,3}$ does not contain a cycle shorter than $p$ and the maximum degree of $H_{1,3}$ can be bounded by a function of $\Delta(H)$ and $p$ (roughly by $\Delta(H)^{4p}$). 
\end{proof}

\noindent Now, we are ready to finish the proof of \cref{thm:OtherHardness-real}.

\begin{proof}[Proof of \cref{thm:OtherHardness}]
 Let $P_3$ be a subgraph of $H$ with the vertices $1,2$ and $3$.
 We start with the reduction from \nae to \llshom{$P_3$} presented in the proof of \cref{thm:OtherHardnessP3}.
 Let $\Phi$ be an instance of \nae and $(G,L)$ be the constructed graph with lists.
 We will append the gadgets given by \cref{lem:Gadget2} to $G$ to get a graph with lists $(G', L')$ in such a way there is a locally surjective homomorphism $h: (G,L) \tos P_3$ if and only if there is a locally surjective homomorphism $h':  (G',L') \tos H$ -- we call the appended gadgets as \emph{funny gadgets} because they make the original vertices of $G$ happy again.
 
 Let $u \in V(G)$.
 Recall that the lists in $L$ are only of type $\{1,3\}$ and $\{2\}$.
 First, suppose that $L(u) = \{2\}$.
 If the vertex $2$ is not trivial, then we add a copy $H^u_2$ of $H_2$ given by \cref{lem:Gadget2} and identify $u$ with the root of the added copy.
 Similarly, if $L(u) = \{1,3\}$ and at least one of the vertices $1$ and $3$ is not trivial, then we add a copy $H^u_{1,3}$ of $H_{1,3}$ and identify its root and $u$. 
 We repeat this for all vertices of $G$, that finishes the construction of $G'$.
 See \cref{fig:funny_gadget_usage}, for an example of appending the funny gadgets to the graph $G$.
 \begin{figure}[h]
  \centering
  \includegraphics{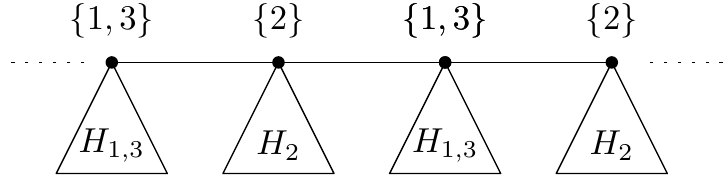}
  \caption{Usage of the funny gadgets $H_2$ and $H_{1,3}$ on a subpath of the graph $G$. To each vertex $v$ of $G$ with a list $S$ we append a copy of the funny gadget $H_S$.}
  \label{fig:funny_gadget_usage}
 \end{figure}

 Note that when we append the funny gadgets, we identify only vertices with the same lists.
 Thus, the new lists for the original vertices of $G$ remain the same, and the lists of vertices the funny gadgets are given by \cref{lem:Gadget2}. 
 By \cref{lem:Gadget2} and \cref{thm:OtherHardnessP3}, the maximum degree of $G'$ is bounded by a function of $\Delta(H)$ and $p$.
 By construction, any funny gadget is connected to $G$ through only one vertex.
 Thus, any cycle of $G'$ cannot simultaneously contain both edges of $G$ and edges of any funny gadget.
 Therefore, we conclude that $G'$ does not contain any cycle shorter than $p$.
 Note that the size of $G'$ is linear in the number of variables and clauses of $\Phi$ as the size of any funny gadget is constant.
 
 It remains to prove that there is a list locally surjective homomorphism ${h: (G,L) \tos P_3}$ if and only if there is a list locally surjective homomorphism ${h': (G',L') \tos H}$.
 It follows that $\Phi$ is satisfiable if and only if there is a list locally surjective homomorphism $h': (G',L') \tos H$, by proof of \cref{thm:OtherHardnessP3}.
 Suppose there is a homomorphism $h: (G,L) \tos P_3$.
 We extend $h$ to $h'$ to be a locally surjective homomorphism $(G',L') \tos H$.
 Let $u \in V(G)$ and $L(u) = \{2\}$.
 Suppose the vertex $2$ is not trivial; otherwise, there is no funny gadget appended to the vertex $u$.
 We define $h'(w) = h_2(w)$ for any vertex $w$ of the funny gadget $H^u_2$ (note that $h(u) = h_2(v) = 2$ for the root $v = u$ of $H^u_2$).
 Similarly if $L(u) = \{1,3\}$, we set $h'(w) = h_i(w)$ for any vertex $w$ of the gadget $H^u_2$, where $i = h(u) \in \{1,3\}$ (again only in the case when at least one of the vertices $1$ and $3$ is not trivial).
 
 It is clear that $h': (G,L) \to H$ is a list homomorphism.
 By \cref{lem:Gadget2}, the homomorphism $h'$ makes happy all non-root vertices of all funny gadgets.
 All vertices of the original graph $G$ with the list $\{1,3\}$ are happy by \cref{lem:Gadget2}, if the vertices $1$ or $3$ are not trivial, or by the construction of $h$ and $h'$, if both $1$ and $3$ are trivial.
 Now, consider a vertex $u \in V(G)$ with the list $\{2\}$.
 The vertex $u$ has two neighbors in $V(G)$ that are mapped to $1$ and $3$ by the construction of $h$ and $h'$.
 If the vertex $2$ is trivial, then $u$ is clearly happy in $h'$.
 Otherwise by \cref{lem:Gadget2}, we have that $h'(N_{G'}(u) \cap V(H^u_2)) = N_H(2) \setminus \{1,3\}$.
 Therefore, $h'(N_{G'}(u)) = N_H(2)$, i.e., $h'$ makes happy the vertex $u$.
 
 Now, suppose there is a list locally surjective homomorphism ${h': (G',L') \tos H}$.
 Recall that for any vertex $u$ of $G$ with the list $\{2\}$ holds that the vertices $1$ and $3$ are not in the lists of neighbors of $u$ in the funny gadget $H^u_2$.
 Thus, to make the 
 vertex $u$ happy the homomorphism $h'$ has to map some of neighbors of $u$ in $G$ to $1$ and some neighbors of $u$ in $G$ to $3$.
 Any vertex of $G$ with the list $\{1,3\}$ has a neighbor in $G$ with the list $\{2\}$.
 Therefore, if we restrict $h'$ to the original graph $G$ we get a locally surjective homomorphism $\tilde{h}: (G,L) \tos P_3$.
  \end{proof}

}

\section{Concluding Remarks}
Let us conclude the paper with discussing some potential ways to strengthen our results.
First, recall that in the hardness part of \cref{thm:pathfree}, the length $t$ of the forbidden induced path depends on $H$.
One might wonder if it is possible to find $t$, such that for every $H \notin \cHpoly \cup \{P_3,C_4\}$, 
the \llshom{$H$} problem is hard in $P_t$-free graphs.

Suppose that such a $t$ exists and consider $H = P_{t}$ with consecutive vertices $1,\ldots,t$.
Without loss of generality, we may assume that $t \geq 4$.
Consider a locally surjective homomorphism $h$ from $G$ to $P_{t}$.
Note that $h$ is in particular surjective, so there exists a vertex $v_1$ mapped to $1$.
By local surjectivity of $h$, there must be a neighbor $v_2$ of $v_1$ mapped to 2, a neighbor $v_3$ of $v_2$ mapped to 3, and so on.
Note that $v_1,v_2,\ldots,v_{t}$ is a path in $G$. Furthermore, this path is induced, as otherwise $h$ is not a homomorphism.
Consequently, every \yes-instance of \lshom{$P_{t}$} (and thus of \llshom{$P_{t}$}) contains an induced $t$-vertex path.
This means that \llshom{$P_{t}$} is polynomial-time solvable (and actually trivial) in $P_t$-free graphs.
On the other hand, $P_t \notin \cHpoly \cup \{P_3,C_4\}$, so by \cref{thm:pathfree}~(2.) there exists some $t'$, for which the problem is hard in $P_{t'}$-free graphs. 

Second, recall that in \cref{thm:OtherHardness} the degree bound on $F$ depends on $H$.
Again, one might wonder if this is necessary. However, every \yes-instance of \lshom{$H$} must contain a vertex of degree $\Delta(H)$,
as some vertex $v$ of $G$ must be mapped to a maximum-degree vertex $a$ of $H$, and all vertices from $N_H(a)$
must appear on the set $N_G(v)$. Consequently, we cannot hope for a universal upper bound on the degree of $G$ in the proof of \cref{thm:OtherHardness}.

The above two examples show that obtaining the full characterization of pairs $(H,F)$, for which \llshom{$H$} admits 
a subexponential-time algorithm in $F$-free graphs, would be a tedious task.
One can probably start with some small graphs $F$. Let us point out that if $F=P_4$, then \llshom{$H$} is polynomial-time solvable for every $H$.
Indeed, $P_4$-free graphs, also known as cographs, have bounded \emph{cliquewidth} and the result follows from the celebrated meta-theorem for bounded-cliquewidth graphs by Courcelle, Makowsky, and Rotics~\cite{CourcelleMR00}.

\sv{\bibliographystyle{abbrv}}
\lv{\bibliographystyle{plainurl}}
\bibliography{main}

\sv{
  \newpage
\appendix
\appendixText
}

\end{document}